\documentclass[12pt]{article}

\usepackage{amsthm}
\usepackage{amsmath}
\usepackage{amssymb}
\usepackage{amsthm}
\usepackage{bm}
\usepackage[round, authoryear]{natbib}
\usepackage[linesnumbered,ruled,vlined]{algorithm2e}
\usepackage{url}

\usepackage{xargs}                      
\usepackage[pdftex,dvipsnames]{xcolor}  
\usepackage{float} 
\usepackage{graphicx,wrapfig,lipsum}
\usepackage{xcolor}
\usepackage{calc} 

\usepackage[colorinlistoftodos,prependcaption,textsize=normalsize]{todonotes}

\usepackage{mwe} 
\usepackage{subcaption}
\usepackage{graphicx}
\usepackage{stmaryrd}
\usepackage{amsmath,bm}

\usepackage[most]{tcolorbox} 
\usepackage{tikz}
\usetikzlibrary{decorations.pathreplacing,calligraphy}
\newcommandx{\unsure}[2][1=]{\todo[linecolor=red,backgroundcolor=red!25,bordercolor=red,#1]{#2}}
\newcommandx{\change}[2][1=]{\todo[linecolor=blue,backgroundcolor=blue!25,bordercolor=blue,#1]{#2}}
\newcommandx{\info}[2][1=]{\todo[linecolor=OliveGreen,backgroundcolor=OliveGreen!25,bordercolor=OliveGreen,#1]{#2}}
\newcommandx{\improvement}[2][1=]{\todo[linecolor=Plum,backgroundcolor=Plum!25,bordercolor=Plum,#1]{#2}}
\newcommandx{\thiswillnotshow}[2][1=]{\todo[disable,#1]{#2}}

\newcommand{\I}{\mathbf{1}}

\SetKwInput{KwInput}{Input}                
\SetKwInput{KwOutput}{Output}

\newtheorem{Definition}{Definition}[section]
\newtheorem{Theorem}{Theorem}[section]
\newtheorem{Lemma}{Lemma}[section]

\newtheorem{Proposition}{Proposition}[section]
\newtheorem{Remark}{Remark}[section]

\newtcolorbox[auto counter]{summary}[1][]{title={\bfseries Comment to settle~\thetcbcounter},enhanced,drop shadow={black!50!white},
  coltitle=black,
  top=0.3in,
  attach boxed title to top right=
  {xshift=0pt},
  boxed title style={size=small,colback=pink},#1}


\title{Hybrid of node and link communities for graphon estimation}
\author{Arthur Verdeyme\thanks{
    This work was supported by the European Research Council under Grant CoG 2015-682172NETS, within the Seventh European Union Framework Program.}\\
    Institute of Mathematics, Ecole Polytechnique Fédérale de Lausanne,\\
    Station 8, 1015 Lausanne, Switzerland\\
    and \\
    Sofia Olhede \\
    Institute of Mathematics, Ecole Polytechnique Fédérale de Lausanne,\\
    Station 8, 1015 Lausanne, Switzerland}\date{\today}

\addtolength{\oddsidemargin}{-.5in}%
\addtolength{\evensidemargin}{-1in}%
\addtolength{\textwidth}{1in}%
\addtolength{\textheight}{1.7in}%
\addtolength{\topmargin}{-1in}%
\begin{document}

\maketitle

\begin{abstract}
    Networks serve as a tool used to examine the large-scale connectivity patterns in complex systems.
    Modelling their generative mechanism nonparametrically is often based on step-functions, such as the stochastic block models.
    These models are capable of addressing two prominent topics in network science: link prediction and community detection.
    However, such methods often have a resolution limit, making it difficult to separate small-scale structures from noise.
    To arrive at a smoother representation of the network's generative mechanism, we explicitly trade variance for bias by smoothing blocks of edges based on stochastic equivalence.
    As such, we propose a different estimation method using a new model, which we call the \textit{stochastic shape model}.
    Typically, analysis methods are based on modelling node or link communities. In contrast, we take a hybrid approach, bridging the two notions of community.
    Consequently, we obtain a more parsimonious representation, enabling a more interpretable and multiscale summary of the network structure.
    By considering multiple resolutions, we trade bias and variance to ensure that our estimator is rate-optimal.
    We also examine the performance of our model through simulations and applications to real network data.\end{abstract}

\noindent%
{\it Keywords:}  networks, community detection, nonparametric statistics, link prediction, graphon.
\vfill

\newpage
\section{Introduction}
\label{sec:intro}

This paper introduces a new method of estimating the generating mechanism of a network non-parametrically. Classically methods have been based on the stochastic equivalence of nodes, and corresponded to fitting a stochastic blockmodel~\citep{lancichinetti2009detecting}, or variants thereof. However, such models can create highly variable estimators, and require additional smoothing \citep{airoldismoothing2014, sischka2022stochasticsmoothing, li2022smoothinggraphon}. Our approach addresses this outstanding gap in current network methodology. To solve this problem, we introduce a model that bridges node and link communities, called the stochastic shape model, and use it as a rationale for smoothing the connectivity matrix. By doing so, we provide a simpler estimator whose performance does not suffer from this choice of bias-variance trade-off. 

Estimating the generating mechanism of a network is a well-studied problem. With the increased availability of large-scale network data sets, \textit{graphons} have emerged as a natural way to describe the generating mechanism of a network, assuming permutation invariance of the nodes~\citep{lovasz2012large,borgs2017graphons}. Such an assumption is not very restrictive as, in many applications, the ordering of nodes is arbitrary. Such an idea is summarized by the Aldous-Hoover theorem, \citep{hoover1979relations, aldous1981representations} from which one can show that a graphon can represent many latent variable graph models~\citep{orbanz2014bayesian,hoff2007modeling}. Standard practices to approximate a graphon either use a stochastic block model where the number of blocks depends on the number of nodes in the network, as do~\citet{olhede2014network, gao2015rate,chan2014consistent,klopp2017oracle}, or by using another non-parametric approximating technique \citep{chatterjee2015matrix, zhang2017estimating}.

Approaching such an estimation with stochastic block models is a good starting point for understanding node invariances~\citep{chen2020targeted}. However, we argue that this view is node-centric and can be restrictive to pattern such as overlapping or hierarchical communities~\citep{ho2012multiscale,latouche2016variational,li2022hierarchical} as well as heavily edge-focused network such as communication or collaboration networks. These ideas have found implementation in past years. \citet{evans2009line, ahn2010link} have introduced the idea of \textit{link communities}, to model interactions from edge similarity rather than nodal groups. In recovering the generating mechanism of a network, the work of \citet{crane2016edge} has led to the first graphon-like representation of network assuming edge-exchangeability, leading to models that could encapsulate even more patterns such as sparsity and power-law distribution~\citep{veitch2019sampling,dempsey2022hierarchical,zhang2022node}.

To bridge those two views, we provide a method which estimates a model based on the stochastic block model, which focuses on edge variables instead of nodes or realized edges. We call it the \textit{stochastic shape model}. This model focuses on a new notion of link community, one based on stochastic equivalence of edge variables groups, much like the mixed membership model from \citet{airoldi2008mixed}. This model can be thought of as a \textit{hybrid} between node and link communities, a term borrowed from \citet{he2015identification}.

To do so, we use a multiscale (but non-hierarchical) estimator, which is obtained as follows. First, we get a stochastic block approximation of the network, from \citet{olhede2014network}. As this creates a noisy estimator, we post-hoc smooth the non-parametric representation of the non-parametric estimate of the graphon. This smoothing is performed over a set of edges variables, thus creating link communities, from node communities, producing a shape in the space of edge variables that is not represented as a Cartesian product over sets of nodes. These shapes refer to the stochastic shape model that we introduce and can be seen as based on level set from \citet{osher2003level}. As we shall have a growing number of nodes, we will be able to approximate constant probabilities over arbitrary domains, where, for non-parametric approximation, the domains are assumed to shrink relative to the area of the graphon, i.e. $[0,1]^2$.
Note that we cannot do that fit to raw data (i.e. edge variables) directly, as then we would need to learn, from a graph with $n$ nodes, $\tbinom{n}{2}$ latent variables from $\tbinom{n}{2}$ observations, which would break a union bound necessary to determine the concentration of the fitted model. At best, the error rate would be constant, not decreasing in $n$~\citep{gao2015rate}. This is why we need to first have an initial estimate based on the stochastic block model.

We show the performance of our method both theoretically and empirically. We show that our estimator is rate-optimal both when the graphon is a stochastic shape model and a Hölder-smooth function using results and techniques from \citet{gao2015rate}. Our experiments were performed on both synthetic and real world data. On synthetic data, we used various type of graphons to show the versatility of our estimator. We also illustrate empirically our theoretical results such as the behavior of our considered loss functions, and the potential exponential reduction of parameters compared to stochastic block model methods such as \citet{wolfe2013nonparametric, olhede2014network, gao2015rate}. We maintain a similar, if not better, predictive performance. This is particularly of interest in community detection and also allows tackling a fundamental problem when inferring a graphon. As discussed in \citet{klimm2022modularity}, inference using a graphon model can trade a large complex network for a large complex object. This goes particularly against principles of community detection, as one aims to get a simpler summary of the networks patterns. Thus, smoothing and the potentially induced exponential reduction of parameters is of deep interest in building more interpretable summaries of networks. We illustrate this point on real-world data set.

In section 2, we describe our main results and modelling assumptions. Section 3 gives a more precise description of the theoretical framework, as well as the necessary theorems to obtain rate optimality. We further discuss, in section 4, how to build, in practice, our estimator and how this method build a new perspective for community detection. Finally, we illustrate the predictive performance of our method, in section 5, with both synthetic and real-world data sets.

\section{Modelling assumptions and main results}

\subsection{Latent variable models}
A network can be represented by a $n\times n$ interaction data matrix $A$, referred to henceforth as an ``adjacency matrix'', whose $(i,j)$th entry represent the interaction between node $i$ and node $j$ via the absence or presence of an edge. In this paper, we consider an undirected, unweighted and without self-loops graph of $n$ nodes. That is to say that the connectivity can be encoded by an adjacency matrix $\left\{A_{i j}\right\}$ taking values in $\{0,1\}^{n \times n}$ such that $A_{ii} =0$ $\forall i \in [n]$ and $A_{ij}=A_{ji}$ $\forall i\neq j$ and $i,j\in[n]$. The value of $A_{i j}$ represents the presence or absence of an edge between the $i$th and $j$th nodes. The model, in this paper, of the edge variable $A_{ij}$ is $A_{i j}\sim \operatorname{Bernoulli}\left(\theta_{i j}\right)$ for $1 \leq j<i \leq n$, where $0\leq \theta_{ij}\leq 1$. This is the realization of $\binom{n}{2}$ independent Bernoulli trials. In such a setting, one can invoke Aldous-Hoover's theorem \citep{hoover1979relations,aldous1981representations, diaconis2007graph} of jointly exchangeable arrays as follows.
\begin{Theorem}[Aldous-Hoover]
\label{aldoushoover}
A random array $\left\{A_{i j}\right\}$ is jointly exchangeable if and only if it can be represented as follows: There is a random function $f:[0,1]^2 \rightarrow [0,1]$ such that
$$
A_{i j} \mid \xi_i, \xi_j \sim \operatorname{Bernoulli}\left(f\left(\xi_i, \xi_j\right)\right),
$$
where $\left(\xi_i\right)_{i \in \mathbb{N}}$ is a sequence of i.i.d $U[0,1]$ random variables, which are independent of $f$.
\end{Theorem}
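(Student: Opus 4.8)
The statement is an ``if and only if'', so I would treat the two directions separately, and the first is routine. For the ``if'' direction, suppose $A_{ij}$ admits the stated representation, meaning that conditionally on the random function $f$ and the sequence $(\xi_k)_{k\in\mathbb{N}}$ the entries $\{A_{ij}\}_{1\le j<i}$ are independent with $A_{ij}\sim\operatorname{Bernoulli}(f(\xi_i,\xi_j))$. Fix a finitely supported permutation $\pi$ of $\mathbb{N}$. Since $(\xi_i)$ is i.i.d.\ $U[0,1]$ and independent of $f$, the pair $\bigl(f,(\xi_{\pi(i)})_{i}\bigr)$ has the same distribution as $\bigl(f,(\xi_i)_{i}\bigr)$; applying the measurable kernel $(x,y)\mapsto\operatorname{Bernoulli}(f(x,y))$ coordinatewise, and using that $f$ may be taken symmetric, then gives $\{A_{\pi(i)\pi(j)}\}\stackrel{d}{=}\{A_{ij}\}$. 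As this holds for every finite $\pi$, the array is jointly exchangeable.

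\noindent The substantive content is the converse, which is Hoover's theorem; I would follow the Aldous--Kallenberg route. \emph{Step 1 (one-dimensional reduction).} Recall the de Finetti theorem and its coding refinement: an infinite exchangeable sequence is conditionally i.i.d.\ given its tail $\sigma$-field, and by a transfer argument can be written $\xi_i=g(\alpha,\vartheta_i)$ with $\alpha,\vartheta_1,\vartheta_2,\dots$ i.i.d.\ $U[0,1]$ and $g$ measurable. \emph{Step 2 (conditional independence of rows).} Viewing the symmetric array through its rows $R_i=(A_{ij})_{j\neq i}$, joint exchangeability makes the row sequence exchangeable; the key lemma is that there is a directing random element $\alpha$ such that, conditionally on $\alpha$, the rows become independent and, moreover, the entry $A_{ij}$ depends on the array only through $\alpha$, a per-node variable $\xi_i$, a per-node variable $\xi_j$, and an idiosyncratic per-edge variable $\xi_{\{i,j\}}$, where $\alpha$, $(\xi_i)$ and $(\xi_{\{i,j\}})$ are mutually independent and uniform. \emph{Step 3 (functional representation).} Combining Step 2 with the coding lemma of Step 1 and Kallenberg's transfer theorem yields a measurable $h:[0,1]^4\to\{0,1\}$, symmetric in its two middle arguments, with $A_{ij}=h(\alpha,\xi_i,\xi_j,\xi_{\{i,j\}})$ almost surely.

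\noindent \emph{Step 4 (collapsing to the stated form).} Finally, set $f(x,y):=\int_0^1 h(\alpha,x,y,u)\,du$, which is a random (through $\alpha$) symmetric function $[0,1]^2\to[0,1]$ that is independent of $(\xi_i)$. Conditionally on $f$ (equivalently on $\alpha$) and on $(\xi_i)$, the variables $A_{ij}=h(\alpha,\xi_i,\xi_j,\xi_{\{i,j\}})$ are independent, because the $\xi_{\{i,j\}}$ are independent of each other and of $(\alpha,(\xi_i))$, and each is $\operatorname{Bernoulli}(f(\xi_i,\xi_j))$; this is exactly the asserted representation. \textbf{The main obstacle is Step 2:} in the jointly exchangeable (symmetric) case one cannot permute ``rows'' and ``columns'' independently, so the reverse-martingale argument underlying de Finetti must be run along a more delicate filtration, and one must separate the per-node randomness from the per-edge randomness while respecting the diagonal constraint and the symmetry of $h$. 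This is precisely Hoover's original difficulty, and the later treatments (Aldous, Kallenberg, Diaconis--Janson) are essentially reorganizations of this step; for the purposes of this paper I would simply invoke it as cited.
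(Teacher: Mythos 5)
The paper does not prove this statement at all: Theorem 2.1 is invoked as a classical result with citations to Hoover, Aldous and Diaconis--Janson, so there is no in-paper argument to compare against. Your outline is a correct sketch of the standard route (de Finetti plus coding/transfer, Hoover--Kallenberg conditional independence giving $A_{ij}=h(\alpha,\xi_i,\xi_j,\xi_{\{i,j\}})$ with $h$ symmetric in the middle arguments, then integrating out the edge-level noise), and your Step 4 correctly exploits that the array is $\{0,1\}$-valued, so that $f(x,y)=\int_0^1 h(\alpha,x,y,u)\,du$ is exactly the conditional Bernoulli parameter and the $\xi_{\{i,j\}}$ deliver conditional independence; this is precisely the Diaconis--Janson ``graphon'' form of Aldous--Hoover stated in the paper. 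You are also honest that the substantive content sits in Step 2 and is cited rather than reproved, which matches the paper's own level of treatment. Two small remarks: in the ``if'' direction the statement as printed does not say $f$ is symmetric, so you should either symmetrize $f$ or note that $A_{ij}=A_{ji}$ forces only the symmetric part of $f$ to matter, as you implicitly do; and strictly one should also permute the per-edge variables $\xi_{\{\pi(i),\pi(j)\}}$ alongside $(\xi_{\pi(i)})$ when verifying exchangeability in the representation direction, which is immediate since they are indexed by unordered pairs. Neither affects correctness.
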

This function $f$ is commonly referred to as a \textit{graphon}, or, equivalently, a graph limit function \citep{lovasz2012large}. The graphon is a non-negative symmetric bivariate function that allows us to represent a discrete network and a discrete set of probabilities to a continuous limiting object that lies in $[0,1]^2$. 
This concept plays a significant role in network analysis. Since the graphon is an object independent of the network number of nodes $n$, it gives a natural path to compare networks of different sizes. Moreover, model based prediction and testing can be done with the graphon framework \citep{lloyd2012random}. Besides non-parametric models, various parametric models have been proposed on the connectivity matrix $\left\{\theta_{i j}\right\}$ to capture different aspects of the network \citep{athreya2017statistical}.
Thus, we further assume the following graphon model, conditional on the latent variable $\bm{\xi}$, we set
\begin{equation}
\label{graphon}
    \theta_{i j}=f\left(\xi_i, \xi_j\right), \quad i \neq j \in[n] .
\end{equation}

Following Aldous-Hoover's theorem, we assume that the latent sequence $\left\{\xi_i\right\}$ are random variables sampled from a uniform distribution supported on $[0,1]^n$.
One can consider two estimation methods. A first one is estimating the point-wise probabilities, i.e. estimating $f(\xi_i,\xi_j)$ at specific point $(\xi_i,\xi_j)$, that we call in this paper \textit{value estimation}. The second one is a \textit{function estimation}, where one aims to recover the whole generating mechanism of the continuous graphon over its whole domain $[0,1]^2$.
Given $\left\{\xi_i\right\}$, we assume $\left\{A_{i j}\right\}$ are independent for $1 \leq j<i \leq n$. 
In the model \eqref{graphon},
as $\left\{\left(\xi_i, \xi_j\right)\right\}$ are 
latent random variables, $f$ can only be estimated from the response $\left\{A_{i j}\right\}$. 
Consequently, in a value estimation framework, this causes an identifiability problem, because without observing the latent variables, there is no way to associate the value of $f(x, y)$ with $(x, y)$. 
In this paper, we consider the following loss function
\begin{align}
\label{loss}
\frac{1}{n^2} \sum_{i, j \in[n]}\left(\hat{\theta}_{i j}-\theta_{i j}\right)^2,  
\end{align}
where $\hat{\theta}$ is the estimator of $\theta$.
Even without observing the design $\left\{\left(\xi_i, \xi_j\right)\right\}$, it is still possible to estimate the matrix $\left\{\theta_{i j}\right\}$ by exploiting its underlying structure modelled by (\ref{graphon}). 

The exchangeability assumption
implies that a graphon representation defines an equivalence class up to a measure-preserving transformation.
Thus, in a function estimation framework, one needs a metric that must reflect it. Consequently, we adapt the loss function \eqref{loss} into a mean integrated square error \citep{wolfe2013nonparametric} such as:
\begin{align}
\label{MISE}
    \inf _{\sigma \in \mathcal{M}} \iint_{(0,1)^2}\left|f(\sigma(x), \sigma(y))-\hat{f}(x, y)\right|^2\; d x\; d y,
\end{align}
where $\mathcal{M}$ is the set of all measure-preserving bijections of the form $\sigma:[0,1]\to [0,1]$ and $\hat{f}$ is the estimator of $f$. This defines a metric on the quotient space of graphons, as shown in \citet{lovasz2012large}. While one can find more details in \citet{tsybakov_2010}, it is, in general, impossible to recover a measurable function from a finite sample. However, this can be possible if we add further assumptions on the graphon function. 
As described in \citet{cai2014iterative}, recovering a function from its values at a random and finite set of inputs, under various assumptions on the function, is treated separately from the estimation of these values in many cases.
To perform such a task, we make the common assumption that the function is Hölder-continuous as in \citet{wolfe2013nonparametric, gao2015rate} and \citet{klopp2017oracle}. Indeed, since any estimator $\hat{f}$, based on a block representation, may be thought of as a Riemann sum estimate of $f$, we need to know when these sums converge. A bounded graphon on $[0,1]^2$ is Riemann-integrable if and only if it is almost everywhere continuous, following Lebesgue's condition. Assuming $(\alpha,M)$-Hölder continuity, where $\alpha$ is the Hölder coefficient and $M$ is the Hölder bounding constant, we have that
\begin{equation}
\label{holder_cond}
f \in \mathcal{H}(\alpha,M) \Leftrightarrow \sup _{(x, y) \neq\left(x^{\prime}, y^{\prime}\right) \in(0,1)^2} \frac{\left|f(x, y)-f\left(x^{\prime}, y^{\prime}\right)\right|}{\left|(x, y)-\left(x^{\prime}, y^{\prime}\right)\right|^\alpha} \leq M<\infty,
\end{equation}
where $\mathcal{H}(\alpha,M)$ is the class of $(\alpha,M)$-Hölder continuous functions and $f$ is assumed to be uniformly continuous, so that Riemann sums can be used to adjust its approximation error.

While using a H\"older graph limit is standard in non-parametric statistics, as it is unfeasible to estimate a full function as a graph limit, it is standard to use a block constant function to approximate the graph limit. We therefore define a block constant function to be 
\begin{equation}
 f(x,y)=\sum_{a,b} \theta_{ab} \I((x,y)\in \omega_{ab}\cup (y,x)\in \omega_{ab}),    
\end{equation}
for $0\leq x<y\leq 1$, where we define $\omega_{ab}=\{(x,y):\; (a-1)/k\leq x<a/k,\; (b-1)/k\leq y<b/k\}$. Using the set of boxes $\omega_{ab}$, we can split $[0,1]^2$ into a checkerboard pattern. 

\subsection{The stochastic shape model}

We already know that block constant functions can be used to approximate an arbitrary Hölder smooth function~\citep{wolfe2013nonparametric, gao2015rate, klopp2017oracle}. We now assume more general methods. This will come via the notion of a level-set. A block constant function will have level sets corresponding to the blocks in the function. If we now want to group together edges for an arbitrary shape, rather than a square, the level set becomes a natural tool to use. We define the closed region ${\cal R}_c$ where the function $f(x,y)$ takes the value $\theta_c$. If we assume there are only $C$ distinct regions for the function $f(x,y)$ then the following set-up will work.
We now define the region constant function to be functions of the form
\begin{equation}f(x,y)=\sum_{c=1}^C \theta_c \I\left\{(x,y)\in {\mathcal{R}}_c\right\}.\label{unionoflevel}\end{equation}
We assume that $\cup_{c=1}^C {\cal R}_c=[0,1]^2$, and ${\mathcal{R}}_c\cap {\mathcal{R}}_{c'}=\emptyset$ for all $c\neq c'\in [C]$. The regions have to be chosen to respect the symmetry of the graphon. To solve the problems of value and function estimation, we consider the $\{\theta_{ij}\}_{1\leq i,j \leq n}$ to be from an arbitrary shape, where averaging of edge variables would be a natural estimation procedure.
\begin{Definition}[Stochastic Shape Model ($SSM$)]\label{def:StochShape_origin}
Assume we have defined $s\in{\mathbb{N}}^+$ non-intersecting closed regions in ${\cal T}= [0,1]^2\cap \{ x\leq y\}$, let us call them $S_c$ for $c\in[s]$. Define $S_{s}= {\cal T}\backslash \left\{\cup_{c<s} S_c\right\}$. We can then define the function $f$ for the $s$ constants $0<\theta_c<1$, for $c\in[s]$ to be
\begin{equation}
\label{ssm_origin}
f(x,y)=\left\{ \begin{array}{lcr}
\theta_c & {\mathrm{if}} & (x,y)\in S_c\\
\theta_c & {\mathrm{if}} & (y,x)\in S_c
\end{array} 
\right.
.
\end{equation}
\end{Definition}
We denote the parameter space for $\left\{\theta_{i j}\right\}$ by $\Theta_{s}\in [0,1]^s$, where $s$ is the number of shapes in the stochastic shape model. The exact definition of $\Theta_{s}$ is given in the next section.
The value of $\theta_{i j}$ only depends on the $p$th shape that the $(i,j)$th edge variable belong to.
As discussed in the introduction, one has to be careful with such a fluid model as, without further modelling restrictions, we can end up estimating $\binom{n}{2}$ parameters from $\binom{n}{2}$ variables. Thus, we need to keep the underlying nodal structure as in a stochastic block model. As such, we define a block-like stochastic shape model with $s$ shapes and a \textit{block-resolution} $k$, call it $SSM(s,k)$ as follows.
\begin{Definition}[$(s,k)$-Stochastic Shape Model ($SSM(s,k)$)]\label{def:StochShape}
Assume we have defined $s\in{\mathbb{N}}^+$ symmetric regions in ${\cal T}= [0,1]^2$ that are unions of blocks of length $k^{-1}$. For each $(\xi_i,\xi_j)\in[0,1]^2$ we define $w(\xi_i,\xi_j) = u(z(\lceil k\xi_i\rceil), z(\lceil k\xi_j\rceil))$ where $z: [n]\to[k]$ is the mapping from a node to its associated block and $u:\left[k^2\right]\to [s]$ maps a block to the shape it belongs. We can then define the function $f$ for the $s$ constants $0<\theta_c<1$, for $c\in[s]$ to be
\begin{equation}
\label{ssm_block}
f(\xi_i,\xi_j)=\left\{ \begin{array}{lcr}
\theta_c & {\mathrm{if}} & w(\xi_i,\xi_j)=c\\
\theta_c & {\mathrm{if}} & w(\xi_i,\xi_j)=c
\end{array} 
\right.
.
\end{equation}
\end{Definition}
To the best of our knowledge, it is the first time that a model such as the $SSM(s,k)$ is considered as both a data generating mechanism and model for predictions. 

We derive rates of convergence for our method using a similar approach than \citet{gao2015rate} and \citet{klopp2017oracle} for both upper and lower bounds. We use properties of the packing number of possible shape assignment, which further allow us to obtain better rate of convergence as well as a significant reduction of the number of parameters compared to methods based on stochastic block models, both theoretically and in practice. Indeed, as explained by \citet{gao2015rate}, the packing number helps characterize our ignorance of the model, whether of the graphon latent variables structures or the number of shapes. This last point gives an intuition on the improvement brought by considering the stochastic shape model instead of its block counterpart, as the shapes provide more flexibility. Thus, we reduce the variance brought, at the price of introducing bias, by a stochastic block approximation in a bias-variance tradeoff setting from the choice of the loss function defined in \eqref{loss}.
Our estimator is constructed by using a stochastic block estimator, with $k$ blocks that we further group together based on stochastic equivalence. Consequently, we obtain $s$ shapes. This method of estimation requires that $n>k^2-s$ and there is an invertible mapping from the shapes to the blocks. This is further described in Section 3 and illustrated in Figure \ref{fig:framework_rpz}.
As such, in a setting where $\{\theta_{ij}\}$ is sampled from a stochastic shape model, the minimax rate is summarized by the following.
\begin{Theorem}
\label{thm:ssm}
    Assuming the $(s,k)$-stochastic shape model, we have 
    $$\inf _{\hat{\theta}} \sup _{\theta \in \Theta_s} \mathbb{E}\left\{\frac{1}{n^2} \sum_{i, j \in[n]}\left(\hat{\theta}_{i j}-\theta_{i j}\right)^2\right\} = \Theta\left(\frac{s}{n^2}+\frac{\log (\operatorname{max}(k,s))}{n}\right),$$
    for any $s\in [n^2]$ with $n>\operatorname{max}(0,k^2-s)$.
\end{Theorem}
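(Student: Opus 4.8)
The plan is to split the statement into an upper bound and a matching lower bound, following the template of \citet{gao2015rate} but tracking the two regimes that the term $\log(\max(k,s))$ encodes. Write the risk as $R_n(\hat\theta,\theta) = n^{-2}\sum_{i,j}(\hat\theta_{ij}-\theta_{ij})^2$. For the \emph{upper bound}, I would analyze the least-squares / profile-likelihood estimator over $\Theta_s$: pick the block assignment $z:[n]\to[k]$, the shape assignment $u:[k^2]\to[s]$, and the constants $\theta_c$ that minimize the empirical squared loss against $A$. A standard bias-plus-stochastic decomposition gives $R_n(\hat\theta,\theta)\le 2\,R_n(\theta^\*,\theta) + (\text{stochastic term})$, where $\theta^\*$ is the best element of the model class; since $\theta$ itself lies in $\Theta_s$ under the $(s,k)$-SSM assumption, the bias term vanishes. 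The stochastic term is controlled by a one-sided uniform deviation (a chaining / union-bound argument over all configurations): there are at most $s^n$ ways to assign nodes-via-blocks and at most $\binom{k^2}{s}s^{k^2}$, i.e.\ roughly $e^{O(k^2\log s)}$ — but crucially, once $z$ is fixed the \emph{number of free real parameters} is only $s$, so the relevant covering/entropy term is $s\log(1/\epsilon)$ from the constants plus $\log(\#\text{assignments})$. Carefully, the log-cardinality of the discrete part is $O(n\log s + k^2\log k)$; dividing by $n^2$ and using $n > k^2 - s$ (so $k^2 \lesssim n + s$) reduces the $k^2\log k$ piece to $O\!\big((n+s)\log k / n^2\big)$, which is absorbed into $\frac{\log\max(k,s)}{n} + \frac{s}{n^2}$. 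The $s$ continuous parameters contribute $O(s/n^2)$ via a $\chi^2$-type concentration (each $\theta_c$ is estimated by an average of $\Theta(\cdot)$ Bernoulli's). Assembling these yields the $O\big(\frac{s}{n^2} + \frac{\log\max(k,s)}{n}\big)$ upper bound; the concentration steps can be made rigorous with Bernstein's inequality for bounded variables, exactly as in Lemma-style arguments of \citet{gao2015rate} and \citet{klopp2017oracle}.

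For the \emph{lower bound} I would exhibit two separate hard sub-problems and take the maximum. First, the ``parametric'' obstruction $s/n^2$: fix a trivial block/shape structure and let the $s$ constants $\theta_c$ vary in a small cube around $1/2$; a Varshamov--Gilbert / Assouad hypercube argument over the $s$ constants, combined with the fact that each $\theta_c$ influences $\Theta(n^2/s)$ entries of $\theta$ (when shapes are balanced) — or at least that collectively they influence $\Theta(n^2)$ entries — gives a lower bound of order $s/n^2$ after noting the per-coordinate Fisher information is $\Theta(1)$ per observation. Second, the ``combinatorial'' obstruction $\log\max(k,s)/n$: this needs two cases. When $k\ge s$, reuse the block-assignment lower bound of \citet{gao2015rate} — the ambiguity in assigning $n$ nodes to $k$ blocks already forces $\log k/n$; the SSM cannot beat this because its first stage is a $k$-block model. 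When $s > k$ (only possible when $k^2 \ge s$, consistent with $n > k^2 - s$), construct a packing of shape-assignment maps $u:[k^2]\to[s]$: there are $\gtrsim e^{c\,k^2\log s}$ of them that are pairwise far in Hamming distance, each inducing a $\theta$-matrix, and a Fano/Kullback bound gives risk $\gtrsim \frac{k^2\log s}{n^2}\gtrsim \frac{\log s}{n}$ using $k^2 \ge s$ is not quite enough — I would instead choose the packing on only $\Theta(\log s)$ coordinates scaled so the KL divergence between hypotheses is $O(1/n)$, yielding $\log s / n$ directly by Fano. Taking the larger of the two combinatorial contributions gives $\log\max(k,s)/n$.

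The main obstacle I expect is \emph{bookkeeping the interaction between $k$, $s$ and $n$ cleanly in both bounds}, and in particular verifying that the condition $n > \max(0, k^2 - s)$ is exactly what is needed — and sufficient — to (i) absorb the $k^2\log k$ entropy term into the stated rate in the upper bound, and (ii) guarantee enough ``room'' ($n$ nodes, hence $\Theta(n^2)$ observations) to run the Fano argument for the $\log s / n$ lower bound when $s$ is large. A secondary subtlety is that in the SSM the map from shapes back to blocks must be invertible (as the authors note), so the packing sets used in the lower bound must be constructed among \emph{admissible} configurations only; I would handle this by working within a sub-family where invertibility is automatic (e.g.\ the identity-plus-perturbation constructions above), which costs only constants. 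The continuous and discrete parts of the risk genuinely add (they are hard ``in different directions''), so the final rate is the sum, not the max, of $s/n^2$ and $\log\max(k,s)/n$ — this is why the two-pronged lower bound is essential rather than a single hypercube.
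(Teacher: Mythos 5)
Your overall architecture (a Gao-style least-squares analysis with an entropy/union bound for the upper bound, plus a two-pronged lower bound) is the same as the paper's, but the clustering part of your lower bound has a genuine gap in the $s>k$ case. A Fano argument over shape-assignment maps $u:[k^2]\to[s]$ cannot produce the $\log s/n$ term: a packing supported on $\Theta(\log s)$ tiles gives only $\exp(\Theta(\log s))$ hypotheses, each tile carries $(n/k)^2$ matrix entries, and once you calibrate the perturbation so the pairwise Kullback--Leibler divergences are compatible with Fano, the squared separation divided by $n^2$ is of order $\log s/n^2$ (with your literal ``KL $=O(1/n)$'' scaling it is even smaller), not $\log s/n$. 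The $\log(\cdot)/n$ clustering term fundamentally comes from the $\exp(\Theta(n\log k))$-fold ambiguity of node-to-block assignments, not from the tile-to-shape assignment, whose entropy $k^2\log s$ only reaches $n\log s$ when $k^2\gtrsim n$. The repair is exactly what the paper does: since shapes are unions of tiles of a $k\times k$ grid, $s\le k^2$, hence $\log\max(k,s)\le 2\log k$, so the node-assignment lower bound of \citet{gao2015rate}, with connection probabilities $\tfrac12+c\sqrt{\log(\max(k,s))/n}\,\omega_a$ as in \citet{klopp2017oracle}, already delivers $\log\max(k,s)/n$ in both of your cases; your separate $s>k$ construction is unnecessary and, as sketched, insufficient. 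Your Assouad/Varshamov--Gilbert hypercube over the $s$ constants for the $s/n^2$ part is fine and matches the sub-model argument the paper inherits from \citet{gao2015rate}.

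On the upper bound, your entropy accounting also needs repair in part of the claimed range. The correct counts are $n\log k$ for node-to-block maps and $k^2\log s$ for block-to-shape maps (your $O(n\log s+k^2\log k)$ swaps them), and bounding $k^2\log s\le(n+s)\log s$ via $n>k^2-s$ leaves an unabsorbed $s\log s/n^2$, which exceeds the target $s/n^2+\log\max(k,s)/n$ by a logarithmic factor whenever $n<s\le k^2$ --- a regime the theorem allows. The paper avoids this by counting the (surjective) composite maps $w=u\circ z^2$ directly, which is precisely where the hypothesis $n>\max(0,k^2-s)$ is used, yielding $\log|\mathcal{W}_{n,s}|\lesssim n\log\max(k,s)$ with no leftover term. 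With that cardinality bound in hand, the rest of your outline coincides with the paper's proof: the basic inequality $\|\hat\theta-\theta\|^2\le 2\langle\hat\theta-\theta,A-\theta\rangle$, a three-term bound controlled by a covering number $\exp(Cs)$ for the $s$ constants and a union bound over assignments, vanishing bias because $\theta\in\Theta_s$, and Hoeffding/Bernstein concentration.
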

Each term of this theorem can be interpreted as in \citet{gao2015rate}. That is, the variability $\frac{s}{n^2}$ illustrates the error induced by estimating $s$ parameters with $n^2$ observations. The clustering part, from $\frac{\operatorname{log}(\operatorname{max}(k,s))}{n}$, represents the identifiability mentioned previously. 
Similarly, when $\{\theta_{ij}\}$ is sampled from a graphon $f \in \mathcal{H}(\alpha,M)$, we obtain the following
\begin{Theorem}
\label{thm:holder}
    Assume $f \in \mathcal{H}(\alpha,M)$. We have
    $$
    \inf _{\hat{\theta}} \sup _{f \in \mathcal{F}_\alpha(M)} \mathbb{E}\left\{\frac{1}{n^2} \sum_{i, j \in[n]}\left(\hat{\theta}_{i j}-\theta_{i j}\right)^2\right\} = \Theta\left(n^{-2 \alpha /(\alpha+1)}+ \frac{\log n}{n}\right),
    $$
    where the expectation is jointly over $\left\{A_{i j}\right\}$ and $\left\{\xi_i\right\}$.
\end{Theorem}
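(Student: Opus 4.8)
The plan is to prove the upper and lower bounds separately, reducing the upper bound to Theorem~\ref{thm:ssm} by a bias--variance balance and obtaining the lower bound by combining two packing arguments, one per term of the rate. For concreteness take $\alpha\in(0,1]$; the case $\alpha>1$ is handled by replacing every block-constant object below with its piecewise-polynomial analogue over the blocks, exactly as in \citet{gao2015rate}, which leaves the rates unchanged.

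For the upper bound I would work conditionally on the latent sequence $\{\xi_i\}$ and use a network-histogram (block-constant) estimator with $k\asymp n^{1/(\alpha+1)}$ blocks formed from contiguous order statistics of the $\xi_i$, each block of size $\asymp n/k$. By the concentration of spacings of uniform order statistics, with probability $1-o(1)$ each block is a sub-interval of $[0,1]$ of length $\asymp 1/k$; on this event the $(\alpha,M)$-Hölder property \eqref{holder_cond} forces the best block-constant approximation $\bar\theta$ of $\theta_{ij}=f(\xi_i,\xi_j)$ to satisfy $\frac{1}{n^2}\sum_{i,j}(\bar\theta_{ij}-\theta_{ij})^2\lesssim M^2 k^{-2\alpha}$. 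The residual estimation problem is an instance of $SSM(s,k)$ with $s\asymp k^2$ (no nontrivial shape grouping), so the upper-bound half of Theorem~\ref{thm:ssm}---more precisely the oracle inequality behind it, applied to the misspecified target $\bar\theta$---gives estimation error $\lesssim k^2/n^2+(\log k)/n$. With $k\asymp n^{1/(\alpha+1)}$ one has $k^{-2\alpha}\asymp k^2/n^2\asymp n^{-2\alpha/(\alpha+1)}$ and $(\log k)/n\asymp(\log n)/n$; taking expectation over $\{\xi_i\}$ and adding the $o(1/n)$ contribution of the bad event yields the upper bound.

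For the lower bound I would bound the minimax risk below by the maximum of two sub-problems, which is within a constant of their sum. (i) Nonparametric term: fix $k\asymp n^{1/(\alpha+1)}$ and consider graphons $f_\omega=\tfrac{1}{2}+c_0\,k^{-\alpha}\sum_{m}\omega_m\phi_m$, where the $\phi_m$ are $\asymp k^2$ disjointly supported smooth bumps on the $1/k$-grid (symmetrised across the diagonal), $\omega$ ranges over $\{0,1\}^N$ with $N\asymp k^2$, and $c_0$ is chosen so that $f_\omega\in\mathcal{H}(\alpha,M)$ and $f_\omega$ is bounded away from $0$ and $1$. A Varshamov--Gilbert sub-family makes the $f_\omega$ pairwise $L^2$-separated at scale $k^{-\alpha}$ while the KL/chi-square divergence between the laws of $\{A_{ij}\}$ stays below the Fano threshold, giving a lower bound of order $k^{-2\alpha}\asymp n^{-2\alpha/(\alpha+1)}$. (ii) Clustering term: restrict to a sub-family that behaves like a stochastic block model with $k\asymp\sqrt{n}$ roughly equal blocks---smoothed on scale $1/k$ so as to remain in $\mathcal{H}(\alpha,M)$---in which the unknown parameter is the assignment of the $n$ nodes to blocks; the argument of \citet{gao2015rate} (equivalently the lower-bound half of Theorem~\ref{thm:ssm}) then produces a lower bound of order $(\log k)/n\asymp(\log n)/n$. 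The maximum of (i) and (ii) is $\Theta(n^{-2\alpha/(\alpha+1)}+(\log n)/n)$.

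The main obstacle is the interaction between the random design and the block structure, absent from Theorem~\ref{thm:ssm}: here $\theta_{ij}=f(\xi_i,\xi_j)$ is itself random, so both bounds must control the event that the order statistics of $\{\xi_i\}$ fail to split into blocks of comparable width, and, more delicately, the clustering lower-bound construction must retain a full $\log n$ worth of combinatorial ambiguity after being smoothed enough to lie in $\mathcal{H}(\alpha,M)$. I expect this last point---reconciling genuine block-assignment uncertainty with $(\alpha,M)$-Hölder regularity---to be the crux; the smooth-bump packing and the upper-bound balance are essentially the standard arguments of \citet{gao2015rate} and \citet{klopp2017oracle}.
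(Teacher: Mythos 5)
Your overall architecture---an upper bound obtained by balancing a H\"older bias of order $k^{-2(\alpha\wedge 1)}$ against a block-model estimation error $k^2/n^2+(\log k)/n$ at $k\asymp n^{1/(\alpha+1)}$, and a lower bound taken as the maximum of a smooth-bump packing and a block-assignment ambiguity construction---is sound and is essentially the route of \citet{gao2015rate}, which is also what the paper relies on: the paper's lower bound is obtained by importing the construction of \citet{gao2015rate} exactly as in Theorem~\ref{ssm:lwbnd}, while its upper bound is Theorem~\ref{theorem2.3}, proved through the shape machinery (the oracle inequalities of Appendix~A together with the bias Lemma~\ref{lemma2.1} under the shape-diameter assumption $D_{w^*}\propto s^{-\beta/2}$, and the choice $s=\lceil n^{2\beta^{-1}/(\alpha\wedge 1+1)}\rceil$). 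Your upper bound instead runs through the special case $s=k^2$ of Theorem~\ref{thm:ssm}, i.e.\ the plain network-histogram balance; this is more self-contained and avoids the paper's $\beta$-parametrisation, at the cost of saying nothing about the shape estimator the paper actually advocates. Two small cautions there: the estimator cannot literally be built from order statistics of the latent $\xi_i$ (they are unobserved); the feasible statement, which you implicitly invoke, is the oracle inequality for the least-squares estimator evaluated at the oracle assignment induced by the $\xi_i$, with the bad spacing event contributing $o(1/n)$. Also no piecewise-polynomial upgrade is needed for $\alpha>1$: block constants give bias $k^{-2(\alpha\wedge 1)}$ and the clustering term $(\log n)/n$ dominates in that regime anyway.

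The one step that would fail as written is the clustering half of your lower bound. For $\alpha\ge 1$---the only regime in which the $(\log n)/n$ term is not dominated---an SBM-like family with $k\asymp\sqrt{n}$ blocks, gaps of size $c\sqrt{(\log n)/n}$, and transitions confined to scale $1/k\asymp n^{-1/2}$ has H\"older ratio of order $\sqrt{(\log n)/n}\cdot n^{\alpha/2}\ge \sqrt{\log n}$, so it leaves $\mathcal{H}(\alpha,M)$ for fixed $M$; you correctly flag this tension as the crux, but your stated parameters do not resolve it. The standard fix is to coarsen the grid: take $k\asymp n^{c}$ with $c<1/(2\alpha)$ (e.g.\ $c=1/(4\alpha)$), so that amplitude $\sqrt{(\log n)/n}$ over width $n^{-c}$ stays $(\alpha,M)$-H\"older after smoothing, while $\log k\asymp\log n$ still delivers assignment ambiguity of order $(\log n)/n$ via the argument behind Theorem~\ref{ssm:lwbnd} and \citet{gao2015rate}; smoothing only perturbs a vanishing fraction of the area, so the pairwise separations survive up to constants. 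With that correction (and the routine step of transferring the $L^2$ separation of the bump family to the empirical loss on the random design $\{\xi_i\}$), your plan does yield the stated two-sided rate.
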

Here, the nonparametric rate illustrates the error induced by approximating a Hölder-smooth function by a stochastic shape model. As in \citet{gao2015rate, olhede2014network}, the nonparametric rate dominates when $\alpha<1$.
To illustrate this bias-variance trade-off and the performance of our estimator, we will compare our method to state-of-the-art methods such as the Sorting and Smoothing from \citet{chan2014consistent}, the Universal Singular Value Threshold from \citet{chatterjee2015matrix} and the Network histogram from \citet{olhede2014network}.
\section{Estimation and performance}
\label{sec:meth}

\subsection{Methodology}
\label{sec:meth:oracle}


We, here, propose an estimator for both the stochastic shape model and the non-parametric graphon estimation under Hölder smoothness. To introduce the estimator, let us define the set $\mathcal{Z}_{n, k}=\{z:[n] \rightarrow[k]\}$ to be the collection of mappings from $[n]$ to $[k]$ with some integers $n$ and $k$, defining a clustering on the nodes. Define also $\mathcal{U}_{k,s} = \{u:[k]^2 \rightarrow[s]\}$ to be the collection of mappings from $[k^2]$ to $[s]$, defining a clustering on the blocks.
Given $z \in \mathcal{Z}_{n, k}$, the sets $\left\{z^{-1}(a): a \in[k]\right\}$ form a partition of $[n]$, with $z^{-1}(a) \cap z^{-1}(b)=\emptyset$ for any $a \neq b \in[k]$. We further assume that the cardinality of a node group is constant, call it $h>1$. Given a matrix $\left\{\eta_{i j}\right\} \in \mathbb{R}^{n \times n}$ and a node-partition function $z \in \mathcal{Z}_{n, k}$, we define as in \citet{olhede2014network, gao2015rate} the $(a,b)$-block average, i.e. the edge density within block $(a,b)$, to be 
\begin{equation}
\label{block_average}
    \bar{\eta}_{a b}(z)=\frac{1}{h^2} \sum_{i \in z^{-1}(a)} \sum_{j \in z^{-1}(b)} \eta_{i j},
\end{equation}
for $a \neq b \in[k]$, and
\begin{equation}
\label{block_average_sym}
    \quad \bar{\eta}_{a a}(z)=\frac{1}{h\left(h-1\right)} \sum_{i \neq j \in z^{-1}(a)} \eta_{i j},
\end{equation}
for $a \in[k]$.

Next, define a map $u\in\mathcal{U}_{k,s}$ as a partition of the blocks in the stochastic shape sense, such that all clusters/tiles of similar density are grouped together. Now, given a matrix $\left\{\eta_{i j}\right\} \in \mathbb{R}^{n \times n}$, a node-partition function $z \in \mathcal{Z}_{n, k}$ and a tile partition function $u\in\mathcal{U}_{k,s}$, we define the shape average to be
\begin{equation}
    \label{est_hist_shape}
    \bar{\bar{\eta}}_{c}(w)=\frac{1}{\left|S_c\right|h^2} \sum_{(i,j) : w(i,j)\in S_c} \eta_{i j}=\frac{1}{\left|S_c\right|} \sum_{a,b\in S_c}\bar{\eta}_{ab},
\end{equation}
where $S_c$ is defined in Definition \eqref{def:StochShape_origin} and $w$ is a link community mapping, illustrated in Figure \ref{fig:framework_rpz}, such that $w(i,j):=u\circ(z_i\times z_j)$ as we aim to cluster edges into arbitrary regions. We will often denote the $w$ operator as $w\equiv u\circ z^2$ for simplicity of writing.

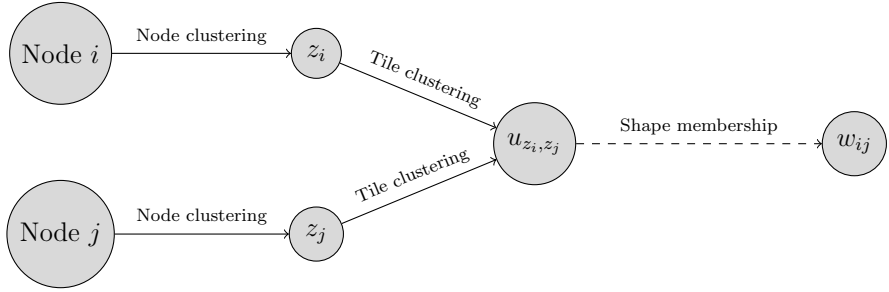
\begin{figure}[ht!]
    \centering
    \begin{tikzpicture}[node distance=2cm, auto, scale=0.85, transform shape]
      \node[circle, draw, fill=gray!30] (uz) {$u_{z_i, z_j}$};
      \node[circle, draw, fill=gray!30] (wij) [right of=uz, xshift = 3cm] {$w_{ij}$};
      \path[dashed, ->] (uz) edge node[above, font=\scriptsize] {Shape membership} (wij);
    
    \node[circle, draw, fill=gray!30] (zi) [above left of=uz, xshift = -2cm] {$z_i$};
      \node[circle, draw, fill=gray!30] (nodei) [left of=zi, xshift = -2cm] {Node $i$};
      \path[->] (nodei) edge node[above, font=\scriptsize] {Node clustering} (zi) ;
      \path[->] (zi) edge node[above, font=\scriptsize, sloped] {Tile clustering} (uz);
    
    \node[circle, draw, fill=gray!30] (zj) [below left of=uz, xshift = -2cm] {$z_j$};
      \node[circle, draw, fill=gray!30] (nodej) [left of=zj, xshift = -2cm] {Node $j$};
      \path[->] (nodej) edge node[above, font=\scriptsize] {Node clustering} (zj);
      \path[->] (zj) edge node[above, font=\scriptsize, sloped] {Tile clustering} (uz);
    \end{tikzpicture}
    \caption{Illustration of the shape membership operator, $w = u\circ z^2$.}
    \label{fig:framework_rpz}
\end{figure}


Our proposed method aims to smooth a histogram representation based on nodal block averages, as defined in \eqref{block_average} and \eqref{block_average_sym}, to obtain a stochastic shape representation, based on \eqref{est_hist_shape}, in an attempt to reduce the variance of the original histogram. Assuming that $n=hk$ where $h$ is the width of the blocks, the shape average is the average of similar behaving blocks
$$ \bar{\bar{A}}_{c}(w)=\frac{1}{\left|S_c\right|} \sum_{(a,b)\in S_c} \bar{A}_{ab} = \frac{1}{\left|S_c\right|h^2} \sum_{(i,j) : w(i,j)\in S_c} A_{i j},$$
where $S_c := \left\{(a,b): \bar{A}_{ab} = \theta_c \right\}$. For any $Q=\left\{Q_{c}\right\} \in \mathbb{R}^{s}$, $z \in \mathcal{Z}_{n, k}$ and $u\in\mathcal{U}_{k,s}$, define the objective function
$$L(Q, w=u\circ z^2)=\sum_{c} \sum_{(a,b)\in u^{-1}(c)\times u^{-1}(c))}\sum_{(i,j)\in z^{-1}(a) \times z^{-1}(b)}\left(A_{i j}-Q_{c}\right)^2 .$$
For any optimizer
$$(\hat{Q}, \hat{w}=\hat{u}\circ \hat{z}^2) \in \underset{Q \in \mathbb{R}^{s}, u\in \mathcal{U}_{k,s}, z \in \mathcal{Z}_{n, k}}{\operatorname{argmin}} L(Q, u\circ z^2),$$
we define the estimator of $\theta_{ij}$ to be 
$$\hat{\theta}_{i j}=\hat{Q}_{\hat{w}(i,j)},$$
where $\hat{\theta}_{i j}=\hat{\theta}_{j i}$ for $i<j$ and $\hat{\theta}_{i i}=0$. The procedure can be understood as first clustering the data by an estimated $\hat{z}$, estimating the block averages, and then estimating the model parameters via averages of block averages. By the least squares formulation, it is easy to observe the following property.
\begin{Proposition}
\label{minsol}
For any minimizer $(\hat{Q}, \hat{w})$, the entries of $\hat{Q}$ has representation
$$
\hat{Q}_{c}=\bar{\bar{{A}}}_{c}(\hat{w}),
$$
for all $c \in[s]$.
\end{Proposition}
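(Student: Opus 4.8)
The plan is to exploit the fact that $L(Q,w)$, for a \emph{fixed} link-community map $w=u\circ z^2$, is a separable sum of strictly convex scalar quadratics, one per shape index $c$, and that the least-squares fit of a single constant to a collection of numbers is their arithmetic mean.

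First I would observe that since $(\hat Q,\hat w)$ minimizes $L$ jointly over $Q\in\mathbb{R}^s$, $u\in\mathcal{U}_{k,s}$ and $z\in\mathcal{Z}_{n,k}$, it must in particular minimize $Q\mapsto L(Q,\hat w)$ with $\hat w=\hat u\circ\hat z^2$ held fixed: otherwise one could replace $\hat Q$ by a minimizer of $L(\cdot,\hat w)$ and strictly decrease the objective while keeping $\hat w$, contradicting joint optimality. So it suffices to characterize $\arg\min_{Q\in\mathbb{R}^s}L(Q,\hat w)$.

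Next I would rewrite $L(Q,\hat w)=\sum_{c=1}^{s}\ell_c(Q_c)$, where
$$\ell_c(q)=\sum_{(i,j):\,\hat w(i,j)=c}\bigl(A_{ij}-q\bigr)^2$$
depends on the single coordinate $Q_c$ only; hence the minimization decouples across $c$, and for each $c$ it remains to minimize the one-dimensional quadratic $\ell_c$. Since each shape $S_c$ is a nonempty union of $k^{-1}$-blocks, the number $N_c$ of edge pairs with $\hat w(i,j)=c$ is strictly positive, so $\ell_c$ is strictly convex with a unique stationary point. Setting $\ell_c'(q)=-2\sum_{(i,j):\,\hat w(i,j)=c}(A_{ij}-q)=0$ gives $q=N_c^{-1}\sum_{(i,j):\,\hat w(i,j)=c}A_{ij}$, and identifying $N_c=|S_c|h^2$ from the block structure (each of the $|S_c|$ blocks carrying label $c$ pools $h^2$ node pairs) this is exactly the shape average $\bar{\bar A}_c(\hat w)$ of \eqref{est_hist_shape}. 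As this holds for every $c\in[s]$, we conclude $\hat Q_c=\bar{\bar A}_c(\hat w)$.

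The only point requiring a little care is the bookkeeping of the normalizing constant together with the diagonal and symmetry conventions — i.e., checking that the pair count underlying $\ell_c$ is consistent with the $|S_c|h^2$ (resp.\ $h(h-1)$ on diagonal blocks, with $A_{ii}=0$) normalization used to \emph{define} $\bar{\bar A}_c$ in \eqref{block_average}--\eqref{est_hist_shape}. Once the counting conventions are aligned, the statement is immediate from the scalar least-squares identity; there is no substantive obstacle, the content being simply that within each shape the estimator is forced to be the within-shape sample mean of the $A_{ij}$.
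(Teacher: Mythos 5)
Your proof is correct, but it takes a genuinely different route from the paper's. You argue directly on the quadratic objective: joint optimality of $(\hat Q,\hat w)$ forces $\hat Q$ to minimize $Q\mapsto L(Q,\hat w)$, this minimization decouples into $s$ strictly convex scalar problems $\ell_c$, and the scalar least-squares identity gives the within-shape sample mean, which you then identify with $\bar{\bar A}_c(\hat w)$. The paper instead writes down the Bernoulli log-likelihood for a fixed shape assignment $w=u\circ z^2$, shows the per-shape maximizer is $\theta_c=\bar{\bar A}_c$, profiles $\theta$ out, and then appeals to the maximum profile likelihood / Kullback--Leibler machinery of \citet{wolfe2013nonparametric} (their Lemma C.9) and \citet{bickel2009nonparametric}. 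Your argument is more elementary and is actually the one that literally matches the least-squares definition of $L(Q,w)$ stated just before the proposition, whereas the paper's derivation works with the likelihood (which it loosely calls the ``least square'') and buys, in exchange, the interpretation of the clustering step as an MPLE equivalent to minimizing $\sum_{i<j}\mathrm{D}(A_{ij}\,\|\,\bar{\bar A}_{w(i,j)})$ --- a connection it uses to align with the histogram literature but which is not needed for the statement itself. You also rightly flag the normalization bookkeeping (diagonal blocks contribute $h(h-1)$ pairs with $A_{ii}=0$, versus the $|S_c|h^2$ count in \eqref{est_hist_shape}); the paper's proof glosses over the same point by setting $h_c:=|S_c|h^2$, so this is a shared convention issue rather than a defect of your argument.
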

Using the least square criterion, this estimator is clustering with operation $\hat{u}\in\mathcal{U}_{k,s}$, using a density-similarity based argument, of a histogram approximation after finding the optimal community membership $\hat{z} \in \mathcal{Z}_{n, k}$. As such, we keep a histogram framework as an estimator. However, it is known that, in non-parametric regression, histograms cannot achieve optimal convergence for Hölder smooth functions with exponent $\alpha>1$ \citep{tsybakov_2010}. Nonetheless, following the work of \citet{gao2015rate} and \citet{wolfe2013nonparametric}, who showed rates of convergence under block models and non-parametric graphon settings such as Hölder continuity, we now are going to show optimal rates results assuming a stochastic shape estimator as defined in the proposition above, both in a value and function estimation. 

\subsection{Rates of convergence}
\label{sec:meth:rates}

\textit{An upper bound for an underlying $SSM(s,k)$ with $s$ shapes and $k$ block-resolution.} Estimating the rate of convergence, when the stochastic shape model is specified, follows a similar approach as \citet{wolfe2013nonparametric, gao2015rate}. We, here, borrow their terminology. As such, the error is composed of a clustering rate term, corresponding to the $\operatorname{log}(\operatorname{max}(k,s))/n$ term and a variance term which is $s/n^2$. This yields the following rates for a stochastic shape model with $s$ shapes.
 
\begin{Theorem}
\label{theorem2.1}
For any constant $C^{\prime}>0$, there is a constant $C>0$ only depending on $C^{\prime}$, such that
$$
\frac{1}{n^2} \sum_{ij}\left(\hat{\theta}_{ij}-\theta_{ij}\right)^{2} \leq C\left(\frac{s}{n^2}+\frac{\operatorname{log}(\operatorname{max}(k,s))}{n}\right),
$$
with probability at least $1-\exp \left(-C^{\prime} n \log s\right)$, uniformly over $\theta \in \Theta_{k}$. Furthermore, we have
$$
\sup _{\theta \in \Theta_{s}} \mathbb{E}\left\{\frac{1}{n^2} \sum_{ij}\left(\hat{\theta}_{ij}-\theta_{ij}\right)^{2}\right\} \leq C_{1}\left(\frac{s}{n^2}+\frac{\operatorname{log}(\operatorname{max}(k,s))}{n}\right),
$$
for all $s\in [n^2]$ with some universal constant $C_{1}>0$ and $n>\operatorname{max}(0,k^2-s)$.
\end{Theorem}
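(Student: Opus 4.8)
The plan is to adapt the least-squares analysis of \citet{gao2015rate} and \citet{klopp2017oracle} to the additional block-to-shape clustering layer. Write $W=A-\theta$, so the $W_{ij}$ are independent, mean zero, and bounded by $1$ in absolute value, and for a link-community map $w=u\circ z^2$ let $\Pi_w$ denote the orthogonal projection, in the Frobenius inner product, onto the space of matrices that are constant on the shapes of $w$; by Proposition~\ref{minsol} and the assumption of equal-sized node groups, $\Pi_{\hat w}$ is precisely the shape-averaging operator, i.e.\ $\hat\theta=\Pi_{\hat w}A$ (up to the negligible zeroed diagonal). Since $\theta\in\Theta_s$ is realised by some feasible pair $(Q^*,w^*)$, the inequality $L(\hat Q,\hat w)\le L(Q^*,w^*)$ holds and expands to the basic inequality
\[
\|\hat\theta-\theta\|_F^2 \;\le\; 2\,\langle W,\ \hat\theta-\theta\rangle .
\]

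Next I would split the cross term into a variance term and a clustering/bias term. Using $\hat\theta=\Pi_{\hat w}W+\Pi_{\hat w}\theta$ together with the fact that $\Pi_{\hat w}$ and $I-\Pi_{\hat w}$ are complementary orthogonal projections, one obtains the identity
\[
\langle W,\ \hat\theta-\theta\rangle \;=\; \|\Pi_{\hat w}W\|_F^2 \;-\; \big\langle (I-\Pi_{\hat w})W,\ (I-\Pi_{\hat w})\theta\big\rangle .
\]
Because $\theta=\Pi_{w^*}\theta$, the vector $(I-\Pi_{\hat w})\theta$ lies in the image of the fixed rank-$s$ projection $\Pi_{w^*}$ under $I-\Pi_{\hat w}$, hence in a subspace of dimension at most $s$ determined by $(\hat w,w^*)$; writing $P_{\hat w,w^*}$ for the projection onto that subspace and using $\|(I-\Pi_{\hat w})\theta\|_F\le\|\Pi_{\hat w}W\|_F+\|\hat\theta-\theta\|_F$, Cauchy--Schwarz and absorption of the term linear in $\|\hat\theta-\theta\|_F$ give
\[
\|\hat\theta-\theta\|_F^2 \;\lesssim\; \sup_{w}\|\Pi_{w}W\|_F^2 \;+\; \sup_{w}\|P_{w,w^*}W\|_F^2 ,
\]
where the suprema run over all admissible maps $w=u\circ z^2$. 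This is the step where the data-dependence of both $\hat z$ and $\hat u$ is absorbed, at the cost of needing a deviation bound uniform over all clusterings.

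For the uniform bound, fix a map $w$: then $\|\Pi_w W\|_F^2$ is a sum of $s$ terms, each proportional to the square of a shape-average of independent centered bounded variables, hence a sub-exponential variable with mean of order $s$, so Bernstein's inequality yields $\mathbb{P}(\|\Pi_w W\|_F^2\ge C(s+t))\le e^{-ct}$ for $t\gtrsim s$; the same holds for $\|P_{w,w^*}W\|_F^2$ via the Hanson--Wright inequality, since $P_{w,w^*}$ has rank at most $s$. The number of equal-sized node partitions is at most $k^{n}$ and the number of block clusterings $u\in\mathcal{U}_{k,s}$ is at most $s^{k^2}$, so there are at most $\exp(n\log k+k^2\log s)$ admissible maps (and correspondingly many projections $P_{w,w^*}$). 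Taking $t$ of order $n\log(\max(k,s))$ and using the hypotheses $s\le k^2$ and $n>k^2-s$ to absorb the $k^2\log s$ bookkeeping term, the union-bounded failure probability is at most $\exp(-C'n\log s)$, and on its complement $\|\hat\theta-\theta\|_F^2\lesssim s+n\log(\max(k,s))$; dividing by $n^2$ gives the stated high-probability bound. The expectation bound then follows by integrating this tail against the deterministic bound $n^{-2}\|\hat\theta-\theta\|_F^2\le1$, the failure probability being negligible next to $\tfrac{s}{n^2}+\tfrac{\log(\max(k,s))}{n}$.

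The main obstacle is the middle paragraph: since $\hat z$ and $\hat u$ are both data-chosen, the bias $(I-\Pi_{\hat w})\theta$ cannot be controlled pointwise and must be handled through the orthogonality identity together with the observation that it lives in an $s$-dimensional subspace (so that the bias contributes only $O(s)$, matching the $s/n^2$ rate rather than, say, $s^2/n^2$); and the counting must be sharp enough that the clustering complexity contributes only $\log(\max(k,s))/n$ to the rate --- this is exactly where the structural hypotheses $s\le k^2$ and $n>\max(0,k^2-s)$ enter.
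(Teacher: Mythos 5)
Your overall architecture is the same as the paper's: the least-squares basic inequality $\|\hat\theta-\theta\|^2\le 2\langle A-\theta,\hat\theta-\theta\rangle$, an intermediate quantity built from the estimated partition (your $\Pi_{\hat w}\theta$ is exactly the paper's $\tilde\theta$, and your $\Pi_{\hat w}W$ is $\hat\theta-\tilde\theta$), per-shape sub-exponential averages controlled by Bernstein uniformly over all admissible clusterings, a union bound over $\mathcal{W}_{n,s}$, and integration of the tail against the trivial bound $n^{-2}\|\hat\theta-\theta\|^2\le 1$ for the expectation statement. Your orthogonal-projection identity plus Hanson--Wright for the bias-direction term is a tidier but essentially equivalent packaging of the paper's Lemmas \ref{Lemma4.3}--\ref{Lemma4.1} (which use a $1/2$-net of the $s$-dimensional piecewise-constant ball and a per-$w$ Hoeffding bound for the unit vector in the direction of $\tilde\theta-\theta$); both routes deliver the same $s+n\log(\max(k,s))$ envelope.

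The genuine gap is in the entropy count. You bound the number of admissible maps by the number of labeled pairs $(z,u)$, i.e.\ $k^n s^{k^2}$, and assert that $k^2\log s$ is absorbed using $s\le k^2$ and $n>k^2-s$. But $k^2<n+s$ only gives $k^2\log s\le n\log s+s\log s$, and the $s\log s$ piece is \emph{not} $O\left(s+n\log(\max(k,s))\right)$ in the regime $s\gg n$, which the theorem explicitly allows (any $s\in[n^2]$; e.g.\ $s=k^2=n^{3/2}$ makes your union-bound cost exceed the target rate by a $\log n$ factor, so the high-probability bound $C\left(s/n^2+\log(\max(k,s))/n\right)$ would not follow). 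The paper avoids this by bounding the cardinality of the set of \emph{distinct} composite maps $w=u\circ z^2$ rather than labeled pairs: exploiting surjectivity/relabeling invariance of $u$ reduces the exponent on $s$ from $k^2$ to roughly $k^2-s+1$, and it is precisely here that the hypothesis $n>\max(0,k^2-s)$ does its work, yielding $\left|\mathcal{W}_{n,s}\right|\le\exp\left(n\log(\max(k,s))\right)$ and hence the stated rate and failure probability $\exp(-C'n\log s)$. To repair your argument you need this refined count (or an equivalent quotient-by-relabeling step) in place of the raw $s^{k^2}$ bound; the rest of your proof then goes through unchanged.
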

This is similar to the results from \citet{gao2015rate} (Theorem 2.1). This is not a surprise as, in non-parametric estimation, the rate of convergence follows the form 
$$\frac{(\#\text{parameters})+\log (\#\text{models})}{\#\text{samples}},$$
as described in \citet{tsybakov_2010, gao2021minimax}. Following this description, we have that $s/n^2$ correspond to the non-parametric rate while the $\operatorname{log}(\operatorname{max}(k,s))/n$ term correspond to the clustering rate of the stochastic shape model. This last term and the similarity of the rates between Theorem \ref{theorem2.1} and \citet{gao2015rate} should not come as a surprise, as we have that the number of shape parameters and block parameters behaves similarly if we assume the stochastic shape model to be a stochastic block model. It also follows that the different regimes of rates described just before behave similarly to \citet{gao2015rate} in the sense that the non-parametric rate dominates when $s$ is large and the clustering rate dictates the convergence when $s$ is small. 

\textit{A lower bound for an underlying $SSM(s,k)$ with $s$ shapes and $k$ block-resolution.} Proving that the lower bound's rate is the same as the upper bound asymptotically and up to a constant will show that the rate of convergence from Theorem \ref{theorem2.1} cannot be improved. Getting such minimax bounds, we would achieve rate optimality. As described in \citet{castillo2022uniform}, minimax bounds can be represented by a decreasing function $\tau(n)$ such that
$$\operatorname{sup}_{\theta\in\Theta_{s}} \mathbb{E}\left\{\frac{1}{n^2} \sum_{ij}\left(\hat{\theta}_{ij}-\theta_{ij}\right)^{2}\right\}\geq \tau(n),$$
over all estimators.
Thus, taking a subset of  the parameter space $\Theta_s$ will not increase the lower bound as we are considering the supremum. Consequently, it is enough to consider a subclass $\Theta_s^{'}\subset\Theta_s$ of this parameter subspace as a lower bound of this space will be lower than the supremum over the original set.
This is key for us to get lower bounds, as we can get them directly in \citet{gao2015rate}. Indeed, as SBMs is a sub-model of SSMs, any lower bound obtained assuming an underlying SBMs generating models with $k$ blocks can be adapted into an SSM generating model with a number of shapes $s=k^2$. As such, lower bounds from \citet{gao2015rate} and their underlying construction can be adapted to an SSM generating mechanism, call it $\Theta_{s, LB}$.
As follows, we provide a similar reasoning and discussion as \citet{klopp2017oracle} as to how they adapted the proof to their framework. The main difference from their proof is that now the degrees of freedom are smaller as we go from $k^2$ from $s$. Building on these insights, we obtain the following theorem, analogous to \citet{gao2015rate}.
\begin{Theorem}
\label{ssm:lwbnd}
    There exists a universal constant $C>0$, such that
    $$
    \inf _{\hat{\theta}} \sup _{\theta \in \Theta_s} \mathbb{P}\left\{\frac{1}{n^2} \sum_{i, j \in[n]}\left(\hat{\theta}_{i j}-\theta_{i j}\right)^2 \geq C\left(\frac{s}{n^2}+\frac{\operatorname{log}(\operatorname{max}(k,s))}{n}\right)\right\} \geq 0.8,
    $$
    and
    $$
    \inf _{\hat{\theta}} \sup _{\theta \in \Theta_s} \mathbb{E}\left\{\frac{1}{n^2} \sum_{i, j \in[n]}\left(\hat{\theta}_{i j}-\theta_{i j}\right)^2\right\} \geq C\left(\frac{s}{n^2}+\frac{\operatorname{log}(\operatorname{max}(k,s))}{n}\right),
    $$
    for any $s \in[n^2]$ and $n>\operatorname{max}(0,k^2-s)$.
\end{Theorem}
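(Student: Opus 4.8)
The plan is to establish the two terms of the lower bound separately and then combine them, following the template of the stochastic-block-model lower bound in \citet{gao2015rate} and its reworking in \citet{klopp2017oracle}. Since passing to a subfamily of $\Theta_s$ only decreases the minimax risk, it suffices, for each term, to exhibit a well-chosen subfamily of $\Theta_s$ on which no estimator can beat that term with probability bounded away from zero. Throughout I work in the non-degenerate regime in which $SSM(s,k)$ genuinely has $s$ shapes, i.e.\ $s\le\binom{k+1}{2}$ so that a surjective symmetric $u\in\mathcal U_{k,s}$ exists, together with the stated $n>\max(0,k^2-s)$ and $h=n/k>1$, and with $s\ge 2$ (for $s=1$ the model collapses to estimating a single Bernoulli parameter). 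Note that then $s\le k^2$, hence $\log k\le\log\max(k,s)\le 2\log k$, so the clustering term $\log\max(k,s)/n$ has the same order as $\log k/n$ and no separate ``$\log s$'' construction is needed.

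For the variance term $\tfrac{s}{n^2}$ I would reproduce the ``$k^2/n^2$'' argument of \citet[Thm.~2.1]{gao2015rate} with the single change --- exactly the point stressed before the statement --- that the number of free real parameters is $s$ rather than $\binom{k+1}{2}$. Fix a balanced node partition $z_0\in\mathcal Z_{n,k}$ and a fixed surjective symmetric $u_0\in\mathcal U_{k,s}$ whose $s$ shapes are as balanced as possible, so that each shape carries $\asymp n^2/s$ edge variables; freeze $(z_0,u_0)$ and let $\theta$ range over the hypercube $\{\,1/2,\ 1/2+\omega\,\}^s$ with $\omega\asymp\sqrt{s}/n$ (legitimate as $\omega\le 1/4$ once $s\lesssim n^2$, and if $s$ is larger the claimed bound is of constant order and holds trivially). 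The Kullback--Leibler divergence between two members differing in a single coordinate is $O\!\big((n^2/s)\,\omega^2\big)=O(1)$, so a Varshamov--Gilbert packing of $\{0,1\}^s$ together with Fano's inequality (equivalently, Assouad's lemma applied coordinatewise) yields $\inf_{\hat\theta}\sup\mathbb P\{\tfrac1{n^2}\sum_{ij}(\hat\theta_{ij}-\theta_{ij})^2\ge c\,s/n^2\}\ge 0.9$ on this subfamily, the per-coordinate normalised separation being $\tfrac1{n^2}\cdot\tfrac{n^2}{s}\cdot\omega^2$, summed over the $s$ coordinates.

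For the clustering term I would instead freeze $\theta$ and $u$ and vary $z$. Choose $u$ and $\theta$ so that the induced $k\times k$ block-probability matrix $B_{ab}:=\theta_{u(a,b)}$ is a relabelling of the \emph{generic} matrix used in the clustering lower bound of \citet{gao2015rate} and \citet{klopp2017oracle}: entries in $\{1/2,\,1/2+\omega\}$ with $\omega\asymp\sqrt{\log k/n}$, arranged so that distinct rows of $B$ are $\ell_2$-separated by $\asymp\omega\sqrt{k}$. All $s$ shape labels can still be made to occur --- if the definition of $\Theta_s$ tolerates equal shape values one simply sets the surplus $\theta_c$ to $1/2$, and otherwise one distributes the remaining values over $O(1)$ block-pairs per block, which disturbs the row separation only by $O(\omega)$; this is the point at which $s\le\binom{k+1}{2}$ is used. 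With $(u,\theta)$ frozen, the subfamily indexed by $z\in\mathcal Z_{n,k}$ is exactly a stochastic block model on $k$ blocks with block matrix $B$, so the clustering lower bound of \citet[Thm.~2.1]{gao2015rate} transfers essentially verbatim --- Fano over the $\asymp k^{\,n}$ node labellings, using that moving one node across blocks perturbs $\asymp n$ edge probabilities by $\asymp\omega$ --- and gives $\inf_{\hat\theta}\sup\mathbb P\{\tfrac1{n^2}\sum_{ij}(\hat\theta_{ij}-\theta_{ij})^2\ge c'\log k/n\}\ge 0.9$ on this subfamily.

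Finally I would combine the two bounds. Taking $C$ a small enough multiple of $\min(c,c')$, for any $\theta$ in the subfamily matching whichever of $\tfrac{s}{n^2}$ and $\tfrac{\log k}{n}$ is larger one has $C\big(\tfrac{s}{n^2}+\tfrac{\log\max(k,s)}{n}\big)\le 3C\max\big(\tfrac{s}{n^2},\tfrac{\log k}{n}\big)$, which for $C\le\min(c,c')/3$ is at most the threshold appearing in the corresponding constant-probability bound above; hence $\sup_{\theta\in\Theta_s}\mathbb P\{\tfrac1{n^2}\sum_{ij}(\hat\theta_{ij}-\theta_{ij})^2\ge C(\tfrac{s}{n^2}+\tfrac{\log\max(k,s)}{n})\}\ge 0.9\ge 0.8$, which is the first display; the expectation bound follows from Markov's inequality after shrinking $C$ by a further factor. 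Together with Theorem~\ref{theorem2.1} this gives the asserted rate optimality. The step I expect to be the main obstacle is the explicit construction of the frozen map $u$ in the clustering part: one must realise all $s$ shapes through a symmetric element of $\mathcal U_{k,s}$ while keeping the induced $k\times k$ block matrix separated enough for the \citet{gao2015rate} packing, and then check that the resulting family really lies in $\Theta_s$ under whatever balance and identifiability constraints its definition imposes; everything else is a transcription of the block-model argument with the parameter count $\binom{k+1}{2}$ replaced by $s$, plus the elementary observation that $\log\max(k,s)\asymp\log k$ on the range $s\le\binom{k+1}{2}$.
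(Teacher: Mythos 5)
Your proposal is correct and follows essentially the same route as the paper: reduce to subfamilies of $\Theta_s$ and transplant the two-part lower-bound construction of \citet{gao2015rate} (as adapted by \citet{klopp2017oracle}), with $s$ free shape values yielding the $s/n^2$ term and the frozen-$(u,\theta)$ assignment-packing construction with probabilities $\tfrac{1}{2}+c\sqrt{\log(\max(k,s))/n}$ yielding the clustering term, then combining via the maximum and Markov's inequality. Your write-up is in fact more explicit than the paper's brief argument (which only states the modified connection probabilities and the reduced degrees of freedom), but the underlying approach is the same.
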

To obtain this result, we only modify the construction in \citet{gao2015rate}
of the connection probabilities of $\Theta_{s, LB}$ such that they are defined as $\frac{1}{2} + c \sqrt{\frac{\operatorname{log}(\operatorname{max}(k,s))}{n}}\omega_a$ with suitably chosen $\omega_a \in \{0,1\}$ and $c>0$ is small enough, similar to \citet{klopp2017oracle} adaptation.

\textit{Hölder-smooth graphons.}
As described in the previous section, the graphon estimation is also composed of a clustering rate term and a variance term of a similar form.
The assumption of $\mathcal{H}(\alpha, M)$-smoothness introduces another contribution to the error: the bias term.
As $f$ is continuous, it is Riemann integrable. Thus, any estimator $\hat{f}$ can be viewed as a Riemann approximation of $f$. As such, the bias term illustrates the error induced by estimating a continuous function by an histogram.
In a stochastic block framework, like \citet{olhede2014network, gao2015rate} or \citet{klopp2017oracle}, the bias is expressed as the distance of an \textit{oracle estimator} to the true function.
This oracle assumes access to additional information, making it an ideal estimator.
In the previously cited articles, this corresponds to a knowledge of the ordering of $\{\xi\}$, which, consequently, leads to assuming the estimator is a regular stochastic block model, i.e. blocks all have the same size \citep{olhede2014network}. In this resulting $k$-block stochastic block estimator, let $z^*$ be the ideal mapping from latent variables to their block memberships. Following this, one can show that the bound on the bias is depending on the maximum distance between latent variables $\{\xi_i\}$, call it $D$, the diameter. As the estimator is a regular stochastic block model with $k$ blocks, this leads to the following bound on the diameter
$$D=\operatorname{max}_{i,j\in [n]:z^*(i)=z^*(j)}\left(|\xi_i-\xi_j|\right)\leq k^{-1}$$
Adapting this to our shape framework means that we specify an oracle $w^*$, as the ideal mapping from pairs of latent variables to their shape memberships, and a shape diameter as the maximum distance between two pairs of latent variables $(\xi_i, \xi_j)$ and $(\xi_u,\xi_v)$ for $(i,j)\neq (u,v)$ belonging to the same shape.
As such, we set for $i,j,u,v\in[n]$
\begin{align}
\label{diamBeta}
D_{w^*} = \operatorname{max}_{(i,j)\neq (u,v):w^*(i,j)=w^*(u,v)}\left(|\xi_i-\xi_u|+|\xi_j-\xi_v|\right) \propto s^{-\frac{\beta}{2}},  
\end{align}
where $D$ is the biggest diameter of all shapes (i.e. its maximum distance between two points). Here, $\beta$ represents the degree up to which we do not know the relationship between the number of shapes $s$ and the diameter $D$, assuming an underlying stochastic shape model.
This shows how one should adapt modelling assumptions when going from a block perspective to a shape one.
The following lemma gives a bound on the bias term, i.e. with respect to the oracle $w^*$ shape assignment.

\begin{Lemma}
\label{lemma2.1}
There exists a constant $\beta>1$ and $w^{*} \in \mathcal{W}_{n, s}$, such that $D_{w^*}\propto s^{-\frac{\beta}{2}}$ and
$$
\frac{1}{n^2} \sum_{c\in[s]}\sum_{\left\{(i,j): w^{*}(i,j)=c \right\}}\left(\theta_{ij}-\bar  {\bar{\theta}}_{c}\left(w^{*}\right)\right)^{2} \leq C M^{2}\left(\frac{1}{s^{\beta}}\right)^{\alpha \wedge 1},
$$
holds for some universal constant $C>0$.
\end{Lemma}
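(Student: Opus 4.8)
\textit{Approach.} The plan is to adapt the oracle step of \citet{gao2015rate} (cf.\ also \citet{olhede2014network}) to the shape setting: first exhibit an oracle shape map $w^{*}=u^{*}\circ (z^{*})^{2}\in\mathcal{W}_{n,s}$ whose shape diameter $D_{w^{*}}$, defined in \eqref{diamBeta}, is of order $s^{-\beta/2}$, and then read off the bias bound from $(\alpha,M)$-Hölder continuity. For $z^{*}$ I take the regular node partition induced by the order statistics $\xi_{(1)}\le\cdots\le\xi_{(n)}$, assigning the $h$ smallest latent values to block $1$, the next $h$ to block $2$, and so on (with $n=hk$), exactly the oracle of \citet{olhede2014network}. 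For $u^{*}$ I group the $k^{2}$ cells $\omega_{ab}$ into $s$ spatially contiguous, near-square super-cells that respect the symmetry $\omega_{ab}\leftrightarrow\omega_{ba}$ (tile the index triangle $\{a\le b\}$ into runs of $\asymp k/\sqrt{s}$ consecutive rows and columns and reflect); since the shapes must be unions of $k^{-1}$-cells this forces $s\le k^{2}$, which together with the mild requirement $h\gtrsim\log n$ below is compatible with the hypothesis $n>\max(0,k^{2}-s)$.

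\textit{Controlling $D_{w^{*}}$.} If $w^{*}(i,j)=w^{*}(u,v)$ then the blocks of $i,u$ (resp.\ $j,v$) differ by at most $\asymp k/\sqrt{s}$ indices, so it suffices to bound the $\xi$-width of a block and sum. A single block has $\xi$-width equal to the gap $\xi_{(ah)}-\xi_{((a-1)h+1)}$ of $n$ uniform order statistics over a window of size $h$; a standard maximal-spacing estimate shows $\max_{a}\big(\xi_{(ah)}-\xi_{((a-1)h+1)}\big)\lesssim k^{-1}$ once $h\gtrsim\log n$, on an event of probability $1-o(1)$ over $\{\xi_{i}\}$ (outside it, or after taking moments of the spacings, the contribution is negligible and absorbed by the joint expectation in Theorem~\ref{thm:holder}). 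Summing $\asymp k/\sqrt{s}$ such widths along each coordinate gives
$$
D_{w^{*}}=\max_{(i,j)\neq(u,v):\,w^{*}(i,j)=w^{*}(u,v)}\big(|\xi_{i}-\xi_{u}|+|\xi_{j}-\xi_{v}|\big)\ \lesssim\ \frac{k}{\sqrt{s}}\cdot k^{-1}\ =\ s^{-1/2},
$$
i.e.\ $D_{w^{*}}\propto s^{-\beta/2}$ with $\beta=1$; the statement keeps $\beta$ nominal ($>1$ allowed) to accommodate finer shapes when extra structure is present (e.g.\ a genuine $SSM$), but $\beta=1$ already produces the rate of Theorem~\ref{thm:holder}.

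\textit{From the diameter to the bias.} Fix a shape $c$ and $(i,j)$ with $w^{*}(i,j)=c$. For any $(u,v)$ with $w^{*}(u,v)=c$, bounding the Euclidean distance by the $\ell_{1}$ distance and using \eqref{holder_cond},
$$
|\theta_{ij}-\theta_{uv}|=\big|f(\xi_{i},\xi_{j})-f(\xi_{u},\xi_{v})\big|\le M\big(|\xi_{i}-\xi_{u}|+|\xi_{j}-\xi_{v}|\big)^{\alpha}\le M\,D_{w^{*}}^{\,\alpha}\le M\,D_{w^{*}}^{\,\alpha\wedge 1},
$$
the last step since $D_{w^{*}}\le 1$. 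As $\bar{\bar{\theta}}_{c}(w^{*})$ is a convex combination of $\{\theta_{uv}:w^{*}(u,v)=c\}$, the same bound holds for $|\theta_{ij}-\bar{\bar{\theta}}_{c}(w^{*})|$. Squaring, summing over the partition of all $n^{2}$ index pairs induced by $w^{*}$, and dividing by $n^{2}$,
$$
\frac{1}{n^{2}}\sum_{c\in[s]}\sum_{\{(i,j):\,w^{*}(i,j)=c\}}\big(\theta_{ij}-\bar{\bar{\theta}}_{c}(w^{*})\big)^{2}\ \le\ M^{2}D_{w^{*}}^{\,2(\alpha\wedge 1)}\ \lesssim\ M^{2}\Big(\frac{1}{s^{\beta}}\Big)^{\alpha\wedge 1},
$$
which is the asserted bound, the universal constant $C$ absorbing the implied constant from the diameter estimate.

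\textit{Main obstacle.} The only genuinely technical ingredient is the diameter bound of the second step: the maximal-spacing estimate for uniform order statistics over windows of width $h=n/k$, and threading it through the ``unions of $k^{-1}$-cells'' constraint so as to land on the correct power of $s$. This is also where the randomness of $\{\xi_{i}\}$ enters, so the bound is really on a high-probability event (equivalently, after taking moments), which is harmless since the loss in Theorem~\ref{thm:holder} is in expectation jointly over $\{A_{ij}\}$ and $\{\xi_{i}\}$; a secondary bookkeeping point is to choose $u^{*}$ so that the shapes are symmetric and the shape resolution is no finer than the block resolution.
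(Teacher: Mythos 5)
Your second half (Hölder continuity plus the fact that $\bar{\bar{\theta}}_{c}(w^{*})$ is a convex combination of values of $f$ on the same shape, then square and sum) is exactly the paper's argument. Where you diverge is the first half: the paper does not construct $w^{*}$ from a regular order-statistic partition at all. It defines $w^{*}$ directly from the underlying shapes, $\left(w^{*}\right)^{-1}(c)=\{(i,j):(\xi_{i},\xi_{j})\in S_{c}\}$, and then simply invokes the modelling relation \eqref{diamBeta}, $D_{w^{*}}\propto s^{-\beta/2}$, which is where the constant $\beta>1$ comes from; the existence of such a $\beta$ is effectively an assumption about the stochastic shape structure (how the shape diameter scales with the number of shapes), not something proved from a partition construction.

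This is where your proposal has a genuine gap relative to the statement as written. Your explicit construction (regular node blocks of width $h$, grouped into $\asymp k/\sqrt{s}$-by-$\asymp k/\sqrt{s}$ contiguous super-cells, with a maximal-spacing bound giving block width $\lesssim k^{-1}$) can only ever deliver $D_{w^{*}}\asymp s^{-1/2}$, i.e.\ $\beta=1$: any symmetric region built from contiguous cells covering area of order $1/s$ has $\ell_{1}$-diameter at least of order $s^{-1/2}$, so no choice of contiguous super-cells improves the exponent. The lemma, however, asserts the existence of $\beta>1$, and downstream results (the parameter-reduction discussion after Theorem \ref{theorem2.3}, and the choice $s=\lceil n^{2\beta^{-1}/(\alpha\wedge1+1)}\rceil$) use $\beta>1$ to get strictly fewer shapes than blocks. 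Your remark that ``$\beta=1$ already produces the rate of Theorem \ref{thm:holder}'' is correct for the rate itself (with $\beta=1$ you recover the Gao--Lu--Zhou choice $s\asymp n^{2/(\alpha\wedge1+1)}$ and the same error bound), but it does not prove the lemma as stated: the case $\beta>1$ requires shapes that are unions of possibly far-apart cells on which $f$ is (nearly) constant — the level-set/SSM structure encoded in \eqref{diamBeta} — and cannot be obtained from near-square super-cells. A secondary, smaller discrepancy: the paper states the lemma deterministically and absorbs the randomness of $\{\xi_{i}\}$ into \eqref{diamBeta}, whereas your diameter bound holds only on a high-probability event over $\{\xi_{i}\}$ (you flag this, and it would be harmless for the expectation bounds, but it is an additional deviation from the statement).
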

Using this lemma and the result's framework from Theorem \ref{theorem2.1}, we have the following bound on the error 
$$\left(\frac{1}{s^\beta}\right)^{\alpha \wedge 1}+\frac{s}{n^2}+\frac{\operatorname{log}(\operatorname{max}(k,s))}{n},$$
up to a constant, depending on $M$ and $\alpha$. Here the bias-variance trade-off is clear for our method and illustrates our motivation to smooth further step-graphons based on SBM approximation. As in \citet{gao2015rate, klopp2017oracle}, we pick the best $s$ such that we obtain the classical rate of convergence of $\mathcal{H}(\alpha, M)$-smooth function \citep{tsybakov_2010}. Doing so results in the following theorem for smooth graphons.

\begin{Theorem}
\label{theorem2.3}
For $f\in \mathcal{H}(\alpha, M)$, there exists a constant $\beta>1$ such that for any $C'>0$, there exists a constant $C>0$ only depending on $C'$, $M$, $\alpha$ and $\beta$, where the following holds
    $$
    \begin{aligned}
    \frac{1}{n^2} \sum_{i j}\left(\hat{\theta}_{i j}-\theta_{i j}\right)^2
    &\leq C \left(n^{-2 \alpha /(\alpha+1)}+\frac{\operatorname{log}(n)}{n}\right),
    \end{aligned}
    $$
with probability at least $1-\exp \left(-C^{\prime} n\right)$, with $s = \left\lceil n^{\frac{2\beta^{-1}}{\alpha\wedge 1+1}}\right\rceil$. Furthermore,
$$
\sup _{f \in \mathcal{H}(\alpha, M)} \mathbb{E}\left\{\frac{1}{n^2} \sum_{i, j \in[n]}\left(\hat{\theta}_{i j}-\theta_{i j}\right)^2\right\} \leq C_1 \left(n^{-2 \alpha /(\alpha+1)}+\frac{\operatorname{log}(n)}{n}\right),
$$
for some other constant $C_1>0$ only depending on $M$. Both the probability and the expectation are jointly over $\left\{A_{i j}\right\}$ and $\left\{\xi_i\right\}$.
\end{Theorem}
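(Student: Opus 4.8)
The plan is to prove the high-probability bound first, then obtain the bound in expectation by integrating the tail: since $\hat\theta_{ij}$ is an average of Bernoulli variables (Proposition~\ref{minsol}), the per-entry loss lies in $[0,1]$, so on the complement of the good event the loss contributes at most $e^{-C'n}$, which is of smaller order than the asserted rate. For the high-probability statement I would follow the bias--variance scheme of \citet{gao2015rate}: compare the least-squares estimator to the oracle shape assignment $w^{*}$ of Lemma~\ref{lemma2.1}, control the approximation error with Lemma~\ref{lemma2.1} and the stochastic fluctuations with the machinery behind Theorem~\ref{theorem2.1}, and then optimise $s$.

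Concretely, I would fix a realisation of $\{\xi_i\}$, take $w^{*}\in\mathcal W_{n,s}$ from Lemma~\ref{lemma2.1}, and write $\bar\theta^{*}_{ij}:=\bar{\bar\theta}_{w^{*}(i,j)}(w^{*})$ and $\bar A^{*}_{ij}:=\bar{\bar A}_{w^{*}(i,j)}(w^{*})$ for the oracle shape averages of $\theta$ and of $A$. Since $(\hat Q,\hat w)$ minimises $L$ and, by Proposition~\ref{minsol}, $Q^{*}_{c}:=\bar{\bar A}_{c}(w^{*})$ minimises $L(\cdot\,,w^{*})$, one has $L(\hat Q,\hat w)\le L(Q^{*},w^{*})$; substituting $A=\theta+W$ with $W_{ij}:=A_{ij}-\theta_{ij}$ centred Bernoulli noise and expanding the squares gives
$$\sum_{ij}\left(\hat\theta_{ij}-\theta_{ij}\right)^{2}\ \le\ 2\sum_{ij}\left(\bar\theta^{*}_{ij}-\theta_{ij}\right)^{2}\ +\ 2\sum_{ij}\left(\bar A^{*}_{ij}-\bar\theta^{*}_{ij}\right)^{2}\ +\ 2\sum_{ij}\left(\hat\theta_{ij}-\bar A^{*}_{ij}\right)W_{ij}.$$
The first term is the oracle bias and is $\le CM^{2}n^{2}s^{-\beta(\alpha\wedge 1)}$ by Lemma~\ref{lemma2.1}. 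The second is a sum of at most $s$ squared block-of-blocks noise averages; since averaging the $W_{ij}$ over a shape divides the variance by the number of edge-variables in it, this term has expectation $\le s/4$ and is $O(s)$ off an event of probability $e^{-C'n}$ by Bernstein/$\chi^{2}$ concentration. The third is the genuine empirical-process term: for each fixed partition $w=u\circ z^{2}$ the map $A\mapsto\bar{\bar A}(w)$ is linear, so writing $\hat\theta$ and $\bar A^{*}$ as linear functionals of $A=\theta+W$ and applying Cauchy--Schwarz together with a union bound over $z\in\mathcal Z_{n,k}$ and $u\in\mathcal U_{k,s}$ (contributing $n\log k$ and, using $n>\max(0,k^{2}-s)$, at most $O(n\log\max(k,s))$ to the exponent), it is at most $\tfrac12\sum_{ij}(\hat\theta_{ij}-\theta_{ij})^{2}+O(s)+O(n\log\max(k,s))$ off an event of probability $e^{-C'n}$. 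Reabsorbing the first piece on the left and dividing by $n^{2}$ gives, uniformly over $f\in\mathcal H(\alpha,M)$ and over the design,
$$\frac{1}{n^{2}}\sum_{ij}\left(\hat\theta_{ij}-\theta_{ij}\right)^{2}\ \le\ C\left(\frac{1}{s^{\beta(\alpha\wedge 1)}}+\frac{s}{n^{2}}+\frac{\operatorname{log}(\operatorname{max}(k,s))}{n}\right)$$
with probability at least $1-e^{-C'n}$.

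It then remains to optimise $s$. With $s=\bigl\lceil n^{2\beta^{-1}/(\alpha\wedge 1+1)}\bigr\rceil$ the bias term equals $n^{-2(\alpha\wedge 1)/(\alpha\wedge 1+1)}$, which is $n^{-2\alpha/(\alpha+1)}$ when $\alpha\le 1$ and is at most $n^{-1}\le\log n/n$ when $\alpha>1$; since $\beta>1$, the variance term $s/n^{2}=n^{2\beta^{-1}/(\alpha\wedge 1+1)-2}$ is of strictly smaller order than the bias; and since $s$ and $k$ are polynomial in $n$ (indeed $k\le n$), the clustering term is $\le C\log n/n$. Adding these, and using $n^{-2\alpha/(\alpha+1)}\le\log n/n$ when $\alpha>1$, the right-hand side is $\le C\bigl(n^{-2\alpha/(\alpha+1)}+\log n/n\bigr)$ in all cases, which is the high-probability claim; integrating the tail jointly over $\{A_{ij}\}$ and $\{\xi_i\}$ as above then yields the bound in expectation.

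The step I expect to be the main obstacle is the empirical-process term: one must control the supremum of the noise inner product over the full class of node partitions $\mathcal Z_{n,k}$ and tile partitions $\mathcal U_{k,s}$ and show that the combinatorial cost collapses to $\log\max(k,s)/n$ rather than the naive $k^{2}\log s/n^{2}$ — this is precisely where the constraint $n>\max(0,k^{2}-s)$ is used, and it mirrors the delicate peeling/union-bound step in Theorem~\ref{theorem2.1} and in \citet{gao2015rate}. A secondary subtlety is that $w^{*}$, and hence the bias bound of Lemma~\ref{lemma2.1}, is design-dependent, so one should check (as Lemma~\ref{lemma2.1} does) that a suitable $w^{*}$ exists for almost every realisation of $\{\xi_i\}$, not merely in expectation, for the tail-integration step to be valid.
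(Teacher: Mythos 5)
Your proposal is correct and follows essentially the same route as the paper's proof of Theorem \ref{theorem2.3}: an oracle comparison at $w^{*}$, the bias bound of Lemma \ref{lemma2.1}, union-bound and covering-number control of the noise terms exactly as in the proof of Theorem \ref{theorem2.1} (Lemmas \ref{Lemma4.3}, \ref{Lemma4.2} and \ref{Lemma4.1}), the intermediate bound $C\bigl(s^{-\beta(\alpha\wedge 1)}+s/n^{2}+\log(\max(k,s))/n\bigr)$, the same choice of $s$, and tail integration for the expectation. The only differences are organisational — you compare to the empirical oracle averages $\bar{\bar{A}}_{c}(w^{*})$ and absorb the cross term via Young's inequality, whereas the paper compares to the oracle-averaged $\theta^{*}$ and solves a quadratic inequality in $\|\hat{\theta}-\theta^{*}\|$ — and the subtleties you flag (the $|\mathcal{W}_{n,s}|\leq\max(k,s)^{n}$ union bound using $n>\max(0,k^{2}-s)$, and the design-dependence of $w^{*}$) are precisely the points the paper's argument relies on.
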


Similarly to Theorem \ref{theorem2.1}, the rate of convergence is divided into two: the nonparametric rate $n^{-2 \alpha /(\alpha+1)}$ and the clustering rate $\frac{\operatorname{log}(n)}{n}$. Following our previous discussion of the various behaviors under different shape regimes for Theorem \ref{theorem2.1}, we have for $\alpha\in (0,1)$ that the non-parametric rate dominates.

A direct consequence of this theorem when comparing to methods such as \citet{olhede2014network} and \citet{gao2015rate} is the number of parameters. Indeed, we note that the ratio of parameters between their methods and ours is of order $O\left(n^{2/{\alpha\wedge 1+1}}\right)$. This illustrates the advantages that using a stochastic shape model can bring, as it provides a simpler estimator to our network data. This is further showed by our experimental results, where we also showed an exponential reduction of parameters, see section \ref{sec:synth_data}.

For the lower bound of the error when assuming $f \in \mathcal{H}(\alpha, M)$ where $\alpha, M>0$, we use a similar argument as for Theorem \ref{ssm:lwbnd} and thus can use \citet{gao2015rate} result to obtain our lower bound, necessary to prove Theorem \ref{thm:holder}.

We just proved two theorems in a value estimation setting. We now extend it to function estimation. To make the difference between the two, we now use the following notation $f(\xi_i, \xi_j) = \theta_{ij}$, similarly for $\hat{f}$ and $\tilde{f}$.
Doing so, we change the metric we consider to the so-called \textit{cut distance} in the theory of graph limits \citep{lovasz2012large}. 
Consequently, we study the estimation of $f$ in the quotient space of graphons, that is, up to permutations in $\mathcal{M}$ of the node labels. In the previous section, we obtained convergence rates showing that the difference between $\hat{f}$ and $f$ shrinks to zero as $n \rightarrow \infty$ under the assumptions above. Similar to \citet{olhede2014network} and \citet{klopp2017oracle}, we consider the mean integrated square error, shorted to MISE, and take its greatest lower bound over all possible rearrangements $\sigma \in \mathcal{M}$ ; as it was shown in \citet{lovasz2012large} that this defines a metric on the quotient space of graphons. This gives
\begin{equation}
\label{mise}
    \operatorname{MISE}\left(\hat{f},f\right)=\mathbb{E} \inf _{\sigma \in \mathcal{M}} \iint_{(0,1)^2}\left|f\left(x, y\right)-\hat{f}\left(\sigma\left(x\right), \sigma(y)\right)\right|^2 d x d y .
\end{equation}
This definition selects the closest $\hat{f}$ in the equivalence class of graphons, given the unknown ordering of the data $A$ induced by $\left\{\xi_1, \ldots, \xi_n\right\}$ in the model of \eqref{graphon}. When such a procedure is applied to methods assuming a model such as the SBM, the location index is important and can be retrieves using the latent variables, as we can see in the following definition of the empirical graphon
$$\hat{f}_{\theta}\left(x,y\right) = \theta_{\lceil n x\rceil,\lceil n y\rceil},$$
where $\theta$ is a $n\times n$ matrix with entries in $[0,1]$. \citet{klopp2017oracle} proved some very useful results using a clever mix of covering number, Hoeffding's inequality and ordered statistics, allowing a direct adaptation to our work. For this, we need their following results.
\begin{Lemma}[\citet{klopp2017oracle}]
\label{lemmaklopp3.1}
    For any graphon $f$ in the space of graphons, and any estimator $\hat{\theta}$ of $\theta$ such that $\hat{\theta}$ is a $n \times n$ matrix with entries in $[0,1]$, we have
    $$
    \mathbb{E}\left[\operatorname{MISE}\left(\hat{f}_{\hat{\theta}}, f\right)\right] \leq 2 \mathbb{E}\left[\frac{1}{n^2}\left\|\hat{\theta}-\theta\right\|^2\right]+2 \mathbb{E}\left[\operatorname{MISE}\left(\hat{f}_{\theta}, f\right)\right].
    $$
\end{Lemma}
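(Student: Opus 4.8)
The plan is to regard the permutation-invariant $L^2$ quantity underlying \eqref{mise}, namely
$$\delta(g,h)^2:=\inf_{\sigma\in\mathcal{M}}\iint_{(0,1)^2}\bigl|g(x,y)-h(\sigma(x),\sigma(y))\bigr|^2\,dx\,dy,$$
as a pseudometric on the space of graphons, run a triangle-inequality argument with the empirical graphon of the \emph{true} matrix, $\hat f_\theta$, as the intermediate point, and then convert the resulting $L^2$-type bound into the squared bound of the lemma via $(a+b)^2\le 2a^2+2b^2$ and monotonicity of expectation. Since $\operatorname{MISE}(\hat f,f)=\mathbb{E}\,\delta(f,\hat f)^2$, it suffices to prove, for every realization of $\{\xi_i\}$ and $\{A_{ij}\}$, a pointwise inequality relating $\delta(f,\hat f_{\hat\theta})^2$ to $\delta(f,\hat f_\theta)^2$ and the normalised squared Frobenius error, and then take expectations.

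\textbf{Step 1: triangle inequality for $\delta$.} Abbreviating $g(\sigma(x),\sigma(y))$ as $g(\sigma,\sigma)$, the ordinary $L^2$ triangle inequality gives, for any $\sigma_1,\sigma_2\in\mathcal{M}$,
$$\bigl\|f-\hat f_{\hat\theta}(\sigma_2\circ\sigma_1,\sigma_2\circ\sigma_1)\bigr\|_2\le\bigl\|f-\hat f_\theta(\sigma_1,\sigma_1)\bigr\|_2+\bigl\|\hat f_\theta(\sigma_1,\sigma_1)-\hat f_{\hat\theta}(\sigma_2\circ\sigma_1,\sigma_2\circ\sigma_1)\bigr\|_2,$$
and, because $\sigma_1$ is measure-preserving, the change of variables $u=\sigma_1(x)$, $v=\sigma_1(y)$ shows the last term equals $\|\hat f_\theta-\hat f_{\hat\theta}(\sigma_2,\sigma_2)\|_2$. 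Since $\mathcal{M}$ is a group under composition and $\delta(f,\hat f_{\hat\theta})\le\|f-\hat f_{\hat\theta}(\tau,\tau)\|_2$ for every $\tau$, taking the infimum over $\sigma_1$ and $\sigma_2$ (with the usual $\varepsilon$-argument if the infima are not attained) yields $\delta(f,\hat f_{\hat\theta})\le\delta(f,\hat f_\theta)+\delta(\hat f_\theta,\hat f_{\hat\theta})$. This is precisely the computation showing that $\delta$ metrises the quotient space of graphons, so it can be invoked from \citet{lovasz2012large}.

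\textbf{Steps 2--3: identify the matching term and combine.} Taking $\sigma=\mathrm{id}$ in the infimum defining $\delta(\hat f_\theta,\hat f_{\hat\theta})$ gives $\delta(\hat f_\theta,\hat f_{\hat\theta})^2\le\iint_{(0,1)^2}\bigl|\hat f_\theta(x,y)-\hat f_{\hat\theta}(x,y)\bigr|^2\,dx\,dy$. Both $\hat f_\theta$ and $\hat f_{\hat\theta}$ are block-constant on the same $n\times n$ grid, with values $\theta_{ij}$ and $\hat\theta_{ij}$ on the cell $((i-1)/n,i/n]\times((j-1)/n,j/n]$ of area $n^{-2}$, so the integral equals $n^{-2}\sum_{i,j}(\hat\theta_{ij}-\theta_{ij})^2=n^{-2}\|\hat\theta-\theta\|^2$, whence $\delta(\hat f_\theta,\hat f_{\hat\theta})\le n^{-1}\|\hat\theta-\theta\|$. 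Substituting into Step 1, squaring, and using $(a+b)^2\le 2a^2+2b^2$ gives
$$\delta(f,\hat f_{\hat\theta})^2\le 2\,\delta(f,\hat f_\theta)^2+\frac{2}{n^2}\|\hat\theta-\theta\|^2$$
for every realization; taking expectations (over $\{A_{ij}\}$ and $\{\xi_i\}$) and recalling $\operatorname{MISE}(\hat f_{\hat\theta},f)=\mathbb{E}\,\delta(f,\hat f_{\hat\theta})^2$ and $\operatorname{MISE}(\hat f_\theta,f)=\mathbb{E}\,\delta(f,\hat f_\theta)^2$ yields the stated inequality.

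\textbf{Main obstacle.} There is no genuine obstacle: the only point needing care is Step 1, i.e. that the infimum over measure-preserving bijections obeys a triangle inequality, which rests on $\mathcal{M}$ being closed under composition and on invariance of the $L^2$ norm under measure-preserving reparametrisation — exactly the statement that $\delta$ is a metric on the quotient graphon space \citep{lovasz2012large}. Everything else is the elementary inequality $(a+b)^2\le 2a^2+2b^2$ together with the exact identification of the $L^2$ distance between two $n\times n$ block-constant graphons with the $n^{-1}$-scaled Frobenius distance of their defining matrices.
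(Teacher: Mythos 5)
Your proof is correct, and it is essentially the standard argument behind this result: the paper itself does not reprove the lemma but imports it from \citet{klopp2017oracle}, where the proof is exactly the triangle inequality for the $\inf_{\sigma\in\mathcal{M}}$ $L^2$ pseudometric, the identification of the $L^2$ distance between the two empirical (block-constant) graphons $\hat f_\theta$ and $\hat f_{\hat\theta}$ with $n^{-1}\|\hat\theta-\theta\|$, and the elementary bound $(a+b)^2\le 2a^2+2b^2$ before taking expectations. Your Step 1 correctly isolates the only point requiring care (closure of $\mathcal{M}$ under composition and invariance of the $L^2$ norm under measure-preserving reparametrisation), so nothing is missing.
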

The triangle inequality from the previous lemmas allows us to combine results from the previous section to the following result to obtain the rates of convergence for function estimation. A direct consequence of this lemma is that graphon optimal estimation rates are slower than if we were to estimate the probability matrix associated to it. This is formalized by the following lemma, adapted from \citet{klopp2017oracle}.

\begin{Proposition}[\citet{klopp2017oracle}]
    Let $f \in \mathcal{H}(\alpha, M)$ where $\alpha, M>0$. Then
    $$
    \mathbb{E}\left[\operatorname{MISE}\left(\hat{f}_{\theta}, f\right)\right] \leq C {n^{-\alpha \wedge 1}},
    $$
    where the constant $C$ depends only on $M$ and $\alpha$.
\end{Proposition}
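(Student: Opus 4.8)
The plan is to use the infimum over $\mathcal{M}$ to \emph{sort} the latent variables via an explicit measure‑preserving interval‑exchange map, so that the empirical graphon becomes, cell by cell, the value of $f$ at a sorted latent pair; then one controls the error with the Hölder bound \eqref{holder_cond}, and the only probabilistic input needed is a second‑moment estimate for uniform order statistics. Since $\hat f_\theta(x,y)=\theta_{\lceil nx\rceil,\lceil ny\rceil}=f(\xi_{\lceil nx\rceil},\xi_{\lceil ny\rceil})$, the quantity to bound is $\mathbb{E}\inf_{\sigma\in\mathcal{M}}\iint|f(x,y)-f(\xi_{\lceil n\sigma(x)\rceil},\xi_{\lceil n\sigma(y)\rceil})|^2\,dx\,dy$.

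First I would reduce to the case $\alpha\le1$: on the bounded domain $(0,1)^2$ one has $\mathcal{H}(\alpha,M)\subseteq\mathcal{H}(1,M')$ when $\alpha>1$, with $M'=M(\sqrt2)^{\alpha-1}$ (because $t^\alpha\le(\sqrt2)^{\alpha-1}t$ for $0\le t\le\sqrt2$), so it suffices to prove the bound with $\alpha\wedge1$ in place of $\alpha$ assuming $\alpha\le1$. Let $\xi_{(1)}\le\cdots\le\xi_{(n)}$ be the order statistics, let $\pi$ be the (a.s.\ unique) permutation with $\xi_{\pi(i)}=\xi_{(i)}$, and set $I_i=[(i-1)/n,i/n)$. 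Define $\sigma\in\mathcal{M}$ by $\sigma(x)=x+(\pi(i)-i)/n$ for $x\in I_i$; this is a measure‑preserving bijection of $[0,1)$ since it is a piecewise translation permuting the $n$ equal cells $\{I_i\}$, and for $x\in I_i$ it gives $\lceil n\sigma(x)\rceil=\pi(i)$, hence $\hat f_\theta(\sigma(x),\sigma(y))=\theta_{\pi(i)\pi(j)}=f(\xi_{(i)},\xi_{(j)})$ on $I_i\times I_j$. Bounding the infimum by the value at this (data‑dependent) $\sigma$, which is legitimate inside the expectation, and applying \eqref{holder_cond} with the Euclidean distance dominated by the $\ell_1$ distance,
$$
\operatorname{MISE}(\hat f_\theta,f)\ \le\ \sum_{i,j\in[n]}\int_{I_i}\!\!\int_{I_j}\bigl|f(x,y)-f(\xi_{(i)},\xi_{(j)})\bigr|^2\,dx\,dy\ \le\ 2M^2\sum_{i,j\in[n]}\int_{I_i}\!\!\int_{I_j}\bigl(|x-\xi_{(i)}|^{2\alpha}+|y-\xi_{(j)}|^{2\alpha}\bigr)\,dx\,dy,
$$
using $(a+b)^{2\alpha}\le2(a^{2\alpha}+b^{2\alpha})$ for $0<\alpha\le1$.

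Taking expectations and integrating out the spectator variable in each term, the proof reduces to the uniform moment bound $\sup_{i\in[n]}\sup_{x\in I_i}\mathbb{E}\,|x-\xi_{(i)}|^{2\alpha}\le Cn^{-\alpha}$: granting it, each of the $n^2$ cells contributes at most $2M^2\cdot n^{-2}\cdot Cn^{-\alpha}$, so summing yields $\mathbb{E}[\operatorname{MISE}(\hat f_\theta,f)]\le C'M^2n^{-\alpha}$, i.e.\ $n^{-(\alpha\wedge1)}$ after the reduction. For the moment bound I would write $|x-\xi_{(i)}|\le 1/n+|i/n-\xi_{(i)}|$, so $\mathbb{E}\,|x-\xi_{(i)}|^{2\alpha}\le2\bigl(n^{-2\alpha}+\mathbb{E}\,|i/n-\xi_{(i)}|^{2\alpha}\bigr)$; then Jensen ($t\mapsto t^\alpha$ concave, $\alpha\le1$) gives $\mathbb{E}\,|i/n-\xi_{(i)}|^{2\alpha}\le\bigl(\mathbb{E}\,|i/n-\xi_{(i)}|^2\bigr)^\alpha$; and since $\xi_{(i)}\sim\operatorname{Beta}(i,n+1-i)$ has mean $i/(n+1)$ and variance $\tfrac{i(n+1-i)}{(n+1)^2(n+2)}\le\tfrac{1}{4(n+2)}$, a bias–variance split gives $\mathbb{E}\,|i/n-\xi_{(i)}|^2\le1/n$, hence $\mathbb{E}\,|i/n-\xi_{(i)}|^{2\alpha}\le n^{-\alpha}$ and $\mathbb{E}\,|x-\xi_{(i)}|^{2\alpha}\le2(n^{-2\alpha}+n^{-\alpha})\le4n^{-\alpha}$, with absolute constants.

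I expect the only delicate point to be the construction of a genuinely \emph{measure‑preserving} bijection $\sigma$ realigning the $n$ cells of $\hat f_\theta$ with the sorted latent positions — the interval‑exchange map above does exactly this — after which everything is a routine cellwise Hölder estimate. The point worth stressing is why no logarithmic factor arises: one bounds $\mathbb{E}\,|i/n-\xi_{(i)}|^{2\alpha}$ rather than $\bigl(\sup_i|i/n-\xi_{(i)}|\bigr)^{2\alpha}$, exploiting the $O(n^{-1/2})$ \emph{typical} fluctuation of uniform order statistics together with concavity of $t\mapsto t^\alpha$, so that integrating before bounding beats the $O(\sqrt{\log n/n})$ uniform deviation that a union bound would incur.
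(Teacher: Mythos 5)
Your argument is correct and is essentially the standard proof of this result: the paper itself does not prove the proposition but imports it from \citet{klopp2017oracle}, whose argument likewise rearranges the cells of the empirical graphon by the order statistics of the $\xi_i$, applies the H\"older bound cellwise, and controls $\mathbb{E}\,|i/n-\xi_{(i)}|^{2(\alpha\wedge 1)}$ via moments of Beta order statistics, exactly as you do. The only cosmetic point is the diagonal cells (where $\theta_{ii}$ need not equal $f(\xi_i,\xi_i)$), but these have total measure $1/n\le n^{-\alpha\wedge 1}$ and so do not affect the bound.
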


From Lemma \ref{lemmaklopp3.1}, we first have the bound from for the right term from the previous proposition and the bound for the left term follows from Theorem \ref{theorem2.3}. Thus, we obtain the following result.

\begin{Theorem}
\label{theorem_func}
For $f\in \mathcal{H}(\alpha, M)$, there exists a constant $\beta>1$ such that for any $C'>0$, there exists a constant $C>0$ only depending on $C'$, $M$, $\alpha$ and $\beta$, where the following holds
    $$
\begin{aligned}
\operatorname{MISE}\left(\hat{f}_{\hat{\theta}}, f\right)
&\leq C \left(n^{-2 \alpha /(\alpha+1)}+\frac{\operatorname{log}(n)}{n}+n^{-\alpha \wedge 1}\right),
\end{aligned}
$$
with probability at least $1-\exp \left(-C^{\prime} n\right)$, with $s = \left\lceil n^{\frac{2\beta^{-1}}{\alpha\wedge 1+1}}\right\rceil$. Furthermore,
$$
\sup _{f \in \mathcal{F}_\alpha(M)} \mathbb{E}\left[\operatorname{MISE}\left(\hat{f}_{\theta}, f\right)\right] \leq C_1 \left(n^{-2 \alpha /(\alpha+1)}+\frac{\operatorname{log}(n)}{n}+n^{-\alpha \wedge 1}\right),
$$
for some other constant $C_1>0$ only depending on $M$. Both the probability and the expectation are jointly over $\left\{A_{i j}\right\}$ and $\left\{\xi_i\right\}$.
\end{Theorem}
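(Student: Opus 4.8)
The plan is to split the function-estimation error into a \emph{value}-estimation error plus an \emph{oracle discretisation} error exactly along the lines already prepared by Lemma~\ref{lemmaklopp3.1}, then quote Theorem~\ref{theorem2.3} for the first piece and the preceding Proposition of \citet{klopp2017oracle} (the bound $\mathbb{E}[\operatorname{MISE}(\hat f_\theta,f)]\le C n^{-\alpha\wedge 1}$) for the second. Throughout, $s$ is pinned to the value $s=\lceil n^{2\beta^{-1}/(\alpha\wedge 1+1)}\rceil$ of Theorem~\ref{theorem2.3}, with the same $\beta>1$; this is the only place the choice of block-resolution enters, and the side constraints on $n$ are inherited verbatim.

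For the bound in expectation the argument is immediate. By Lemma~\ref{lemmaklopp3.1},
$$\mathbb{E}\!\left[\operatorname{MISE}(\hat f_{\hat\theta},f)\right]\le 2\,\mathbb{E}\!\left[\frac{1}{n^2}\|\hat\theta-\theta\|^2\right]+2\,\mathbb{E}\!\left[\operatorname{MISE}(\hat f_\theta,f)\right].$$
The first expectation is exactly the quantity bounded by the second display of Theorem~\ref{theorem2.3}, hence at most $C(n^{-2\alpha/(\alpha+1)}+\log n/n)$ uniformly over $f\in\mathcal{H}(\alpha,M)$ for the stated $s$; the second is at most $Cn^{-\alpha\wedge 1}$ by the preceding Proposition of \citet{klopp2017oracle}. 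Adding the two and relabelling the constant (which then depends only on $M$, $\alpha$, $\beta$) gives the claimed $C_1(n^{-2\alpha/(\alpha+1)}+\log n/n+n^{-\alpha\wedge 1})$.

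For the high-probability statement (read, as is natural, in terms of the integrated squared error, i.e.\ the quantity inside the expectation in \eqref{mise}, since $\operatorname{MISE}$ itself is deterministic) I would first promote Lemma~\ref{lemmaklopp3.1} to its pathwise form: for every realisation of $\{A_{ij}\}$ and $\{\xi_i\}$,
$$\inf_{\sigma\in\mathcal{M}}\iint_{(0,1)^2}\!\big|f(x,y)-\hat f_{\hat\theta}(\sigma(x),\sigma(y))\big|^2\,dx\,dy\ \le\ \frac{2}{n^2}\|\hat\theta-\theta\|^2+2\inf_{\sigma\in\mathcal{M}}\iint_{(0,1)^2}\!\big|f(x,y)-\hat f_\theta(\sigma(x),\sigma(y))\big|^2\,dx\,dy,$$
which follows from the same manipulation that proves Lemma~\ref{lemmaklopp3.1}: evaluate the left side at the infimizing $\sigma^\star$ for $\hat f_\theta$, insert and remove $\hat f_\theta(\sigma^\star(\cdot),\sigma^\star(\cdot))$, apply $\|a+b\|^2\le 2\|a\|^2+2\|b\|^2$, and note that applying a measure-preserving $\sigma^\star$ to both coordinates of $\hat f_{\hat\theta}-\hat f_\theta$ leaves the $L^2$ norm — namely $n^{-2}\|\hat\theta-\theta\|^2$ — unchanged. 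On the event of Theorem~\ref{theorem2.3}, of probability at least $1-\exp(-C'n)$, the first term on the right is at most $C(n^{-2\alpha/(\alpha+1)}+\log n/n)$. It then remains to control the oracle term on an event of probability at least $1-\exp(-C''n)$: taking $\sigma^\star\in\mathcal{M}$ to be the interval permutation that reorders the $n$ blocks by the ranks of $\{\xi_i\}$, it is bounded by the block-average of the spacing errors $|x-\xi_{(\lceil nx\rceil)}|$ (and the analogous errors in $y$) raised to the power $2(\alpha\wedge 1)$, via the Hölder continuity of $f$; its mean is $O(n^{-\alpha\wedge 1})$, which is precisely the preceding Proposition of \citet{klopp2017oracle}, and it is a bounded-differences functional of the i.i.d.\ $\{\xi_i\}$ — moving one $\xi_i$ displaces at most two order statistics and changes the average by $O(1/n)$ — so McDiarmid's inequality yields the desired $\exp(-cn)$ tail. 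A union bound over the two events and a final relabelling of constants completes the argument; the displayed $s$ and the probability $1-\exp(-C'n)$ are inherited from Theorem~\ref{theorem2.3}.

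The only step that is more than bookkeeping is this last one: upgrading the \citet{klopp2017oracle} bound on $\mathbb{E}[\operatorname{MISE}(\hat f_\theta,f)]$ to one holding with probability $1-\exp(-C'n)$, i.e.\ obtaining \emph{exponential} concentration of a functional of uniform order statistics. The bounded-differences route above is the cleanest I see, but one must check carefully that, after the Hölder reduction, the per-coordinate sensitivity is genuinely $O(1/n)$ (in particular that the power $2(\alpha\wedge 1)\le 2$ does not spoil this); an alternative is to combine the Dvoretzky--Kiefer--Wolfowitz bound on $\sup_k|\xi_{(k)}-k/n|$ with a deterministic Riemann-sum estimate, at the cost of a slightly worse but still exponential tail. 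Everything else — the two-term value rate, the triangle inequality, and the choice of $s$ — is supplied directly by Theorem~\ref{theorem2.3} and Lemma~\ref{lemmaklopp3.1}.
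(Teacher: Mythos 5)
Your expectation bound is exactly the paper's own argument: the paper proves Theorem~\ref{theorem_func} by applying Lemma~\ref{lemmaklopp3.1}, bounding the value-estimation term $\mathbb{E}\bigl[n^{-2}\|\hat{\theta}-\theta\|^2\bigr]$ by Theorem~\ref{theorem2.3} with the same choice $s=\lceil n^{2\beta^{-1}/(\alpha\wedge 1+1)}\rceil$, and bounding the oracle term $\mathbb{E}\bigl[\operatorname{MISE}(\hat{f}_{\theta},f)\bigr]$ by the Proposition adapted from \citet{klopp2017oracle}; no further work is done there, so that half of your proposal is correct and identical in route. Your reading of the high-probability claim as a statement about the integrated squared error (the random quantity inside the expectation in \eqref{mise}) is also the only sensible one, and your pathwise version of Lemma~\ref{lemmaklopp3.1} is fine.

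The gap is in the concentration step for the oracle term, and it is quantitative. Write $T_n=\inf_{\sigma\in\mathcal{M}}\iint_{(0,1)^2}\bigl|f(x,y)-\hat{f}_{\theta}(\sigma(x),\sigma(y))\bigr|^2\,dx\,dy$. Changing one $\xi_i$ alters the empirical graphon on two strips of total measure at most $2/n$, so the per-coordinate sensitivity of $T_n$ is indeed $O(1/n)$; but then McDiarmid only gives $\mathbb{P}\bigl(T_n-\mathbb{E}[T_n]>t\bigr)\le\exp(-c\,n t^2)$, and to conclude $T_n\le C n^{-\alpha\wedge 1}$ you must take $t\asymp n^{-\alpha\wedge 1}$, which makes the exponent $-c\,n^{1-2(\alpha\wedge 1)}$: this is never of order $\exp(-C'n)$, and for $\alpha\wedge 1\ge 1/2$ the bound is vacuous. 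The DKW alternative fails for the same reason: $\mathbb{P}\bigl(\sup_i|\xi_{(i)}-i/n|>t\bigr)\le 2\exp(-2nt^2)$ reaches probability $1-\exp(-cn)$ only for $t$ of constant order, at which point the Hölder reduction bounds $T_n$ only by a constant, not by $n^{-\alpha\wedge 1}$. So the one step you flagged as nontrivial does not close at the claimed probability level by either route you propose. For comparison, the paper does not close it either: its proof controls the oracle term only in expectation, so the $\exp(-C'n)$ event in the displayed statement is effectively the event of Theorem~\ref{theorem2.3} governing the value term, and the $n^{-\alpha\wedge 1}$ contribution is justified only in the expectation bound. Your attempt makes this explicit, but as written the high-probability half of your proof does not go through.
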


Similar to the value estimation framework, we can see here that, when $\alpha\geq 1$, the clustering rate dominates. However, we drift away from the value estimation results when $\alpha<1$ as the term $n^{-\alpha}$ dominates over the non-parametric rate $n^{-2\alpha/(\alpha+1)}$. This is the cost of doing function estimation instead of value, as one aims to further recover the whole sampling procedure by finding the permutation of nodes that minimizes the cut-distance. 
\section{Approximation procedure}
\subsection{Determining the estimator}

In network analysis, some of the main challenges are recovering the generating mechanism of the network, for which graphon estimation is a part of, predicting links and detecting communities within the network. Methods of graphon estimation based on a stochastic block model span these problems.
This can be seen in \citet{olhede2014network} and was further developed in \citet{gao2015rate} where they related their graphon estimation framework to link prediction if only a part of the adjacency is observed. They also related their work to community detection following the work of \citet{lei2015consistency} and  \citet{chin2015stochastic} by providing results on the convergence of their method under the operator norm, which is used to derive misclassification error of spectral clustering.
However good such methods are, they often have trouble following the principle of Occam's razor: some methods obtaining sometimes 10 times more blocks than the theoretical block model, which is to be expected from likelihood, or $L_2$, maximization methods. As described in \citet{valles2018consistencies}, \textit{the most plausible model is not the most predictive}. By our framework, we provide a method that aim to be in the middle: selecting one of the most plausible models while having one of the best predictive performance. That is, performing model selection that balances specificity and generalizability \citep{copas1983regression}. 

To achieve this, we develop a heuristic framework in two steps. The first step constructs a non-parametric estimator using a regular stochastic block models and the algorithm developed in \citet{olhede2014network}. The second step smooths the step-graphon by taking the average of each block having similar density. As such, we obtain a stochastic shape model where shapes are unions of blocks. 
\begin{figure}[ht!]
    \centering
    \includegraphics[trim={0cm 8cm 0cm 0cm}, clip, width=\linewidth]{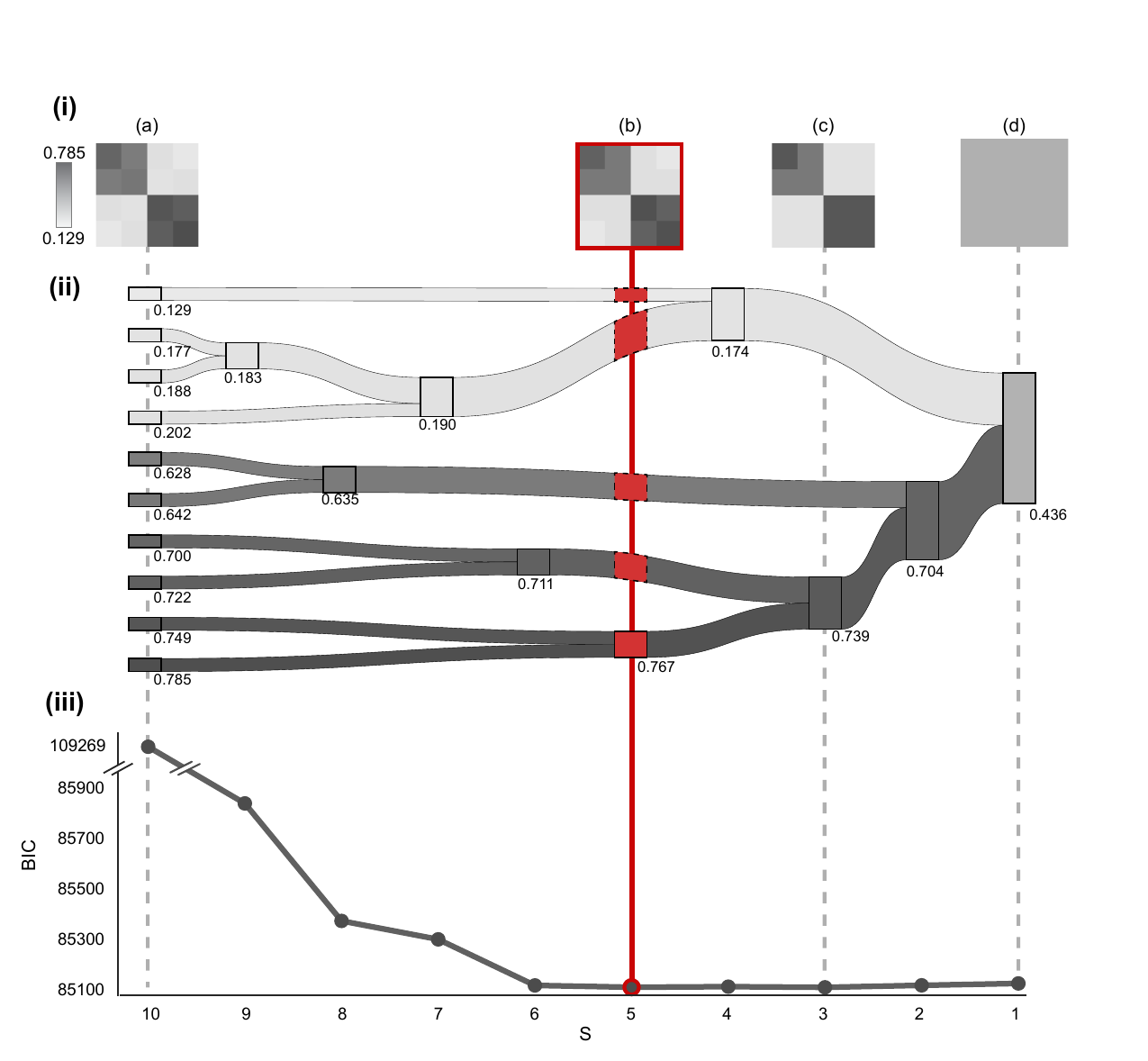}
    \caption{Comparative analysis of estimators at number of shapes and model selection based on Bayesian Information Criterion (BIC). Panel (i) presents some step-function approximations with a decreasing number of shapes from left to right (a-d). Panel (ii) is a Sankey diagram illustrating the smoothing procedure and shape's probability value, where the width of the lines indicates the number of blocks smoothed together. Panel (iii) depicts the BIC values for the corresponding number of shapes ($S$), with the lowest BIC identified at $S=5$, indicating the most efficient model according to the BIC (in red).}
    \label{fig:schema_estimator}
\end{figure}

When performing the first step of our algorithm, we obtain, say, a regular $k$-block stochastic block model, where $k$ is depending on $n$ as in \citet{olhede2014network}. Counting the intra-blocks (i.e. the diagonal ones) and the inter blocks (non-diagonal), this gives us a total of $\binom{k+1}{2}$ potential blocks to smooth. To average together blocks (tiles) of similar density, we first cluster them using \textit{k-means} algorithm, more particularly \textit{k-means++} \citep{arthur2007k} to avoid well-known initialization issues from regular \textit{k-means}. We then cluster the tiles to obtain $s$ shapes that are unions of blocks of probabilistically equivalent edge variables. We define the model obtained by smoothing into $s$ shapes as  $\mathcal{M}_{s}$ for $s\in \left[\binom{k+1}{2}\right]$. Finally, to select our final model, we compute the Bayesian Information Criterion (BIC) for each $\mathcal{M}_s$ for $s\in \left[\binom{k+1}{2}\right]$. We use BIC as an embodiment of our previous comment on Occam's razor principle, as it is a criterion known to quantitatively assess the goodness of fit of a model against its complexity. By selecting the model that minimizes BIC, we aim to get a simpler model that fit the data as well as more intricate ones. BIC is a good choice to perform such a task, as it is considered to be equivalent to the Minimum Description Length and Bayesian posterior inference in our framework of dense networks \citep{yan2014model, peixoto2021descriptive}.

\subsection{Community detection and parallel with link communities}
\label{link_com_section}
In this section, we are providing a discussion of the concept of community and how our method provides new analytical tools in community detection and network topology analysis. Much like the clustering work of \citet{hennig2015true}, we argue that defining a community based on the data alone does not reflect the idea of an unobservable underlying truth and of generalization of results to entities that were not observed. As such, one must first define what they consider to be a community before trying to find one.

A common notion of community in networks is nodal based.
As stated in \citet{gaumont2015expected}, in network analysis, a prevailing hypothesis suggests that the density of connections within node communities should surpass that of a random model without inherent community structure.
This corresponds to the basis of the notion of modularity from \citet{newmanmodularity} and assortative stochastic block model.
However, such a definition can cause issues for networks that possess dissassortative structure as well as overlapping communities. Many workarounds were proposed based on node communities, e.g. hierarchical clustering.
However, over the last decade, the notion of node communities in understanding network structure has been challenged by the notion of link communities \citep{ahn2010link, evans2009line, gaumont2015expected, he2015identification, jin2019robust}.
Partitioning links is intuitive as one could easily imagine that, in a social network, an individual can belong to multiple communities, each characterized by its interactions, its links, to these communities.
While providing a new way to analyze networks based only on the edges, they also can provide some level of description of overlapping and hierarchies of node communities, as nodes can inherit the community labels of its adjacent links \citep{ahn2010link, evans2009line}.
Link communities are particularly important in the context of weighted networks as they often better characterize community behavior and network topologies, and semantically described edges, i.e. an interaction is an email, a call \citep{jin2019robust} etc.
However, the link community scheme often generates a highly overlapping community structure even in cases where the network does not exhibit such behavior.

Methods such as \citet{he2015identification} have been proposed to get the best of both worlds using a hybrid notion of node and link communities.
As mentioned in \citet{he2015identification}, some real-world systems are better characterized by a mixture of both node and link communities. A node may belong to a node community or be connected by an edge associated with a link community, and vice versa.
While not incorporated as such in our paper, we argue that our estimator could be interpreted as a hybrid, through the invertible map between node and link communities, and provide useful information for both a node perspective and link one.
Indeed, through the correspondence between SBMs and our blocky SSM estimators, node communities comes naturally.
Link communities, however, are different as they are based on another definition than the common density argument used in \citet{ahn2010link}. In contrast, we define a link community purely based on stochastic equivalence of blocks of edges in the estimator, for which we can find similarities with the work in hierarchical community detection of \citet{schaub2023hierarchical_seep_stoch_equiv} with their stochastic diagonal/non-diagonal equivalent partition of the blocks. Note that our definition of link communities differ from methods like \citet{he2015stochastic_lc_realised_edges} or \citet{zhou2015infinite_lc_realised_edges}. Indeed, while they provide ideas closed to stochastic equivalence of edges, they only do it on realized edges. We, on the other hand, do it on edge variables, much like the concept of Mixed membership model \citep{airoldi2008mixed}.

We argue that our notion of link communities is appropriate in many applications, one of which is illustrated in the next section using the political weblogs dataset from  \citet{adamic2005political}. One could imagine the use of stochastic equivalence of interaction in sociology through notions like \textit{parallel communities}, \textit{segregated communities} or social stratification \citep{gans1982urban, putnam2015bowling}, in biology when studying \textit{convergent evolution}, in social science when studying online behaviors \citep{stoltenberg2019community}, in anthropology when studying complex systems of societal developments \citep{diamond1999guns} or even in fraud detection \citep{alexopoulos2021detecting}.

\section{Experiments}
\subsection{Synthetic data}
\label{sec:synth_data}
We consider various different type of graphons to generate synthetic adjacency matrices to test the validity of our method. As benchmarks, we use the Universal Singular Value Threshold (USVT) from \citet{chatterjee2015matrix} and the Sorting-and-Smoothing algorithm from \citet{chan2014consistent}. We use the mean squared error (MSE) and area under the ROC-curve (AUC) as measures of performance in, respectively, model fit and link prediction. We selected five different models.
The first one is the latent distance model \citep{hoff2002latent}, defined as $f_0(x,y) = |x-y|$ for an example of a smooth function. 
Second and third are, respectively, assortative and dissassortative 5-blocks stochastic block models to show how robust our method can be when applied to opposite network topologies. This is because real networks may exhibit a mix of assortative and dissassortative characteristics, or may not clearly belong to one category \citep{fortunato2016community}. 
\begin{figure}[ht!]
    \centering
    \includegraphics[width = 0.8\linewidth]{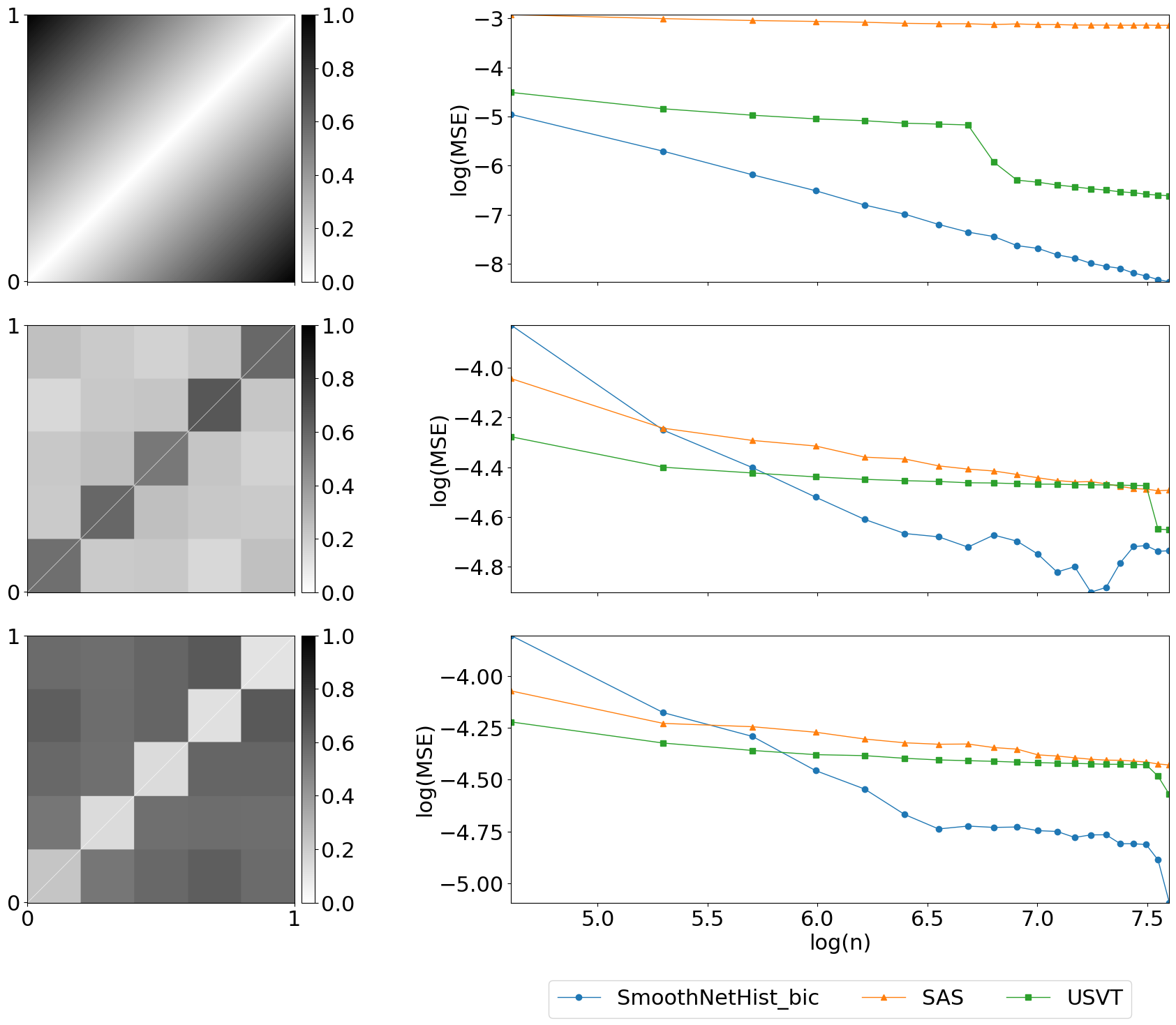}
    \caption{Performance comparison, between our method, USVT \citep{chatterjee2015matrix} and SAS \citep{chan2014consistent} over different graphon realization, linewise. Left is the functional representation of the graphon. Middle is a log-log plot of the mean squared error. Right is a plot of the area under the ROC-curve.}
    \label{fig:graphon_mse_auc}
\end{figure}

We followed the estimation procedure mentioned in the previous section. The results for each scenario are presented in Figure \ref{fig:graphon_mse_auc}. These results are obtained by averaging over 20 Monte-Carlo  simulations and for $n$ going from 100 to 2000 in steps of 100. In Figure \ref{fig:graphon_mse_auc}, we see that our method is the best candidate both in MSE and AUC for both cases of smooth functions and step functions. Thus, among the three method, for the graphon selected, our method is the best fit and the most competitive for link prediction. 
Note that the original block estimator from \citet{olhede2014network} is not present in the plot, only because the error were practically the same between the smoothed estimator and the original one. 

We illustrate this last point in Figure \ref{fig:ratiopar} by checking with different functions. The first graphon in this figure is the logit sum function $f_1(x,y) = 1/(1 + e^{-10(x^2 + y^2)})$. The second is $f_2(x,y) = \operatorname{log}(1 + 0.5\operatorname{max}(x,y))$ and the final one is a hierarchical stochastic block model with constant inter probabilities to simulate behavior from a partition model, call it $f_3$. In all three cases, we see a significant decrease in the number of parameters compared to the original estimator from \citet{olhede2014network} as $n$ increases. We also illustrate the loss in predictive power from smoothing by showing the ratio of AUC between the two methods. Here, a ratio converging to one means that our estimator is becoming as good as the original (i.e. before smoothing) for link prediction. In Figure \ref{fig:ratiopar}, we can see that smoothing is a valid approach in a bias-variance reduction approach as we have a tremendous reduction of parameters (95\% maximum reduction as seen in the case of $f_2$) for an estimator that is, from the AUC ratio, as competitive as the original estimator. 
\begin{figure}[ht!]
    \centering
    \includegraphics[width = 0.98\linewidth]{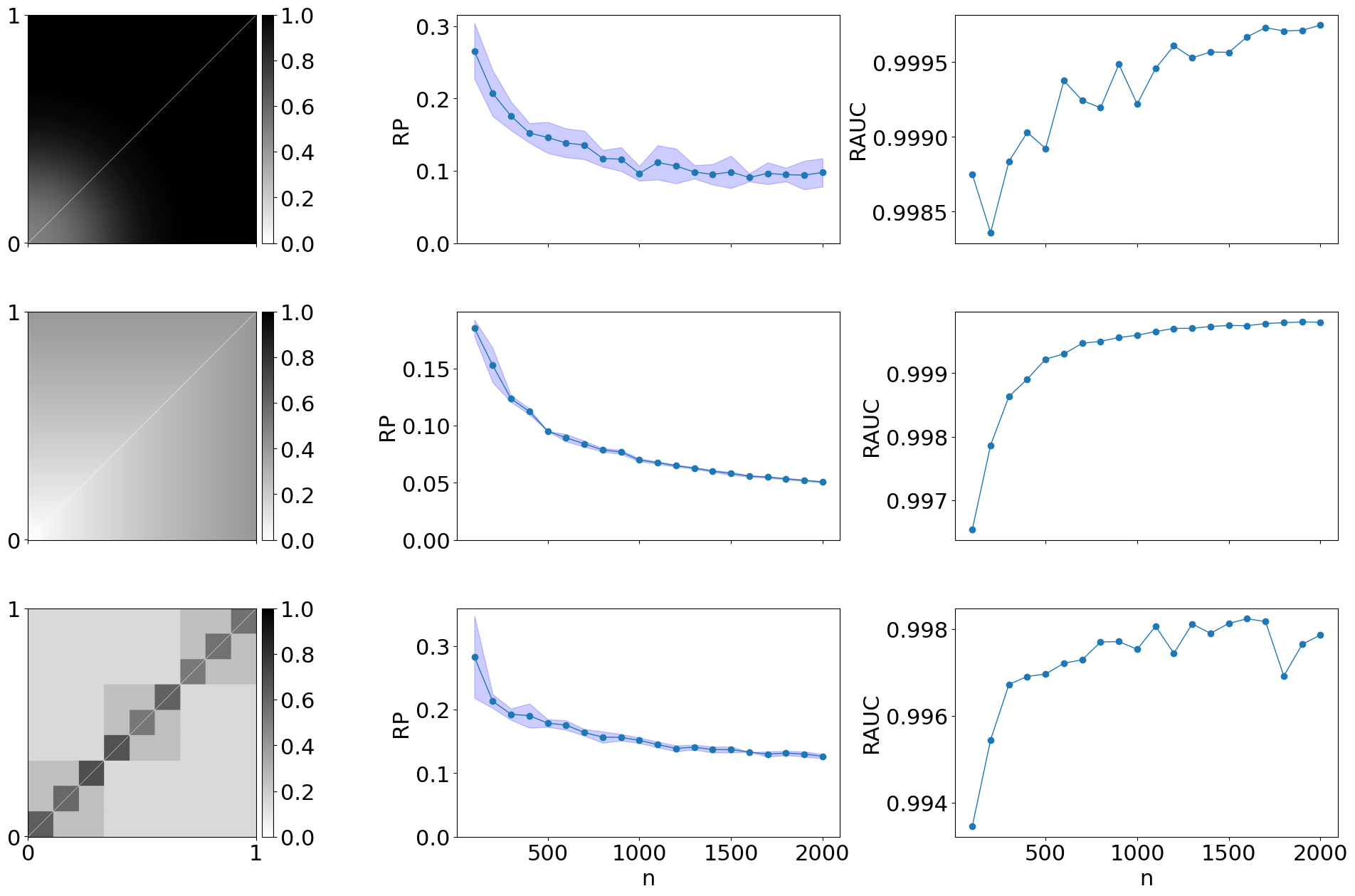}
    \caption{Plots on the left represent the graphons $f_i$ for $i = 1,2,3$ from top to bottom. Middle plots represent the ratio of the number of parameters (RP) between our estimator, using BIC, and the estimator from \citet{olhede2014network}. Similarly, plots on the right represent the AUC ratio (RAUC) between the two methods. All averaged over 10 Monte Carlo simulations.}
    \label{fig:ratiopar}
\end{figure}
\subsection{Real-world networks}
\label{polblog_section}
To further illustrate the performance of our estimator and its versatility as a tool for exploratory data analysis, we selected real-world datasets such as the political weblogs \citep{adamic2005political}. This network describes an interaction between political blogs $i$ and $j$ if a hyperlink was directing blog $i$ to blog $j$ and/or vice versa. 
The network itself is composed of $N=1224$ nodes for $E=16783$ edges when removing isolated nodes.
A key feature of this dataset is the categorization of blogs based on their political alignment – typically as conservative or liberal. This division explains why this dataset is used as an example for community detection.
As suggested in \citet{peixoto2014hierarchical}, the highest-scale division is this bimodal partition between liberals and conservatives, both in hierarchical and non-hierarchical block model approaches.
However, by considering a finer partition of this network, say 15 to 20 node communities, one can characterize the heterogeneous patterns of interactions within the network. Thus, one can find different topologies and structures within a network based on how you order the nodes, as illustrated in Figure \ref{fig:polblog_adj}.
\begin{figure}[ht!]
    \centering
    \includegraphics[width = \linewidth]{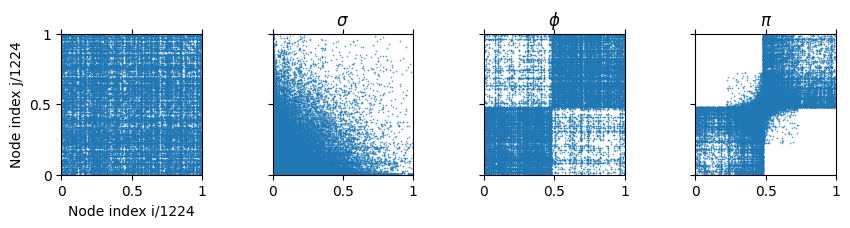}
    \caption{Adjacency matrix of the political weblogs dataset \citet{adamic2005political} when permuting the node labels according, from left to right, to the identity map (no permutation), to degrees ($\sigma$ permutation), to political party ($\phi$ permutation) and degrees based on interactions to the opposite party ($\pi$ permutation).}
    \label{fig:polblog_adj}
\end{figure}
As one can see in Figure \ref{fig:polblog_est}, both political groups have subgroups that are cited across the political board, acting as bridges. While we can also see other pattern between groups, like citation patterns across subgroups.
A pattern that stands out when we consider the notion of link communities based on stochastic equivalence, as in section \ref{link_com_section}, is the stochastic shape with the lowest density on the right of Figure 6. Indeed, we see that this shape in particular is a similar fit to the low-density interaction pattern across the segregated political board, i.e. the $\pi$ permutation. As a tool for community-based analysis of networks, this unifies barely-heterogeneous interactions between groups into one component. This component could be seen, for example, as the block of politically segregated groups' link community, essentially easing analysis by reducing the number of parameters of the stochastic block estimator as is the goal of our method.
\begin{figure}
    \centering
    \includegraphics[width = 0.5\linewidth]{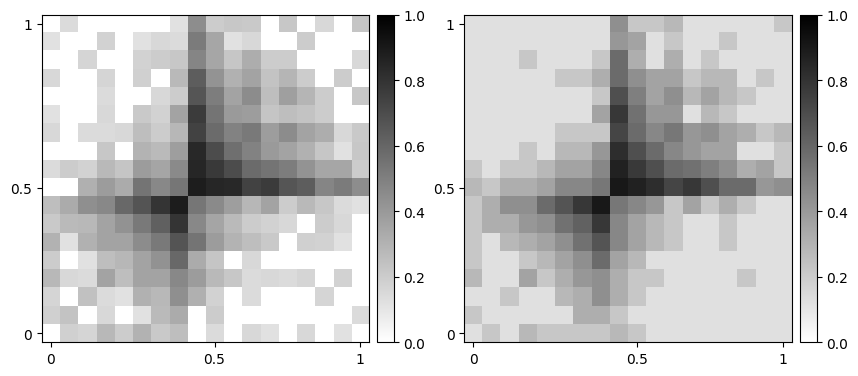}
    \caption{Network histogram estimators $\hat{f}(x,y)^{1/4}$. Left is the estimator from \citet{olhede2014network}, right is our estimator. The fourth root is used for ease of visualisation.}
    \label{fig:polblog_est}
\end{figure}

\section{Conclusion}
\label{sec:conc}

Here, we have provided a new latent class model, that we call the \textit{stochastic shape model}, for unlabeled networks with its own estimation method inspired from \citet{olhede2014network}.
The approach reduces variance introduction to the estimator's error at the price of bias while still maintaining similar, if not better, empirical predictive performances compared to methods based on stochastic block models, as seen in Section 5.
Consequently, we obtain a simpler representation of our estimator which is more fitted as a nonparametric statistical summary of network interactions.
Indeed, while method like \citet{olhede2014network} or \citet{gao2015rate} improve their approximation of the underlying data-generating mechanism by introducing more blocks -- therefore increasing complexity -- we, here, control this added complexity by a smoothing operation based on similarity of block density.
To check that smoothing does not negatively impact inference, we show that the rate of convergence of our method is optimal when the generating mechanism of the network is a stochastic shape model and Hölder-smooth function.
As a sanity check, we also provide experimental results that illustrate the predictive performance of our method.
While providing an inherent node analysis/community detection thanks to its block structure, we argue that our method also provide a description of communities of interactions based on stochastic equivalence. Thus providing a new potential definition for link communities.
Moreover, the sequential smoothing procedure can be further tuned to obtain a multiscale estimator, providing different levels of link communities size.

\bigskip
\begin{center}
{\large\bf SUPPLEMENTARY MATERIAL}
\end{center}

\begin{description}

\item[Acknowledgements:] This work was supported by the European Research Council under Grant CoG 2015-682172NETS, within the Seventh European Union Framework Program.

\item[Title:] "Supplementary material" contains the proofs and further auxiliary results needed to obtain the theoretical results presented in this paper.

\item[Code:] Method is implemented and is available in \citet{dufour_2023}.

\item[Political weblogs data set:] Data set used in the section \ref{polblog_section} was extracted from \url{https://networks.skewed.de/net/polblogs}.

\end{description} 

\appendix

\section{Auxiliary results and oracle inequalities}

We will now list results and proofs that are essential for a good understanding of the proof in our paper.

\subsection{Auxiliary results}
\begin{itemize}
    \item \textit{Hoeffding's inequality: }We recall it from \cite{vershynin2010introduction} as this is essential for non-parametric rates.
    \begin{Proposition}[Bernstein-type inequality]
    \label{bernstein}
    Let $X_{1}, \ldots, X_{N}$ be independent centred sub-exponential random variables, and $K=\max _{i}\left\|X_{i}\right\|_{\psi_{1}}$. Then for every $a=$ $\left(a_{1}, \ldots, a_{N}\right) \in \mathbf{R}^{N}$ and every $t \geq 0$, we have
    $$
    \mathbb{P}\left\{\left|\sum_{i=1}^{N} a_{i} X_{i}\right| \geq t\right\} \leq 2 \exp \left[-c \min \left(\frac{t^{2}}{K^{2}\|a\|_{2}^{2}}, \frac{t}{K\|a\|_{\infty}}\right)\right],
    $$
    where $e>0$ is an absolute constant. 
    \end{Proposition}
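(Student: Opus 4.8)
The plan is the standard Chernoff argument for sums of independent sub-exponential random variables. The first step is to pass from the Orlicz-norm hypothesis to a moment-generating-function estimate: there is an absolute constant $C_{0}$ such that every centred $X$ with $\|X\|_{\psi_{1}}\le K$ satisfies
\[
\mathbb{E}\exp(\lambda X)\le \exp\!\bigl(C_{0}K^{2}\lambda^{2}\bigr)\qquad\text{whenever }|\lambda|\le \frac{1}{C_{0}K}.
\]
This follows by expanding $\exp(\lambda X)$ in a power series, using the moment bound $\mathbb{E}|X|^{p}\le (cK)^{p}p!$ that is equivalent to the definition of $\|\cdot\|_{\psi_{1}}$, and invoking $\mathbb{E}X=0$ to remove the first-order term; the resulting series is dominated by a geometric series that converges exactly in the stated range of $\lambda$. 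If a self-contained derivation is not wanted, this lemma can simply be quoted from \citet{vershynin2010introduction}.

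With this in hand, I would fix $t\ge 0$ and $\lambda>0$ with $\lambda\|a\|_{\infty}\le 1/(C_{0}K)$, so that $|\lambda a_{i}|\le 1/(C_{0}K)$ for every $i$. Markov's inequality applied to $\exp\!\bigl(\lambda\sum_{i}a_{i}X_{i}\bigr)$, together with independence and the MGF estimate applied coordinatewise, gives
\[
\mathbb{P}\Bigl\{\sum_{i=1}^{N}a_{i}X_{i}\ge t\Bigr\}\le e^{-\lambda t}\prod_{i=1}^{N}\mathbb{E}\exp(\lambda a_{i}X_{i})\le \exp\!\bigl(-\lambda t+C_{0}K^{2}\|a\|_{2}^{2}\lambda^{2}\bigr).
\]
The remaining step is to optimise the exponent over the admissible $\lambda$. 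The unconstrained minimiser is $\lambda^{\ast}=t/(2C_{0}K^{2}\|a\|_{2}^{2})$; if $\lambda^{\ast}$ is admissible one substitutes it and obtains a bound of the form $\exp\!\bigl(-c\,t^{2}/(K^{2}\|a\|_{2}^{2})\bigr)$, and otherwise one takes $\lambda$ at the boundary value $1/(C_{0}K\|a\|_{\infty})$, in which case non-admissibility of $\lambda^{\ast}$ forces $t/(K\|a\|_{\infty})$ to be the smaller of the two quantities and one obtains $\exp\!\bigl(-c\,t/(K\|a\|_{\infty})\bigr)$. Combining the two cases yields the one-sided tail bound $\exp\!\bigl(-c\min\{t^{2}/(K^{2}\|a\|_{2}^{2}),\,t/(K\|a\|_{\infty})\}\bigr)$.

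Finally, I would repeat the argument with $a_{i}X_{i}$ replaced by $-a_{i}X_{i}$ to control the lower tail, and add the two bounds, which accounts for the factor $2$ in the statement. The only place where any care is needed is the initial MGF estimate — getting both the $K^{2}\lambda^{2}$ scaling and the correct radius of validity $|\lambda|\asymp 1/K$ — since everything after that is the routine Chernoff optimisation over the two sub-Gaussian and sub-exponential regimes.
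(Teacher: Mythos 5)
Your argument is correct: it is exactly the standard MGF--Chernoff proof of Bernstein's inequality for sub-exponential variables (bound the moment generating function on $|\lambda|\lesssim 1/K$ via the $\psi_1$ moment characterization, apply Markov's inequality with independence, optimise $\lambda$ over the constrained range to split into the sub-Gaussian and sub-exponential regimes, and union-bound the two tails). The paper does not prove this proposition at all --- it simply quotes it from \citet{vershynin2010introduction} (Prop.\ 5.16) --- and your derivation coincides with the proof given in that reference, so there is nothing to add.
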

    \item \textit{Least-square minimiser estimator:} This is the proof of the main motivational result from this paper, Proposition 3.1.
    \begin{proof}[Proof of Proposition 3.1]
    \label{proofminLoss}
    Under the model specification of the Stochastic Shape Model (Definition 2.1), we have that the block averages $\bar{A}_{ab}$, for $a,b\in[k]$, are similar over areas $S_c$ for $c\in[s]$. Defining $h_c:=|S_c|h^2$, the least square, for $1\leq i<j\leq n$, is thus
    $$
    \begin{aligned}
        \mathcal{L}(A;u\circ z^2, \theta)&=\sum_{c=1}^s\sum_{(i,j)\in w^{-1}(c)}\left\{A_{ij}\operatorname{log}\left(\theta_{w(i,j)}\right)+\left(1-A_{ij}\right)\operatorname{log}\left(1-\theta_{w(i,j)}\right)\right\}\\
        &= \sum_{c=1}^s \sum_{\substack{a \in u^{-1}(c) \\ b \in u^{-1}(c)}}\sum_{\substack{i \in z^{-1}(a) \\ j \in z^{-1}(b)}} \left\{A_{ij}\operatorname{log}\left(\theta_{u(z(i),z(j))}\right)+\left(1-A_{ij}\right)\operatorname{log}\left(1-\theta_{u(z(i),z(j))}\right)\right\}\\
        &=\sum_{c=1}^s \sum_{\substack{a \in u^{-1}(c) \\ b \in u^{-1}(c)}}\operatorname{log}\left(\theta_{u(a,b)}\right)\sum_{\substack{i \in z^{-1}(a) \\ j \in z^{-1}(b)}}A_{ij}+\sum_{c=1}^s \sum_{\substack{a \in u^{-1}(c) \\ b \in u^{-1}(c)}}\operatorname{log}\left(1-\theta_{u(a,b)}\right)\sum_{\substack{i \in z^{-1}(a) \\ j \in z^{-1}(b)}}\left(1-A_{ij}\right)\\
        &=\sum_{c=1}^s\operatorname{log}\left(\theta_{c}\right) \sum_{\substack{a \in u^{-1}(c) \\ b \in u^{-1}(c)}}\sum_{\substack{i \in z^{-1}(a) \\ j \in z^{-1}(b)}}A_{ij}+\sum_{c=1}^s\operatorname{log}\left(1-\theta_{c}\right) \sum_{\substack{a \in u^{-1}(c) \\ b \in u^{-1}(c)}}\sum_{\substack{i \in z^{-1}(a) \\ j \in z^{-1}(b)}}\left(1-A_{ij}\right)\\
        &=\sum_{c=1}^s\operatorname{log}\left(\theta_{c}\right)\sum_{(i,j)\in w^{-1}(c)}A_{ij}+\sum_{c=1}^s\operatorname{log}\left(1-\theta_{c}\right)\sum_{(i,j)\in w^{-1}(c)}\left(1-A_{ij}\right)\\
        &= \sum_{c=1}^s h_c\left\{\operatorname{log}\left(\theta_{c}\right)\Bar{\Bar{A}}_c + \operatorname{log}\left(1-\theta_{c}\right)\left(1-\Bar{\Bar{A}}_c\right)\right\},
    \end{aligned}
    $$
    which is maximised for $\theta_c= \Bar{\Bar{A}}_c$. Profiling $\theta$ out of the least square gives as in \citet{bickel2009nonparametric}
    $$
    \begin{aligned}
    \mathcal{L}(A ; u\circ z^2) & =\max _{\theta \in[0,1]^{s}} \mathcal{L}(A ; u\circ z^2, \theta) \\
    &=\sum_{c=1}^s h_c\left\{\operatorname{log}\left(\theta_{c}\right)\Bar{\Bar{A}}_c + \operatorname{log}\left(1-\theta_{c}\right)\left(1-\Bar{\Bar{A}}_c\right)\right\}\\
    &=\sum_{i<j}\left\{A_{i j} \log \Bar{\Bar{A}}_{w(i,j)}+\left(1-A_{i j}\right) \log \left(1-\Bar{\Bar{A}}_{w(i,j)}\right)\right\}.
    \end{aligned}
    $$
    
    As stated in \citet{wolfe2013nonparametric}, any maximiser of this last line is a maximum profile likelihood estimator (MPLE) for $z$ in $\mathcal{Z}_{n,k}$. Applying their reasoning, it further follows that maximising such a likelihood is equivalent to minimising the sum of the Bernoulli Kullback-Leibler divergences $\sum_{i<j} \mathrm{D}\left(A_{i j} \| \Bar{\Bar{A}}_{w(i,j)}\right)$. The rest follows by using Lemma C.9 from \citet{wolfe2013nonparametric}.
    \end{proof}
    \item \textit{A bound on using packing number: }A key contribution from \citet{gao2015rate} over \citet{wolfe2013nonparametric} is the use of covering and packing number to obtain rate optimality. The following result helps us introduce this notion in the coming proofs. Note that this is a direct adaptation from Lemma A.1 in \citet{gao2015rate} modified to fit the stochastic shape framework.
    
    \begin{Lemma}[\citep{gao2015rate}]
    \label{lemmaA.1}
    Let $\mathcal{B} \subset\left\{a \in \mathbb{R}^{n \times n}: \sum_{i j} a_{i j}^2 \leq 1\right\}$. Assume for any $a, b \in$ $\mathcal{B}$,
    \begin{equation}
    \label{NormCondLemma}
    \frac{a-b}{\|a-b\|} \in \mathcal{B}.
    \end{equation}
    Then, we have
    $$
    \mathbb{P}\left(\sup _{a \in \mathcal{B}}\left|\sum_{i j} a_{i j}\left(A_{i j}-\theta_{i j}\right)\right|>t\right) \leq \mathcal{N}(1 / 2, \mathcal{B},\|\cdot\|) \exp \left(-C t^2\right),
    $$
    for some universal constant $C>0$.
    \end{Lemma}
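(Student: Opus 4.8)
The plan is to run a single-step covering argument in the Frobenius metric $\|\cdot\|$, following the proof of Lemma A.1 in \citet{gao2015rate}: I would control the linear statistic $\langle a,A-\theta\rangle:=\sum_{ij}a_{ij}(A_{ij}-\theta_{ij})$ on a finite $\tfrac12$-net of $\mathcal{B}$ by concentration, then transfer the bound from the net to all of $\mathcal{B}$ via the self-similarity hypothesis~\eqref{NormCondLemma}. First I would fix a $\tfrac12$-net $\mathcal{B}_{1/2}\subset\mathcal{B}$ of $(\mathcal{B},\|\cdot\|)$ of cardinality $\mathcal{N}(1/2,\mathcal{B},\|\cdot\|)$, taking the net points \emph{inside} $\mathcal{B}$ (e.g.\ as a maximal $\tfrac12$-separated subset), which is what permits invoking~\eqref{NormCondLemma} below. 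I would also record that $M:=\sup_{a\in\mathcal{B}}|\langle a,A-\theta\rangle|\le\|A-\theta\|<\infty$ almost surely by Cauchy--Schwarz, since $\mathcal{B}$ lies in the Euclidean unit ball, so that $M$ is a genuine random variable.

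Next I would establish the pointwise tail bound. For fixed $a$ with $\sum_{ij}a_{ij}^2\le1$, the statistic $\langle a,A-\theta\rangle$ is a linear combination of the independent centred variables $A_{ij}-\theta_{ij}$, $1\le j<i\le n$; collapsing the symmetric double sum leaves an effective coefficient vector of $\ell^2$-norm $O(1)$. Each $A_{ij}-\theta_{ij}$ takes values in $[-1,1]$, hence is sub-Gaussian with a universal parameter, so a Hoeffding bound for bounded independent summands (in the spirit of Proposition~\ref{bernstein}) yields
$$\mathbb{P}\bigl(|\langle a,A-\theta\rangle|>t\bigr)\le2\exp(-c_0t^2)$$
for a universal $c_0>0$, uniformly over all such $a$, and in particular simultaneously over $\mathcal{B}_{1/2}$ after a union bound.

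Then I would pass from the net to $\mathcal{B}$. Given $a\in\mathcal{B}$, choose $a_0\in\mathcal{B}_{1/2}$ with $\|a-a_0\|\le\tfrac12$; if $a\neq a_0$ then $b:=(a-a_0)/\|a-a_0\|\in\mathcal{B}$ by~\eqref{NormCondLemma}, so
$$|\langle a,A-\theta\rangle|\le|\langle a_0,A-\theta\rangle|+\|a-a_0\|\,|\langle b,A-\theta\rangle|\le\max_{a_0\in\mathcal{B}_{1/2}}|\langle a_0,A-\theta\rangle|+\tfrac12 M,$$
with the case $a=a_0$ trivial. Taking $\sup_{a\in\mathcal{B}}$ gives $M\le\max_{a_0\in\mathcal{B}_{1/2}}|\langle a_0,A-\theta\rangle|+\tfrac12 M$, hence $M\le2\max_{a_0\in\mathcal{B}_{1/2}}|\langle a_0,A-\theta\rangle|$ almost surely. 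Combining this with the union bound from the previous step,
$$\mathbb{P}(M>t)\le\mathbb{P}\Bigl(\max_{a_0\in\mathcal{B}_{1/2}}|\langle a_0,A-\theta\rangle|>t/2\Bigr)\le\mathcal{N}(1/2,\mathcal{B},\|\cdot\|)\cdot2\exp(-c_0t^2/4),$$
and absorbing the factor $2$ into the exponent (using $\mathcal{N}\ge1$ and shrinking the constant) yields the claimed inequality with a universal $C>0$.

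The hard part here is essentially nonexistent: all the content is the one-step chaining together with the contraction trick, both dictated by the hypotheses, so the argument is a transcription of the one in \citet{gao2015rate}. The only points requiring genuine care are (i) verifying that after collapsing the symmetric sum, $\langle a,A-\theta\rangle$ is still a sum of \emph{independent} bounded summands with $\ell^2$ coefficient norm $O(1)$, so that the tail is $\exp(-c_0t^2)$ rather than merely $\exp(-c_0t)$, and (ii) keeping track of the constant through the $\tfrac12$-contraction so the final exponent stays of the form $-Ct^2$. Nothing stochastic-shape-specific enters the lemma itself; that structure appears only later, when $\mathcal{B}$ is specialised to (normalised differences of) shape-constant arrays and one checks that this set obeys~\eqref{NormCondLemma}.
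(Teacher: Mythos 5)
Your argument is correct and is essentially the paper's own proof: a $\tfrac12$-net inside $\mathcal{B}$, the contraction step via the self-similarity condition \eqref{NormCondLemma} giving $\sup_{a\in\mathcal{B}}|\langle a,A-\theta\rangle|\le 2\max_{b\in\mathcal{B}'}|\langle b,A-\theta\rangle|$, and then a union bound with Hoeffding's inequality on the net points. Your extra remarks (choosing the net points inside $\mathcal{B}$ and noting that the supremum is almost surely finite before solving the inequality) are just careful bookkeeping of steps the paper leaves implicit, not a different route.
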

    
    \begin{proof}[Proof of Lemma \ref{lemmaA.1}]
    Let $\mathcal{B}^{\prime}$ be a $1 / 2$-net of $\mathcal{B}$ such that $\left|\mathcal{B}^{\prime}\right| \leq \mathcal{N}(1 / 2, \mathcal{B}
    \|\cdot\|)$ and for any $a \in \mathcal{B}$, there is $b \in \mathcal{B}^{\prime}$ satisfying
    \begin{equation}
    \label{boundHalfLemma}
    \| a-b|| \leq 1 / 2 .
    \end{equation}
    Thus,
    $$
    \begin{aligned}
    |\langle a, A-\theta\rangle| & \leq|\langle a-b, A-\theta\rangle|+|\langle b, A-\theta\rangle| \\
    & \leq\|a-b\|\left|\left\langle\frac{a-b}{\|a-b\|}, A-\theta\right\rangle\right|+|\langle b, A-\theta\rangle| \\
    & \leq \frac{1}{2} \sup _{a \in \mathcal{B}}|\langle a, A-\theta\rangle|+|\langle b, A-\theta\rangle|,
    \end{aligned}
    $$
    where the last inequality is due to (\ref{boundHalfLemma}) and the assumption (\ref{NormCondLemma}). Taking sup with respect to $\mathcal{B}$ and max with respect to $\mathcal{B}'$ on both sides, we have
    $$
    \sup _{a \subset \mathcal{B}}\left|\sum_{ij} a_{ij}\left(A_{ij}-\theta_{ij}\right)\right| \leq 2 \max _{b \in \mathcal{B}^{\prime}}\left|\sum_{ij} b_{ij}\left(A_{ij}-\theta_{ij}\right)\right| .
    $$
    
    From this inequality follows by linearity the cumulative function
    $$
    \mathbb{P}\left(\sup _{a \subset \mathcal{B}}\left|\sum_{ij} a_{ij}\left(A_{ij}-\theta_{ij}\right)\right|\geq t\right) \leq \mathbb{P}\left(2 \max _{b \in \mathcal{B}^{\prime}}\left|\sum_{ij} b_{ij}\left(A_{ij}-\theta_{ij}\right)\right|\geq t\right).
    $$
    By the union bound we then get
    \begin{align*}
    \mathbb{P}\left( \max_{b \in \mathcal{B}^{\prime}}\left|\sum_{ij} b_{ij}\left(A_{ij}-\theta_{ij}\right)\right|\geq \frac{t}{2}\right) &\leq \sum_{b\in\mathcal{B}'}\mathbb{P}\left(\left|\sum_{ij} b_{ij}\left(A_{ij}-\theta_{ij}\right)\right|\geq \frac{t}{2}\right)\\
    &\stackrel{(*)}{\leq} \sum_{b\in\mathcal{B}'}\operatorname{exp}\left(\frac{-\frac{t^2}{2}}{\|b\|_2^2}\right)\\
    &\stackrel{(**)}{\leq} \sum_{b\in\mathcal{B}'}\operatorname{exp}\left(-\frac{t^2}{2}\right)\\
    &= \left|\mathcal{B}'\right|\operatorname{exp}\left(-\frac{t^2}{2}\right)\\
    &\leq \mathcal{N}(1 / 2, \mathcal{B},\|\cdot\|) \exp \left(-C t^{2}\right),
    \end{align*}
    where (*) follows from Hoeffding's inequality \citep{vershynin2010introduction}[Prop 5.16] and (**) follows from $\mathcal{B} \subset\left\{a \in \mathbb{R}^{s}: \sum_{i} a_{i}^{2} \leq 1\right\}$. Thus, the proof is complete.
    \end{proof}

    \item \textit{Cardinality of the operator $w$: }As seen in the previous lemma, for oracle inequalities, we need a bound on the cardinality of the operator $w\in\mathcal{W}_{n,s}$. We are now going to prove that $\left|\mathcal{W}_{n,s}\right|\leq \operatorname{max}(k,s)^{n}$.
    
    \begin{proof}
        First note that, as the mapping $z$ and $u$ are surjective mappings, the cardinalities of their respective operator set are bounded, in order by $k^{n-(k-1)}$ and $s^{k^2-(s-1)}/k!$. Using these as well as assuming that $k^2-s<n$, we have
        Using the fact that $k\leq s \leq k^2$, the above can be bounded by 
        \begin{align*}
            \left|\mathcal{W}_{n,s}\right|&\leq \operatorname{exp}\left((n-k+1)\operatorname{log}(k)+(k^2-2+1)\operatorname{log}(s)- \operatorname{log}(k!)\right)\\
            &\stackrel{(*)}{\leq} \operatorname{exp}\left((n-k+1+k^2-s-k)\operatorname{log}(\operatorname{max}(k,s))\right)\\
            &\stackrel{(**)}{\leq} \operatorname{exp}\left(n\operatorname{log}(\operatorname{max}(k,s))\right),
        \end{align*}
        where (*) follows by Stirling's approximation and (**) follows from the assumption $n>\operatorname{max}(0,k^2-s)$.
    \end{proof}
\end{itemize}

\subsection{Oracle inequalities}
Following a similar argument as in \citet{gao2015rate}, we shall obtain oracle inequalities to further derive the rate of convergence of our estimator. We denote the true value on each shape by $\left\{Q_{c}^{*}\right\} \in[0,1]^{s}$ and the oracle assignment by $w^{*} \in \mathcal{W}_{n, s}$ such that $\theta_{ij}=Q_{w^{*}(i,j)}^{*}$ for any $(i,j)\in [n]^2$. To facilitate the proof, we introduce the following notation. For the estimated $\hat{w}$, define $\left\{\hat{Q}_{c}\right\} \in[0,1]^{s}$ by $\tilde{Q}_{c}=\bar{\theta}_{c}(\hat{w}) $, and also define $\tilde{\theta}_{ij}=\tilde{Q}_{\hat{w}(i,j)}$ for any $(i,j)\in [n]^2$. Recall for any optimiser of the objective function, $(\hat{Q}, \hat{w}) \in \underset{Q \in \mathbb{R}^{s}, w \in \mathcal{W}_{n, s}}{\operatorname{argmin}} L(Q, w),$. By the definition of this estimator, we have $L(\hat{Q}, \hat{w}) \leq L\left(Q^{*}, w^{*}\right)$ which, as $\|\hat{\theta}-A\|^{2}=\|\hat{\theta}-\theta\|^{2}-2\langle\hat{\theta}-\theta, A-\theta\rangle+\left\|\theta-A\right\|^{2}$, can be rewritten as 
\begin{equation}
\label{4.1paper}
\|\hat{\theta}-A\|^{2} \leq\|\theta-A\|^{2}\quad \iff \quad  \|\hat{\theta}-\theta\|^{2} \leq 2\langle\hat{\theta}-\theta, A-\theta\rangle.
\end{equation}
As in \citet{gao2015rate}, the right-hand side of (\ref{4.1paper}) can be bounded as
\begin{align*}
\langle\hat{\theta}-\theta, A-\theta\rangle\leq &\|\hat{\theta}-\tilde{\theta}\|\left|\left\langle\frac{\hat{\theta}-\tilde{\theta}}{\|\hat{\theta}-\tilde{\theta}\|}, A-\theta\right\rangle\right|+\left(\|\tilde{\theta}-\hat{\theta}\|+\|\hat{\theta}-\theta\|\right)\left|\left\langle\frac{\tilde{\theta}-\theta}{\|\tilde{\theta}-\theta\|}, A-\theta\right\rangle\right|.
\end{align*}
Thus, we need to bound the following three terms
\begin{equation*}
\|\hat{\theta}-\tilde{\theta}\|, \quad\left|\left\langle\frac{\hat{\theta}-\tilde{\theta}}{\|\hat{\theta}-\tilde{\theta}\|}, A-\theta\right\rangle\right|, \quad\left|\left\langle\frac{\tilde{\theta}-\theta}{\|\tilde{\theta}-\theta\|}, A-\theta\right\rangle\right|,
\end{equation*}
to obtain our error bound and determine the rate of convergence of such a method. To do so, we need the following lemmas that are adapted from \citet{gao2015rate} to the stochastic shape setting.
\begin{Lemma}
\label{Lemma4.3}
For any constant $C^{\prime}>0$, there exists a constant $C>0$ only depending on $C^{\prime}$, such that
$$
\left|\left\langle\frac{\hat{\theta}-\tilde{\theta}}{\|\hat{\theta}-\tilde{\theta}\|}, A-\theta\right\rangle\right| \leq C \sqrt{s+n \operatorname{log}(\operatorname{max}(k,s))},
$$
with probability at least $1-\exp \left(-C^{\prime} n \operatorname{log}(\operatorname{max}(k,s))\right)$.
\end{Lemma}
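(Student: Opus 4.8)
The plan is to follow the structure of the analogous lemma in \citet{gao2015rate}, adapting it to the shape setting via the packing/covering bound of Lemma \ref{lemmaA.1}. The key observation is that the vector $\hat\theta-\tilde\theta$ lives in a low-dimensional set: both $\hat\theta$ and $\tilde\theta$ are constant on the shapes defined by $\hat w$, so their difference is a vector in $\mathbb{R}^{n\times n}$ that is block-constant with respect to \emph{some} shape assignment $w\in\mathcal{W}_{n,s}$. First I would define, for each fixed $w\in\mathcal{W}_{n,s}$, the linear subspace $V_w$ of matrices that are constant on each of the $s$ shape-classes of $w$ (respecting symmetry and vanishing diagonal), and set
$$\mathcal{B}_w=\left\{a\in V_w:\ \textstyle\sum_{ij}a_{ij}^2\le 1\right\},\qquad \mathcal{B}=\bigcup_{w\in\mathcal{W}_{n,s}}\mathcal{B}_w.$$
Since $\dim V_w\le s$, each $\mathcal{B}_w$ is a unit ball in a space of dimension at most $s$, so its $1/2$-covering number is at most $5^s$ by the standard volumetric bound. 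Taking a union over $w$, and using the cardinality bound $|\mathcal{W}_{n,s}|\le \max(k,s)^n$ proved in the auxiliary section, gives
$$\mathcal{N}\!\left(1/2,\mathcal{B},\|\cdot\|\right)\le 5^s\,\max(k,s)^n=\exp\!\left(s\log 5+n\log\max(k,s)\right).$$

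Next I would check that $\mathcal{B}$ satisfies the normalization-closure hypothesis \eqref{NormCondLemma} of Lemma \ref{lemmaA.1}: if $a,b\in\mathcal{B}_w$ for the \emph{same} $w$, then $(a-b)/\|a-b\|\in V_w$ and has unit norm, hence lies in $\mathcal{B}_w\subset\mathcal{B}$. This is the reason we index by a single $w$ at a time rather than working with a fixed global subspace. The mild subtlety is that $\hat\theta-\tilde\theta$ is block-constant with respect to $\hat w$, which is data-dependent; but since $\mathcal{B}$ is the union over \emph{all} $w\in\mathcal{W}_{n,s}$, the normalized vector $(\hat\theta-\tilde\theta)/\|\hat\theta-\tilde\theta\|$ certainly belongs to $\mathcal{B}$, so the supremum in Lemma \ref{lemmaA.1} dominates the quantity we want to bound.

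Applying Lemma \ref{lemmaA.1} with this $\mathcal{B}$ and with $t=C\sqrt{s+n\log\max(k,s)}$ gives
$$\mathbb{P}\!\left(\left|\left\langle\tfrac{\hat\theta-\tilde\theta}{\|\hat\theta-\tilde\theta\|},A-\theta\right\rangle\right|>t\right)\le \exp\!\left(s\log 5+n\log\max(k,s)-C_0 t^2\right),$$
and choosing the absolute constant $C$ large enough (depending on $C'$ and on $C_0$ from Lemma \ref{lemmaA.1}) makes the exponent at most $-C' n\log\max(k,s)$, using that $t^2=C^2\bigl(s+n\log\max(k,s)\bigr)$ absorbs both the $s\log 5$ and the $n\log\max(k,s)$ terms with room to spare. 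This yields the claimed bound with the stated probability. I expect the main obstacle to be purely bookkeeping rather than conceptual: one must be careful that the symmetry constraint and the zero-diagonal constraint do not change the dimension count beyond $s$ (they only decrease it), and that the union-bound exponent $s\log 5+n\log\max(k,s)$ is genuinely dominated by $C_0t^2$ for the stated choice of $t$ — which holds because $s\le n^2$ is not directly needed here, only that $s\le s+n\log\max(k,s)$ trivially. Everything else is a direct transcription of the Gao--Lu--Zhou argument with $k^2$ replaced by $s$ in the model count.
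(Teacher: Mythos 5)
Your proposal follows the same route as the paper: use that $\hat\theta-\tilde\theta$ is shape-constant with respect to $\hat w$, exploit the $s$-dimensional structure via a $1/2$-covering bound, pay a factor $|\mathcal{W}_{n,s}|\le \max(k,s)^n$ for the unknown assignment, and choose $t^2\propto s+n\log\max(k,s)$. However, one step is not justified as written: you apply Lemma \ref{lemmaA.1} to the union $\mathcal{B}=\bigcup_{w\in\mathcal{W}_{n,s}}\mathcal{B}_w$, but the hypothesis \eqref{NormCondLemma} must hold for \emph{all} pairs $a,b\in\mathcal{B}$, and your verification only covers pairs lying in the same $\mathcal{B}_w$. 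For $a\in\mathcal{B}_{w_1}$ and $b\in\mathcal{B}_{w_2}$ with $w_1\neq w_2$, the difference $a-b$ is constant only on the common refinement of the two shape partitions, which generically has more than $s$ classes, so $(a-b)/\|a-b\|$ need not belong to any $\mathcal{B}_w$ and hence not to $\mathcal{B}$; the closure condition fails for the union, and Lemma \ref{lemmaA.1} cannot be invoked with that $\mathcal{B}$.

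The repair is simply to reverse the order of operations, which is exactly what the paper does: first bound the quantity of interest by $\max_{w\in\mathcal{W}_{n,s}}\sup_{c\in\mathcal{B}_w}\bigl|\sum_{ij}c_{ij}(A_{ij}-\theta_{ij})\bigr|$, then apply Lemma \ref{lemmaA.1} separately to each fixed $\mathcal{B}_w$ (which does satisfy \eqref{NormCondLemma}, being a ball in a linear subspace of dimension at most $s$), and finally take a union bound over $w$, using $|\mathcal{W}_{n,s}|\le\exp(n\log\max(k,s))$ and $\mathcal{N}(1/2,\mathcal{B}_w,\|\cdot\|)\le\exp(Cs)$. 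This produces precisely the exponent $Cs+n\log\max(k,s)-C_1t^2$ you wrote down, so your choice of $t$ and the final probability bound go through unchanged; the flaw is in the order of the covering argument and the union bound, not in the estimates themselves.
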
 
\begin{proof}
For each $w \in \mathcal{W}_{n, s}$, define the set $\mathcal{B}_{w}$ by $\mathcal{B}_{w}=\left\{\left\{c_{ij}\right\}: c_{ij}=Q_{c}\right.$ if $(i,j) \in w^{-1}(c)$ for some $Q_{c}$, and $\left.\sum_{ij} c_{ij}^{2} \leq 1\right\}$.
In other words, $\mathcal{B}_{w}$ collects those union of piecewise constant matrices determined by $w$. Thus, we have the bound
$$
\left|\sum_{ij} \frac{\tilde{\theta}_{ij}-\hat{\theta}_{ij}}{\sqrt{\sum_{ij}\left(\tilde{\theta}_{ij}-\hat{\theta}_{ij}\right)^{2}}}\left(A_{ij}-\theta_{ij}\right)\right| \leq \max _{w \in \mathcal{W}_{n, s}} \sup _{c \in \mathcal{B}_{w}}\left|\sum_{ij} c_{ij}\left(A_{ij}-\theta_{ij}\right)\right| .
$$
Note that for each $w \in \mathcal{W}_{n, s}, \mathcal{B}_{w}$ satisfies the condition (\ref{NormCondLemma}). Thus, we have
$$
\begin{aligned}
& \mathbb{P}\left(\max _{w \in \mathcal{W}_{n, s}} \sup _{c \in \mathcal{B}_{w}}\left|\sum_{ij} c_{ij}\left(A_{ij}-\theta_{ij}\right)\right|>t\right) \\
\leq & \sum_{w \in \mathcal{W}_{n, s}} \mathbb{P}\left(\sup _{c \in \mathcal{B}_{w}}\left|\sum_{ij} c_{ij}\left(A_{ij}-\theta_{ij}\right)\right|>t\right) \\
\leq & \sum_{w \in \mathcal{W}_{n, s}} \mathcal{N}\left(1 / 2, \mathcal{B}_{w},\|\cdot\|\right) \exp \left(-C_{1} t^{2}\right),
\end{aligned}
$$
for some universal $C_{1}>0$, where the last inequality is due to lemma \ref{lemmaA.1}.

\begin{Remark}
From \citet{pollard1990empirical}, Lemma 4.1, we have the following result on covering and packing numbers
\begin{equation}
    \label{boundCovering}
    \mathcal{N}\left(\epsilon, \mathcal{B},\|\cdot\|\right) \leq \mathcal{D}\left(\epsilon, \mathcal{B},\|\cdot\|\right) \leq \left(\frac{3 R}{\epsilon}\right)^{V},
\end{equation}
where $\mathcal{B}$ is a subset of a $V$-dimensional affine subspace of $\mathbb{R}^n$ of diameter $R$. 
\end{Remark}

Following the previous remark, since $\mathcal{B}_{w}$ has a degree of freedom $s$, we have $\mathcal{N}\left(1 / 2, \mathcal{B}_{w},\|\cdot\|\right) \leq \left(\frac{3}{2}\right)^{s} \leq \exp \left(C_2 s\right)$ for all $w \in \mathcal{W}_{n, s}$ where $C_2 = \operatorname{log}\left(\frac{3}{2}\right)$, by \eqref{boundCovering}. Finally, by $\left|\mathcal{W}_{n, s}\right| \leq \exp (n \operatorname{log}(\operatorname{max}(k,s)))$, we have
$$
\mathbb{P}\left(\max _{w\in \mathcal{W}_{n, s}} \sup _{c \in \mathcal{B}_{w}} \sum_{i \neq j} c_{ij}\left(A_{ij}-\theta_{ij}\right)>t\right) \leq \exp \left(-C_{1} t^{2}+C_{2} s+n \operatorname{log}(\operatorname{max}(k,s))\right) .
$$
Choosing $t^{2} \propto s+n \operatorname{log}(\operatorname{max}(k,s))$, the proof is complete.
\end{proof}

\begin{Lemma}
\label{Lemma4.2}
For any constant $C^{\prime}>0$, there exists a constant $C>0$ only depending on $C^{\prime}$, such that
$$
\left|\left\langle\frac{\tilde{\theta}-\theta}{\|\tilde{\theta}-\theta\|}, A-\theta\right\rangle\right| \leq C \sqrt{n\operatorname{log}(\operatorname{max}(k,s))},$$
with probability at least $1-\exp \left(-C^{\prime} n \operatorname{log}(\operatorname{max}(k,s))\right)$.
\end{Lemma}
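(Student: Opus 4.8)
The plan is to exploit a structural feature that distinguishes this quantity from the one controlled in Lemma~\ref{Lemma4.3}: the matrix $\tilde{\theta}-\theta$ depends on the data only through the discrete shape assignment $\hat{w}\in\mathcal{W}_{n,s}$, since $\theta$ is a fixed block-constant matrix and, by construction, $\tilde{\theta}_{ij}=\bar{\bar{\theta}}_{\hat{w}(i,j)}(\hat{w})$ is the shape-average of the \emph{true} $\theta$ (not of $A$). Hence, unlike in Lemma~\ref{Lemma4.3}, where $\hat{\theta}-\tilde{\theta}$ ranges over a genuinely continuous set and forces the covering/packing argument of Lemma~\ref{lemmaA.1}, here a plain union bound over the finitely many candidate directions indexed by $\mathcal{W}_{n,s}$ will suffice.

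Concretely, I would first fix, for each $w\in\mathcal{W}_{n,s}$, the deterministic matrix $\theta^{(w)}:=\{\bar{\bar{\theta}}_{w(i,j)}(w)\}_{ij}$, so that $\tilde{\theta}=\theta^{(\hat{w})}$, and set $v^{(w)}:=(\theta^{(w)}-\theta)/\|\theta^{(w)}-\theta\|$ whenever $\theta^{(w)}\neq\theta$ (the degenerate case $\theta^{(w)}=\theta$ is harmless: in the decomposition of $\langle\hat{\theta}-\theta,A-\theta\rangle$ following \eqref{4.1paper} this inner product is multiplied by $\|\tilde{\theta}-\theta\|=0$). Each $v^{(w)}$ is a \emph{fixed} unit vector in $\mathbb{R}^{n\times n}$, and the entries $A_{ij}-\theta_{ij}$ are independent, mean-zero and bounded in $[-1,1]$, so Hoeffding's inequality (as used in the proof of Lemma~\ref{lemmaA.1}; cf. Proposition~\ref{bernstein}) gives $\mathbb{P}(|\langle v^{(w)},A-\theta\rangle|>t)\leq 2\exp(-ct^2)$ for an absolute constant $c>0$. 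Next I would union-bound over $w\in\mathcal{W}_{n,s}$ and invoke the cardinality estimate $|\mathcal{W}_{n,s}|\leq\max(k,s)^n$ established above (valid under the standing assumption $n>\max(0,k^2-s)$), obtaining $\mathbb{P}(\max_{w}|\langle v^{(w)},A-\theta\rangle|>t)\leq 2\exp(n\log\max(k,s)-ct^2)$. Taking $t=C\sqrt{n\log\max(k,s)}$ with $C$ large enough in terms of $C'$ makes the right-hand side at most $\exp(-C'n\log\max(k,s))$, and since $\hat{w}\in\mathcal{W}_{n,s}$ the claimed bound on $\langle(\tilde{\theta}-\theta)/\|\tilde{\theta}-\theta\|,A-\theta\rangle$ follows at once.

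The step I expect to require the most care is the very first one — verifying that $\tilde{\theta}$ really is a function of $\hat{w}$ alone, i.e. that it is the shape-average of the true $\theta$ rather than of $A$ — because it is precisely this that justifies replacing the packing argument of Lemma~\ref{Lemma4.3} by a bare union bound, and thereby removes the $\sqrt{s}$ term, leaving only $\sqrt{n\log\max(k,s)}$. The remaining ingredients — the sub-Gaussian tail for $\langle v,A-\theta\rangle$ with $\|v\|_2=1$, and the bound $|\mathcal{W}_{n,s}|\leq\max(k,s)^n$ — are routine, the latter having already been derived above from the surjectivity of $z$ and $u$ together with $n>k^2-s$.
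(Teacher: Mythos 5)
Your proposal is correct and follows essentially the same route as the paper: the paper's proof likewise observes that $\tilde{\theta}-\theta$ is a deterministic function of the partition $\hat{w}$, writes the normalized direction as a fixed unit vector $\gamma(w)$ for each $w\in\mathcal{W}_{n,s}$, applies Hoeffding's inequality to each, and finishes with a union bound over $|\mathcal{W}_{n,s}|\leq\exp\left(n\log\max(k,s)\right)$ and the choice $t^2\propto n\log\max(k,s)$. Your explicit handling of the degenerate case $\tilde{\theta}=\theta$ is a minor point the paper leaves implicit, but otherwise the arguments coincide.
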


\begin{proof}
Note that
$$
\tilde{\theta}_{ij}-\theta_{ij}=\sum_{c \in[s]} \bar{\theta}_{c}(\hat{w}) \mathbb{I}\left\{(i,j) \in \hat{w}^{-1}(c)\right\}-\theta_{ij}
$$
is a function of the partition $\hat{w}$, then we have
$$
\left|\sum_{ij} \frac{\tilde{\theta}_{ij}-\theta_{ij}}{\sqrt{\sum_{ij}\left(\tilde{\theta}_{ij}-\theta_{ij}\right)^{2}}}\left(A_{ij}-\theta_{ij}\right)\right| \leq \max _{w \in \mathcal{W}_{n, s}}\left|\sum_{ij} \gamma_{ij}(w)\left(A_{ij}-\theta_{ij}\right)\right|,
$$
where
$$
\gamma_{ij}(z) \propto \sum_{c \in[s]} \bar{\theta}_{c}(w) \mathbb{I}\left\{(i,j) \in w^{-1}(c)\right\}-\theta_{ij}
$$
satisfies $\sum_{ij} \gamma_{ij}(w)^{2}=1$. By Hoeffding's inequality \citep{vershynin2010introduction}[Prop 5.10] and union bound (as in the proof of the Lemma (\ref{lemmaA.1})), we have
$$
\begin{aligned}
& \mathbb{P}\left(\max _{w \in \mathcal{W}_{n, s}}\left|\sum_{ij} \gamma_{ij}(w)\left(A_{ij}-\theta_{ij}\right)\right|>t\right) \\
\leq & \sum_{w \in \mathcal{W}_{n, s}} \mathbb{P}\left(\left|\sum_{ij} \gamma_{ij}(w)\left(A_{ij}-\theta_{ij}\right)\right|>t\right) \\
\leq &\left|\mathcal{W}_{n, s}\right| \exp \left(-C_{1} t^{2}\right) \\
\leq & \exp \left(-C_{1} t^{2}+n \operatorname{log}(\operatorname{max}(k,s))\right),
\end{aligned}
$$
for some universal constant $C_{1}>1$. Choosing $t^2 \propto n \operatorname{log}(\operatorname{max}(k,s))$, we have
$$
\begin{aligned}
1-\mathbb{P}\left(\max _{w \in \mathcal{W}_{n, s}}\left|\sum_{ij} \gamma_{ij}(w)\left(A_{ij}-\theta_{ij}\right)\right|\leq \sqrt{n \operatorname{log}(\operatorname{max}(k,s))}\right) &\leq \exp \left(-(C_{1}-1) n\operatorname{log}(\operatorname{max}(k,s))\right)\\
\mathbb{P}\left(\max _{w \in \mathcal{W}_{n, s}}\left|\sum_{ij} \gamma_{ij}(w)\left(A_{ij}-\theta_{ij}\right)\right|\leq \sqrt{n \operatorname{log}(\operatorname{max}(k,s))}\right) &\geq 1-\exp \left(-(C_{1}-1) n\operatorname{log}(\operatorname{max}(k,s))\right).
\end{aligned}
$$
Thus, the proof is complete.
\end{proof}

\begin{Lemma}
\label{Lemma4.1}
For any constant $C^{\prime}>0$, there exists a constant $C>0$ only depending on $C^{\prime}$, such that
$$
\|\hat{\theta}-\tilde{\theta}\| \leq C \sqrt{s+n \operatorname{log}(\operatorname{max}(k,s))},
$$
with probability at least $1-\exp \left(-C^{\prime} n \operatorname{log}(\operatorname{max}(k,s))\right)$.
\end{Lemma}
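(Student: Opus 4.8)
The plan is to reduce this bound directly to Lemma~\ref{Lemma4.3} by means of an orthogonal‑projection (Pythagorean) identity. Fix first a \emph{deterministic} partition $w\in\mathcal{W}_{n,s}$ and let $V_w\subset\mathbb{R}^{n\times n}$ denote the linear subspace of matrices that are constant on each shape of $w$ (respecting the symmetry and zero‑diagonal constraints). By Proposition~\ref{minsol}, conditionally on $\hat w$ the estimator $\hat\theta$ has entries $\hat Q_c=\bar{\bar A}_c(\hat w)$, so $\hat\theta$ is exactly the Euclidean projection $\Pi_{V_{\hat w}}(A)$ of the data matrix onto $V_{\hat w}$; by construction $\tilde\theta$ has entries $\tilde Q_c=\bar\theta_c(\hat w)$, so $\tilde\theta=\Pi_{V_{\hat w}}(\theta)$. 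Since this identification holds for every realization of $\hat w$, it holds for the random $\hat w$ as well.

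Next I would use linearity of the projection: $\hat\theta-\tilde\theta=\Pi_{V_{\hat w}}(A-\theta)\in V_{\hat w}$, so the defining orthogonality of $\Pi_{V_{\hat w}}$ gives $\langle\hat\theta-\tilde\theta,\,(A-\theta)-(\hat\theta-\tilde\theta)\rangle=0$, i.e.
$$\|\hat\theta-\tilde\theta\|^2=\langle\hat\theta-\tilde\theta,\,A-\theta\rangle.$$
If $\|\hat\theta-\tilde\theta\|=0$ the asserted bound is trivial. Otherwise, dividing both sides by $\|\hat\theta-\tilde\theta\|$ yields
$$\|\hat\theta-\tilde\theta\|=\left\langle\frac{\hat\theta-\tilde\theta}{\|\hat\theta-\tilde\theta\|},\,A-\theta\right\rangle\le\left|\left\langle\frac{\hat\theta-\tilde\theta}{\|\hat\theta-\tilde\theta\|},\,A-\theta\right\rangle\right|.$$
Lemma~\ref{Lemma4.3} bounds precisely this last quantity by $C\sqrt{s+n\log(\max(k,s))}$ on an event of probability at least $1-\exp(-C'n\log(\max(k,s)))$; as both statements live on the same event, the conclusion follows with the same constants.

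The only genuinely delicate point — the step I would write out most carefully — is the identification of $\hat\theta$ and $\tilde\theta$ as orthogonal projections onto the \emph{same} subspace $V_{\hat w}$: one must note that Proposition~\ref{minsol} concerns the inner minimization over $Q$ for an arbitrary fixed partition (so it legitimately applies with $\hat w$ frozen), and that the averaging operators in \eqref{block_average}--\eqref{est_hist_shape} are exactly the Frobenius projections once the diagonal and symmetry constraints are taken into account. Because all node blocks share the common width $h$, each shape $S_c$ contains exactly $|S_c|h^2$ off‑diagonal ordered pairs and the minimizer of $\sum_{(i,j):\,w(i,j)\in S_c}(A_{ij}-Q_c)^2$ over those pairs is the unweighted shape average, so no reweighting of the inner product is required. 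Beyond that, the proof is just Cauchy--Schwarz and an invocation of Lemma~\ref{Lemma4.3}.
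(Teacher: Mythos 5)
Your proof is correct, but it takes a genuinely different route from the paper. The paper bounds $\|\hat{\theta}-\tilde{\theta}\|^2$ directly: it writes $\sum_{ij}(\hat{\theta}_{ij}-\tilde{\theta}_{ij})^2 \leq \max_{w\in\mathcal{W}_{n,s}}\sum_{c\in[s]}|w^{-1}(c)|\bigl(\bar{\bar{A}}_{c}(w)-\bar{\theta}_{c}(w)\bigr)^2$, splits each summand $V_c(w)$ into its mean (bounded by a constant times $s$) plus a centered part, shows via Hoeffding that $V_c(w)$ is sub-exponential, applies Bernstein's inequality, and finishes with a union bound over $\mathcal{W}_{n,s}$ using $\log|\mathcal{W}_{n,s}|\leq n\log(\max(k,s))$. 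You instead observe that, pathwise, $\hat{\theta}=\Pi_{V_{\hat w}}(A)$ and $\tilde{\theta}=\Pi_{V_{\hat w}}(\theta)$ for the same shape-constant subspace, so $\|\hat{\theta}-\tilde{\theta}\|^2=\langle\hat{\theta}-\tilde{\theta},A-\theta\rangle$ and hence $\|\hat{\theta}-\tilde{\theta}\|\leq\bigl|\bigl\langle\tfrac{\hat{\theta}-\tilde{\theta}}{\|\hat{\theta}-\tilde{\theta}\|},A-\theta\bigr\rangle\bigr|$, reducing the lemma to Lemma \ref{Lemma4.3}; since $\tfrac{\hat{\theta}-\tilde{\theta}}{\|\hat{\theta}-\tilde{\theta}\|}\in\mathcal{B}_{\hat w}$, that lemma indeed covers exactly this quantity with the stated probability, so your constants and event match. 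Your argument is shorter and avoids the sub-exponential/Bernstein machinery entirely, and the deterministic inequality it yields would even streamline the assembly of bounds in the proof of Theorem \ref{theorem2.1}; the price is that the bound on $\|\hat{\theta}-\tilde{\theta}\|$ becomes logically dependent on the empirical-process bound of Lemma \ref{Lemma4.3}, whereas the paper's version is self-contained (a distinction that costs nothing here, since the theorem intersects all three events anyway). One small caveat you rightly flag but resolve slightly inaccurately: shapes containing diagonal blocks have $|S_c|h^2$ reduced by $h$ per diagonal block (the within-block average in \eqref{block_average_sym} runs over $h(h-1)$ pairs), so the normalization in \eqref{est_hist_shape} is not literally the unweighted pair count; this is an inconsistency inherited from the paper's own definitions, and the projection identity survives as long as $\hat{Q}_c$ and $\tilde{Q}_c$ are the averages of $A$ and $\theta$ over the \emph{same} index set, which is what Proposition \ref{minsol} and the definition of $\tilde{\theta}$ provide.
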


\begin{proof}
By the definitions of $\hat{\theta}_{ij}$ and $\tilde{\theta}_{ij}$, we have
$$
\hat{\theta}_{ij}-\tilde{\theta}_{ij}=\hat{Q}_{\hat{w}(i,j)}-\tilde{Q}_{\hat{w}(i,j) }=\bar{\bar{A}}_{c}(\hat{w})-\bar{\theta}_{c}(\hat{w}),
$$
for any $(i,j) \in \hat{w}^{-1}(c)$. Then
\begin{align}
\sum_{ij}\left(\hat{\theta}_{ij}-\tilde{\theta}_{ij}\right)^{2} & \leq \sum_{c \in[s]}\left|\hat{w}^{-1}(c)\right|\left(\bar{\bar{A}}_{c}(\hat{w})-\bar{\theta}_{c}(\hat{w})\right)^{2} \nonumber\\
& \leq \max _{w \in \mathcal{W}_{n, s}} \sum_{c \in[s]}\left|w^{-1}(c)\right|\left(\bar{\bar{A}}_{c}(w)-\bar{\theta}_{c}(w)\right)^{2}.\label{A.8}
\end{align}
For any $c \in[s]$ and $w \in \mathcal{W}_{n, s}$, define $n_{c}=\left|w^{-1}(c)\right|$ and $V_{c}(w)=n_{c}\left(\bar{\bar{A}}_{c}(w)-\right.$ $\left.\bar{\theta}_{c}(w)\right)^{2}$. Then, (\ref{A.8}) is bounded by
\begin{equation}
\label{4.1BoundV}
\max _{w \in \mathcal{W}_{n, s}} \sum_{c\in[s]} \mathbb{E}\left[ V_{c}(w)\right]+\max _{w \in \mathcal{W}_{n, s}} \sum_{c\in[s]}\left(V_{c}(w)-\mathbb{E}\left[ V_{c}(w)\right]\right) .
\end{equation}
We are going to bound the two terms separately. For the first term, we have
$$
\begin{aligned}
\mathbb{E} \left[V_{c}(w)\right] &=n_{c} \mathbb{E}\left[\left(\frac{1}{n_{c} } \sum_{(i,j) \in w^{-1}(c)}\left(A_{ij}-\theta_{ij}\right)\right)^2\right] \\
&=\frac{1}{n_{c}} \sum_{(i,j) \in w^{-1}(c)} \operatorname{Var}\left(A_{ij}\right) \leq 1,
\end{aligned}
$$
where we have used the fact that $\mathbb{E}\left[ A_{ij}\right]=\theta_{ij}$ and $\operatorname{Var}\left(A_{ij}\right)=\theta_{ij}\left(1-\theta_{ij}\right) \leq 1$. Summing over $c \in[s]$, we get
\begin{equation}
\label{4.1BoundV2}
\max _{w \in \mathcal{W}_{n, s}} \sum_{c \in[s]} \mathbb{E}\left[ V_{c}(w)\right] \leq C_{1} s,
\end{equation}
for some universal constant $C_{1}>0$. By Hoeffding inequality \citep{hoeffding1994probability} and $0 \leq$ $A_{ij} \leq 1$, for any $t>0$ we have
$$
\mathbb{P}\left(V_{c}(w)>t\right)=\mathbb{P}\left(\left|\frac{1}{n_{c}} \sum_{(i,j) \in w^{-1}(c)}\left(A_{ij}-\theta_{ij}\right)\right|>\sqrt{\frac{t}{n_{c}}}\right) \leq 2 \exp (-2 t).
$$
Thus, $V_{c}(w)$ is a sub-exponential random variable. By Proposition \ref{bernstein}, i.e. Bernstein's inequality for sub-exponential variables [\citet{vershynin2010introduction}, Prop 5.16 \footnote{In their notation and definitions, we use the fact that the corresponding $a_i$'s here are 1 and that $K=\max _i\left\|X_i\right\|_{\psi_1}=1$ as $\operatorname{max}V_{c}(w)-\mathbb{E}\left[ V_{c}(w)\right]=1$.}], we have
$$
\mathbb{P}\left(\sum_{c \in[s]}\left(V_{c}(w)-\mathbb{E}\left[ V_{c}(w)\right]\right)>t\right) \leq \exp \left(-C_{2} \min \left\{\frac{t^{2}}{s}, t\right\}\right),
$$
for some universal constant $C_{2}>0$. Applying union bound and using the fact that $\log \left|\mathcal{W}_{n, s}\right| \leq n\operatorname{log}(\operatorname{max}(k,s))$, we have
$$
\mathbb{P}\left(\max _{w \in \mathcal{W}_{n, s}} \sum_{c \in[s]}\left(V_{c}(w)-\mathbb{E}\left[ V_{c}(w)\right]\right)>t\right) \leq \exp \left(-C_{2} \min \left\{\frac{t^{2}}{s}, t\right\}+n \operatorname{log}(\operatorname{max}(k,s))\right).
$$
Thus, for any $C_{3}>0$, there exists $C_{4}>0$ only depending on $C_{2}$ and $C_{3}$, such that
\begin{align}
\label{4.1BoundV3}
\max _{w \in \mathcal{W}_{n, s}} \sum_{c \in[s]}\left(V_{c}(w)-\mathbb{E}\left[ V_{c}(w)\right]\right) \leq C_{3}\left(n \operatorname{log}(\operatorname{max}(k,s))+\sqrt{n s \operatorname{log}(\operatorname{max}(k,s))}\right),
\end{align}
with probability at least $1-\exp \left(-C_{4} n \operatorname{log}(\operatorname{max}(k,s))\right)$. Plugging the bounds (\ref{4.1BoundV2}) and (\ref{4.1BoundV3}) into (\ref{4.1BoundV}), we obtain
$$
\begin{aligned}
\sum_{ij}\left(\hat{\theta}_{ij}-\tilde{\theta}_{ij}\right)^{2} & \leq\left(C_{3}+C_{1}\right)\left(s+n \operatorname{log}(\operatorname{max}(k,s))+\sqrt{n s \operatorname{log}(\operatorname{max}(k,s))}\right) \\
& \leq 2\left(C_{3}+C_{1}\right)\left(s+n \operatorname{log}(\operatorname{max}(k,s))\right),
\end{aligned}
$$
with probability at least $1-\exp \left(-C_{4} n \operatorname{log}(\operatorname{max}(k,s))\right)$. The proof is complete.
\end{proof}

\section{Stochastic shape approximation}

\begin{proof}[Proof of Theorem 3.1]

Let us first give an outline of the proof of Theorem 3.1. In the definition of the class $\Theta_{s}$, we denote the true value on each shape by $\left\{Q_{c}^{*}\right\} \in[0,1]^{s}$ and the oracle assignment by $w^{*} \in \mathcal{W}_{n, s}$ such that $\theta_{ij}=Q_{w^{*}(i,j)}^{*}$ for any $(i,j)\in [n]^2$. To facilitate the proof, we introduce the following notation. For the estimated $\hat{w}$, define $\left\{\tilde{Q}_{c}\right\} \in[0,1]^{s}$ by $\tilde{Q}_{c}=\bar{\theta}_{c}(\hat{w})$, and also define $\tilde{\theta}_{ij}=\tilde{Q}_{\hat{w}(i,j)}$ for any $(i,j)\in [n]^2$.
Recall for any optimiser of the objective function,
$$
(\hat{Q}, \hat{w}) \in \underset{Q \in \mathbb{R}^{k}, w \in \mathcal{W}_{n, s}}{\operatorname{argmin}} L(Q, w).
$$
By the definition of this estimator, we have
\begin{equation}
L(\hat{Q}, \hat{w}) \leq L\left(Q^{*}, w^{*}\right),
\end{equation}
which can be rewritten as
\begin{equation}
\label{4.1}
\|\hat{\theta}-A\|^{2} \leq\|\theta-A\|^{2} .
\end{equation}
As in \citet{gao2015rate}, the left-hand side of  can be decomposed as
\begin{equation}
\label{4.2}
\|\hat{\theta}-\theta\|^{2}+2\langle\hat{\theta}-\theta, \theta-A\rangle+\|\theta-A\|^{2}.
\end{equation}
Combining (\ref{4.1}) and (\ref{4.2}), we have
\begin{equation}
\label{4.3}
\|\hat{\theta}-\theta\|^{2} \leq 2\langle\hat{\theta}-\theta, A-\theta\rangle .
\end{equation}
The right-hand side of (\ref{4.3}) can be bounded as
\begin{align}
\langle\hat{\theta}-\theta, A-\theta\rangle=&\langle\hat{\theta}-\tilde{\theta}, A-\theta\rangle+\langle\tilde{\theta}-\theta, A-\theta\rangle \nonumber \\
\leq &\|\hat{\theta}-\tilde{\theta}\|\left|\left\langle\frac{\hat{\theta}-\tilde{\theta}}{\|\hat{\theta}-\tilde{\theta}\|}, A-\theta\right\rangle\right| \label{4.4}\\
&+(\|\tilde{\theta}-\hat{\theta}\|+\|\hat{\theta}-\theta\|)\left|\left\langle\frac{\tilde{\theta}-\theta}{\|\tilde{\theta}-\theta\|}, A-\theta\right\rangle\right|\label{4.5}.
\end{align}
Using Lemmas B.3,B.2 and B.4 on the following three terms:
\begin{equation}
\label{4.6}
\|\hat{\theta}-\tilde{\theta}\|, \quad\left|\left\langle\frac{\hat{\theta}-\tilde{\theta}}{\|\hat{\theta}-\tilde{\theta}\|}, A-\theta\right\rangle\right|, \quad\left|\left\langle\frac{\tilde{\theta}-\theta}{\|\tilde{\theta}-\theta\|}, A-\theta\right\rangle \right|,    
\end{equation}
they can all be bounded by $C \sqrt{s+n \operatorname{log}(\operatorname{max}(k,s))}$ with probability at least
$$
1-3 \exp \left(-C^{\prime} n \operatorname{log}(\operatorname{max}(k,s))\right).
$$
Combining these bounds with (\ref{4.4}), (\ref{4.5}) and (\ref{4.3}) using bounds on (\ref{4.6}), we get
$$
\|\hat{\theta}-\theta\|^{2} \leq 2 C\|\hat{\theta}-\theta\| \sqrt{s+n \operatorname{log}(\operatorname{max}(k,s))}+4 C^{2}\left(s+n \operatorname{log}(\operatorname{max}(k,s))\right).
$$
Solving the above for $\|\hat{\theta}-\theta\|$ gives
$$ \|\hat{\theta}-\theta\|\leq C\sqrt{s+n \operatorname{log}(\operatorname{max}(k,s))} \left(1+\sqrt{5}\right),$$
which further gives
$$
\|\hat{\theta}-\theta\|^{2} \leq C_{1}\left(s+n \operatorname{log}(\operatorname{max}(k,s))\right),
$$
with probability at least $1-3 \exp \left(-C^{\prime} n \operatorname{log}(\operatorname{max}(k,s))\right)$, proving the high probability bound. To get the bound in expectation, we use the following inequality:
$$
\begin{aligned}
\mathbb{E}\left[n^{-2} \|\hat{\theta}-\theta\|^{2}\right]\leq & \mathbb{E}\left[n^{-2}\|\hat{\theta}-\theta\|^{2} \mathbb{I}\left\{n^{-2}\|\hat{\theta}-\theta\|^{2} \leq \varepsilon^{2}\right\}\right] \\
&+\mathbb{E}\left[n^{-2}\|\hat{\theta}-\theta\|^{2} \mathbb{I}\left\{n^{-2}\|\hat{\theta}-\theta\|^{2}>\varepsilon^{2}\right\}\right] \\
\leq & \varepsilon^{2}+\mathbb{P}\left(n^{-2}\|\hat{\theta}-\theta\|^{2}>\varepsilon^{2}\right) \leq \varepsilon^{2}+3 \exp \left(-C^{\prime} n \operatorname{log}(\operatorname{max}(k,s))\right),
\end{aligned}
$$
where $\varepsilon^{2}=C_{1}\left(\frac{s}{n^2}+\frac{\operatorname{log}(\operatorname{max}(k,s))}{n}\right)$. Since $\varepsilon^{2}$ is the dominating term, the proof is complete.
\end{proof}

\section{Graphon estimation through stochastic shape}

\begin{proof}[Proof of Lemma 3.1]
Define $w^{*}:[n]\times [n] \rightarrow[s]$ by
$$
\left(w^{*}\right)^{-1}(c)=\left\{(i,j) \in[n]\times[n]: (\xi_{i}, \xi_j) \in S_{c}\right\}.
$$
We use the notation $n_{c}^{*}=\left|\left(w^{*}\right)^{-1}(c)\right|$ for each $c \in[s]$ and $w_{c}^{*}=\left\{(u,v): w^{*}(u,v)=c\right\}$ for $c \in[s]$. By such construction of $w^{*}$, for $(i,j)$ such that $(\xi_{i}, \xi_j) \in S_c$ with $c \in [s]$, we have
$$
\begin{aligned}
&\left|f\left(\xi_{i},\xi_{j}\right)-\bar{\bar{\theta}}_{c}\left(w^{*}\right)\right| \\
=&\left|f\left(\xi_{i},\xi_{j}\right)-\frac{1}{n_{c}^{*}} \sum_{(u,v) \in w_{c}^{*}} f\left(\xi_{u},\xi_{v}\right)\right| \\
\leq & \frac{1}{n_{c}^{*}} \sum_{(u,v) \in w_{c}^{*}}\left|f\left(\xi_{i},\xi_{j}\right)-f\left(\xi_{u},\xi_{v}\right)\right|,\\
\leq & \frac{1}{n_{c}^{*}} \sum_{(u,v) \in w_{c}^{*}} M\left(\left|\xi_{i}-\xi_{u}\right|+\left|\xi_{j}-\xi_{v}\right|\right)^{\alpha \wedge 1}, \text{ by \eqref{holder_cond},} \\
\leq & C_{1} M D_{w^*}^{\alpha \wedge 1}, \text{ by \eqref{diamBeta},}\\
\leq & C_{2} M s^{-\frac{\beta}{2}(\alpha \wedge 1)}.
\end{aligned}
$$
The $\alpha\wedge 1$ arises because when $\alpha>1$, any function $f \in \mathcal{H}_{\alpha}(M)$ satisfies the Hölder condition for $\alpha=1$. Squaring the inequality and summing over $c \in[s]$ completes the proof.
\end{proof}

\begin{proof}[Proof of Theorem 3.3]
Taking the same argument as the proof of theorem 5.2 and the ideas from \citep{gao2015rate}, we obtain the following bound
$$
\left\|\hat{\theta}-\theta^*\right\|^2 \leq 2\left\langle\hat{\theta}-\theta^*, A-\theta^*\right\rangle,
$$
whose right-hand side can be bounded as
$$
\begin{aligned}
\left\langle\hat{\theta}-\theta^*, A-\theta^*\right\rangle &=  \langle\hat{\theta}-\tilde{\theta}, A-\theta\rangle+\left\langle\tilde{\theta}-\theta^*, A-\theta\right\rangle+\left\langle\hat{\theta}-\theta^*, \theta-\theta^*\right\rangle \\
&\leq  \|\hat{\theta}-\tilde{\theta}\|\left|\left\langle\frac{\hat{\theta}-\tilde{\theta}}{\|\hat{\theta}-\tilde{\theta}\|}, A-\theta\right\rangle\right|+\left(\|\tilde{\theta}-\hat{\theta}\|+\|\hat{\theta}-\theta^*\|\right)\left|\left\langle \frac{\tilde{\theta}-\theta^*}{\|\tilde{\theta}-\theta^*\|}, A-\theta\right\rangle \right| \\
& \quad+\|\hat{\theta}-\theta^*\|\left\|\theta-\theta^*\right\| .
\end{aligned}
$$
Now, using a similar method and notation as \citep{gao2015rate}, we set
    $$
\begin{gathered}
L=\left\|\hat{\theta}-\theta^*\right\|, \quad R=\|\tilde{\theta}-\hat{\theta}\|, \quad B=\left\|\theta-\theta^*\right\|, \\
E=\left|\left\langle\frac{\hat{\theta}-\tilde{\theta}}{\|\hat{\theta}-\tilde{\theta}\|}, A-\theta\right\rangle\right|, \quad F=\left|\left\langle\frac{\tilde{\theta}-\theta^*}{\left\|\tilde{\theta}-\theta^*\right\|}, A-\theta\right\rangle\right| .
\end{gathered}
$$
Then, by the derived inequalities, we have
$$
L^2 \leq 2 R E+2(L+R) F+2 L B .
$$
It can be rearranged as
$$
L^2 \leq 2(F+B) L+2(E+F) R .
$$
By solving this quadratic inequality of $L$, we can get
\begin{align}
\label{4.7}
L^2 \leq \max \left\{16(F+B)^2, 4 R(E+F)\right\} .
\end{align}
By Lemma B.2-B.4 and Lemma 3.1 in the main paper, for any constant $C^{\prime}>0$, there exist constants $C$ only depending on $C^{\prime}, M$, such that
$$
\begin{aligned}
B^2 & \leq C n^2\left(\frac{1}{s^{\beta}}\right)^{\alpha \wedge 1}, & F^2 & \leq C n \operatorname{log}(\operatorname{max}(k,s)), \\
R^2 & \leq C\left(s+n \operatorname{log}(\operatorname{max}(k,s))\right), & E^2 & \leq C\left(s+n \operatorname{log}(\operatorname{max}(k,s))\right),
\end{aligned}
$$
with probability at least $1-\exp \left(-C^{\prime} n\right)$. By \eqref{4.7}, we have
$$
L^2 \leq C_1\left(n^2\left(\frac{1}{s^\beta}\right)^{\alpha \wedge 1}+s+n \operatorname{log}(\operatorname{max}(k,s))\right),
$$
with probability at least $1-\exp \left(-C^{\prime} n\right)$ for some constant $C_1$. Hence, there is some constant $C_2$ such that
\begin{align}
\frac{1}{n^2} \sum_{i j}\left(\hat{\theta}_{i j}-\theta_{i j}\right)^2 & \leq \frac{2}{n^2}\left(L^2+B^2\right)\nonumber \\
& \leq C_2\left(\left(\frac{1}{s^\beta}\right)^{\alpha \wedge 1}+\frac{s}{n^2}+\frac{\operatorname{log}(\operatorname{max}(k,s))}{n}\right).\label{convGraphonDepS}
\end{align}
For a 2-dimensional function that is a graphon with $n^2$ observations, the classical non-parametric rate is $n^{-\frac{2\alpha}{\alpha+1}}$. Let $s(n)=n^\delta$, then $\delta = \frac{2\beta^{-1}}{\alpha+1}$.

\end{proof}

\bibliographystyle{apalike}
\bibliography{ref}

\end{document}